\newtheorem{theorem}{Theorem}[section]
\newtheorem{remark}{Remark}[section]
\newtheorem{lemma}{Lemma}[section]
\newtheorem{definition}{Definition}[section]
\newtheorem{Conjecture}{Conjecture}[section]
\newtheorem{corollary}{Corollary}[section]
\providecommand{\keywords}[1]
{
  \small	
  \textbf{\textit{Keywords---}} #1
}
\title{An Adaptive Grid Algorithm for Computing the Homology Group of Semialgebraic Set
}
\author{Jiadong Han $^{1}$
      \\
        \small $^{1}$Universit\'e Paris Sud \\
         \\
}
\date{} % Comment this line to show today's date
\begin{document}
\maketitle

\begin{abstract}
Looking for an eﬃcient algorithm for the computation of the homology groups of an algebraic set or even a semi-algebraic set is an important problem in the eﬀective real algebraic geometry. Recently, Peter B$\ddot{u}$rgisser, Felipe Cucker and Pierre Lairez wrote a paper \cite{1}, which made a step forward by giving an algorithm of weak exponential time. However, the algorithm has not yet became practical. In our thesis, I will introduce our work on an improvement of this algorithm using an adaptive grid algorithm on the unit sphere.
\end{abstract} \hspace{10pt}

%TC:ignore
\keywords{Complexity, Homology, Optimization, Grid Algorithm}

\setcounter{section}{0}
\begin{section}{Introduction}
At first, let us introduce some basic deﬁnitions in this thesis.

A basic semialgebraic set is a subset of a Euclidean space $R^n$ given by a system of equalities and inequalities of the form
$$f1(x) = ... = fq(x) = 0~~ and ~~g_1(x) \succ 0,...,g_s(x) \succ 0 (\ast)$$
where $F = (f_1,...,f_q)$ and $G = (g_1,...,g_s)$ are tuples of polynomials with real coeﬃcients and the expression $g(x) \succ 0$ stands for either $g(x) \geq 0$ or $g(x) > 0$ (we use the notation to emphasize the fact, which become clear in \cite{1}\textbf{4.1.4}  where the author’s main results do not depend on whether the inequalities in $(\ast)$ are strict.) Let $W(F,G)$ denote the solution set of the semialgebraic system $(\ast)$. And let $S(F,G)$, called the spherical semi algebraic set, denote the solution set of the semialgebraic system $(\ast)$ in the unit sphere. 

For a vector $\mathbf{d} = (d_1,...,d_{q+s})$ of $q + s$ positive integers, we denote by $\mathcal{P}_d$ (or $\mathcal{P}_d[q + s]$ to emphasize the number of components) the linear space of the $(q + s)$-tuples of real polynomials in $n$ variables of degree $d_1,...,d_{q+s}$, respectively. Similarly, for a vector
$\mathbf{d} = (d_1,...,d_{q+s})$ of $q +s$ positive integers, we denote by
$\mathcal{H}_d$ (or $\mathcal{H}_d[q +s]$ to emphasize the number of components)
the linear space of the $(q + s)$-tuples of real homogeneous polynomials in $n$
variables of degree $d_1,...,d_{q+s}$, respectively.

Let $D$ denote the maximum of the $d_i$. We will assume that $D \geq 2$ because a set deﬁned by degree $1$ polynomials is convex and its homology is trivial. Let $N$ denote the dimension of $\mathcal{P}_d$, that is, $N = \sum^{q+s}_{i=1}\binom{n+d_i}{n}$.

This is the size of the semialgebraic system $(\ast)$, as it is the number of real coeﬃcients necessary to determine it.

We will endow $\mathcal{P}_d$ with the Weyl inner product and induced norm in $\S3.1$. The Weyl inner product is a dot product with respect to a specially weighted monominal basis. In particular, it is invariant under orthogonal transformations of the homogeneous variables $(X_0,...,X_n)$. That is, for any orthogonal transformation $u : R^n → R^n$ and any $f \in \mathcal{H}_d[q]$, we have $||F|| = ||F \circ u||$. In all of what follows, all occurences of normes in spaces $\mathcal{H}_d[q]$ refer to the norm induced by the Weyl inner product. For a point $x \in R^{n+1}$ and a system $F \in \mathcal{H}_d[q]$, let $DF(x)$ denote the derivate of $F$ at $x$, which is a linear map $R^{n+1} \rightarrow R^q$. We also deﬁne the diagonal normalization matrix
$$\left\{
  \begin{matrix}
   \sqrt{d_1} &  &  \\
   \empty     & ... & \empty \\
     &   & \sqrt{d_q}
  \end{matrix} \right\} $$
  
We will use these to define some $D -Lipschitz$ continuous condition functions in $\ref{Theorem 3.7.}$. From these condition numbers, we can get \cite{1} \textbf{Theorem 1.1} to compute homology group of a semi-algebraic set. 

Now we will outline how the authors in [1] get the above theorem and think about how to get an improvement of this theorem.In \cite{2}, Niyogi, Smale and Weinberger give an answer to the following question: giving a compact submanifold $S \subset E$, a ﬁnite set $\mathcal{X} \subset E$ and $\epsilon > 0$, how to ensure that $S$ is a deformation retract of $\mathcal{U}(\mathcal{X},\epsilon)$ (see \S3.1 ). And \cite{1} gives an extension in the \textbf{Theorem 2.8. } Based on this theorem, the paper \cite{1} gives a covering algorithm to generate a ﬁnite point set on the sphere whose neighbourhood is homotopical to the semialgebraic set S(F,G) in \textbf{Theorem 1.8.}  Then the authors give an algorithm on the calculation of the homotopy group. The key point is that the homotopy equivalence implies the isomorphism of homology group. 

As a result, one can get the homology group by computing the nerve $\mathcal{N}$ of the covering $\{B(x,\epsilon)|x \in \mathcal{X}\}$ (this is the simplicial complex whose elements are the subsets $\mathcal{N}$ of $\mathcal{X}$ such that $\cap_{x\in\mathcal{N}}B(x,\epsilon)$ is not empty) and computing it's homology group by the Nerve Theorem (Corollary 4G.3 of \cite{4}). The process is described in detail in \cite{5} \S4 where the proof of the following result can be found. In \cite{6} and \cite{7} there are improved algorithms for computing the nerve of a covering. 

\begin{theorem}\label{Theorem 1.9.}
(\cite{6} and \cite{7} ) Given a ﬁnite set $\mathcal{X} \subset R^{n+1}$ and a positive real number $\epsilon$, one can compute the homology of $\cup_{x\in\mathcal{X}}B(x,\epsilon)$ with $\mathcal{X}^{O(n)}$ operations.

\end{theorem}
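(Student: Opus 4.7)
The strategy rests on the Nerve Theorem together with the observation that $U := \bigcup_{x\in\mathcal{X}}B(x,\epsilon)$ sits inside $\mathbb{R}^{n+1}$, which forces its homology to vanish above dimension $n$. Since each ball $B(x,\epsilon)$ is convex, every nonempty finite intersection $\bigcap_{x\in S}B(x,\epsilon)$ is convex and hence contractible, so $\{B(x,\epsilon) : x \in \mathcal{X}\}$ is a good cover. The Nerve Theorem (Corollary 4G.3 of \cite{4}) then yields a homotopy equivalence $\mathcal{N}\simeq U$, where $\mathcal{N}$ is the nerve of the cover, reducing the task to computing the simplicial homology of $\mathcal{N}$.

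The complexity saving comes from observing that we need not build all of $\mathcal{N}$. Because $U$ is an open subset of $\mathbb{R}^{n+1}$ we have $H_k(U)=0$ for $k\geq n+1$, so only $H_0,\ldots,H_n$ must be determined. A standard fact in algebraic topology says that $H_k$ of a simplicial complex depends only on its chains in degrees $k-1$, $k$, $k+1$; hence $H_k(\mathcal{N}^{(n+1)})=H_k(\mathcal{N})$ for $k\leq n$, where $\mathcal{N}^{(n+1)}$ denotes the $(n+1)$-skeleton. It is therefore enough to construct $\mathcal{N}^{(n+1)}$, which consists of the subsets $S\subseteq\mathcal{X}$ of size at most $n+2$ for which $\bigcap_{x\in S}B(x,\epsilon)\neq\emptyset$, of which there are at most $\binom{|\mathcal{X}|}{\leq n+2}=|\mathcal{X}|^{O(n)}$. (Helly's theorem in $\mathbb{R}^{n+1}$ is implicitly what guarantees that these low-dimensional intersection tests already capture all geometric information.)

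Concretely, I would enumerate every subset $S\subseteq\mathcal{X}$ with $|S|\leq n+2$ and decide emptiness of $\bigcap_{x\in S}B(x,\epsilon)$ by computing the smallest enclosing ball of $S$ and checking that its radius is at most $\epsilon$; this is a convex feasibility problem solvable in $\mathrm{poly}(n,|S|)$ time, e.g.\ by interior-point methods or a dimension-generalized Welzl algorithm. The positive answers assemble into a simplicial complex $\mathcal{K}$ of size $|\mathcal{X}|^{O(n)}$. I would then compute $H_0(\mathcal{K}),\ldots,H_n(\mathcal{K})$ by forming the boundary matrices $\partial_k:C_k(\mathcal{K})\to C_{k-1}(\mathcal{K})$ and reducing them to Smith normal form over $\mathbb{Z}$; since each matrix has $|\mathcal{X}|^{O(n)}$ rows and columns, this phase also runs within the claimed budget.

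The main obstacle, and the point where the improved algorithms of \cite{6} and \cite{7} contribute, is performing the enumeration and feasibility tests without wasted work: the naive count of $(n+2)$-subsets is already on the boundary of what the bound allows, so the hidden constants matter in any practical implementation, and one wants to prune candidate simplices incrementally rather than test all of them independently. A secondary technical point is carrying out the Smith normal form reduction over $\mathbb{Z}$ without coefficient blow-up, which is handled by classical polynomial-time algorithms. Summing the three phases then yields the overall bound of $|\mathcal{X}|^{O(n)}$ operations asserted by the theorem.
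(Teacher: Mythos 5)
Your proposal is correct and takes essentially the same route as the paper, which states Theorem \ref{Theorem 1.9.} as a cited result (proof in \cite{5} \S 4, with the improved nerve algorithms of \cite{6} and \cite{7}) and itself sketches exactly this reduction: the balls form a good cover by convex sets, and the Nerve Theorem (Corollary 4G.3 of \cite{4}) identifies the homology of $\bigcup_{x\in\mathcal{X}}B(x,\epsilon)$ with that of the nerve. The details you supply beyond the paper's outline --- truncating to the $(n+1)$-skeleton because $H_k$ of an open subset of $R^{n+1}$ vanishes for $k\ge n+1$, deciding each candidate simplex by a smallest-enclosing-ball feasibility test, and Smith normal form reduction of the boundary matrices --- are precisely the content delegated to the cited references, and they keep the total cost within the claimed $\mathcal{X}^{O(n)}$ bound.
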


Now the authors give the proof of the Theorem 1.1 in \cite{1}.

However, in practice, the above algorithms is slow. One reason is that we only allow the division of the same radius in the cube, which may generate many overqualiﬁed points. As a result, we will introduce a more adaptive grid algorithm in chapter 2. Based on this grid algorithm, we will prove a local version in the the Niyogi-Smale-Weinberger theorem, which allows us to apply our grid algorithm on the computation of homology groups. 

Let me introduce our strategy in this thesis.

In the chapter 2, we give an adaptive grid algorithm on the unit shpere, which is a basis for our future design. This algorithm is a generalized idea of ﬁnding a minimum of a Lipchitz continuous function on an interval. In the intuition, Finding a minimum of a Lipchitz continuous function f on an interval $[0,1]$ will lead to the complexity of $O(\max_{x\in[0,1]}\frac{1}{  f(x)})$ in \ref{Theorem 2.1}. However, we give an algorithm of the complexity $O(\int_{[0,1]}\frac{1}{f(x)})dx)$ in \ref{Theorem 2.3}. Then ze generaliwe it to the sphere in \ref{Theorem 2.5}.

From this algorithm, we can design an adaptive covering algorithm to generate a ﬁnite point set on the sphere whose neighbourhood is homotopical to the semialgebraic set $S(F,G)$.

For the goal of our thesis, we need a theorem to judge whether a compact set in Sn is homotopical to $S(F,G)$. So we prove an extension of the Niyogi-Smale-Weinberger theorem in a unit sphere in \ref{Corollary 3.24.}.To prove this theorem, we at ﬁrst give a local version of the Niyogi-Smale-Weinberger theorem in \ref{Theorem 3.15}. Then we prove it by generalizing the the proof on \cite{1}.

Based on the adaptive grid algorithm and the Niyogi-Smale-Weinberger theorem, we can design the algorithm in \ref{Theorem 3.25.}.Notice \ref{Theorem 1.9.} dose not depend on the choice of $\epsilon$. As a result, it is quite reasonable to have the following conjecture. 
\begin{Conjecture} \label{Conjecture 1.17.} Given a ﬁnite set $\mathcal{X} \subset R^{n+1}$ and a positive $A-Lipschitz$ continuous function $\epsilon(x)(A > 0)$, one can compute the homology of $\cap_{x\in \mathcal{X}}B(x,\epsilon(x))$ with $\mathcal{X}^{O(n)}$ operations.
\end{Conjecture}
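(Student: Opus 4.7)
The plan is to exploit the fact that each set $B(x,\epsilon(x))$ is a closed Euclidean ball with fixed centre $x$ and fixed radius $\epsilon(x)$, hence convex. The intersection $K:=\bigcap_{x\in\mathcal{X}}B(x,\epsilon(x))$ is a finite intersection of closed convex subsets of $\mathbb{R}^{n+1}$, so $K$ is itself closed and convex. In particular, in contrast with the union version \ref{Theorem 1.9.} -- where the topology of the union of balls is genuinely rich and captured by the \v{C}ech nerve -- a non-empty closed convex subset of $\mathbb{R}^{n+1}$ is star-shaped about any of its points and therefore contractible, so
$$H_*(K)\cong H_*(\mathrm{pt})\ \text{if}\ K\ne\emptyset,\qquad H_*(K)=0\ \text{if}\ K=\emptyset.$$
The entire homology computation therefore reduces to one bit of information: is $K$ empty?

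Second, I would reduce that single emptiness test to many small fixed-dimensional subproblems by invoking Helly's theorem: a finite family of convex subsets of $\mathbb{R}^{n+1}$ has a common point if and only if every subfamily of size $n+2$ does. Iterating over all $\mathcal{Y}\subset\mathcal{X}$ with $|\mathcal{Y}|=n+2$ yields $\binom{|\mathcal{X}|}{n+2}=|\mathcal{X}|^{O(n)}$ subproblems, and each is a feasibility question for $n+2$ quadratic inequalities $\|y-x\|^2\le\epsilon(x)^2$ in $n+1$ unknowns. In fixed dimension $n+1$ this can be answered in time depending only on $n$, either by a Welzl-type randomized algorithm for the smallest enclosing ``generalized ball'', or by directly solving the KKT system for the convex Chebyshev-type program $\min_y\max_{x\in\mathcal{Y}}(\|y-x\|^2-\epsilon(x)^2)$ and checking the sign of the optimum. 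Combining the two stages yields the $|\mathcal{X}|^{O(n)}$ bound demanded by the conjecture. Observe that only the values $\{\epsilon(x):x\in\mathcal{X}\}$ actually enter the argument; the $A$-Lipschitz hypothesis is never used for the homology count itself and is presumably inherited from the upstream applications rather than being essential to the statement.

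The main obstacle I foresee is exact arithmetic for the boundary instances: degenerate subfamilies in which the KKT linear system is rank-deficient, so that a purely numerical feasibility test cannot crisply distinguish a zero-dimensional intersection from an empty one. This is the same robustness issue that arises for the classical smallest-enclosing-ball problem in computational geometry, and it is handled there by symbolic perturbation or by maintaining an explicit Helly certificate of size at most $n+2$ for infeasible instances and a common point for feasible ones; importing any of those techniques preserves the $|\mathcal{X}|^{O(n)}$ complexity and, together with the convexity argument above, promotes Conjecture \ref{Conjecture 1.17.} to a theorem. The only place where more care is needed is if one later wants to make the construction \emph{continuous} in $\epsilon(\cdot)$ -- there the Lipschitz bound $A$ would actually be used, via a quantitative version of Helly to ensure that the active certificate does not flicker under small perturbations of the radii.
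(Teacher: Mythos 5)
Your convexity argument is correct for the statement as literally written: $\bigcap_{x\in\mathcal{X}}B(x,\epsilon(x))$ is a finite intersection of closed balls, hence convex, hence either empty or contractible, so the whole homology computation collapses to an emptiness test; Helly's theorem in $\mathbb{R}^{n+1}$ reduces that to $\binom{|\mathcal{X}|}{n+2}=|\mathcal{X}|^{O(n)}$ feasibility problems, each of bounded size in fixed dimension $n+1$. So the literal claim is not merely provable, it is nearly vacuous, and your own observation that the $A$-Lipschitz hypothesis plays no role should have been the warning sign: the $\cap$ in Conjecture \ref{Conjecture 1.17.} is evidently a typo for $\cup$. The conjecture is introduced as a variable-radius analogue of \ref{Theorem 1.9.}, which concerns $\bigcup_{x\in\mathcal{X}}B(x,\epsilon)$; the surrounding discussion is about computing the nerve of the covering $\{B(x,\epsilon)\mid x\in\mathcal{X}\}$, which is used to compute the homology of a union; and the only place the conjecture is invoked, Theorem \ref{Theorem 3.28}, computes the homology of $\bigcup_{B\in\mathcal{B}}B_{\mathbb{S}^n}(c_B,r_{c_B})\simeq S(F,G)$ via that nerve. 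Your shortcut therefore settles the wrong problem and says nothing about the intended one: there is no single convexity reduction for a union of balls, whose topology can be arbitrarily complicated.

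For the union version, the relevant route is the one the paper gestures at around \ref{Theorem 1.9.}: the pairwise and higher intersections of Euclidean balls are convex (hence contractible whenever nonempty), so the covering is a good cover and the Nerve Theorem identifies $H_*(\bigcup_{x}B(x,\epsilon(x)))$ with $H_*(\mathcal{N})$; the substantive point is that one need not enumerate all $2^{|\mathcal{X}|}$ subsets, because the dual-shape and power-diagram machinery of \cite{6} and \cite{7} (which already handles balls of different radii via weighted Voronoi/Delaunay structures) restricts attention to simplices of dimension at most $n+1$, giving the $|\mathcal{X}|^{O(n)}$ bound. Note also that the paper offers no proof of this statement — it is explicitly posed as a conjecture — so there is no in-paper argument to compare against; but what it conjectures is the union case, and your Helly-based emptiness test does not touch it.
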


Now we can give a more adaptive homology group calculation algorithm and give a conjectured complexity in \ref{Theorem 3.28} which reduced the complexity greatly.
\end{section}

\begin{section}{An adaptive grid algorithm}
In this chapter, we will introduce an adaptive grid algorithm to obtain an optimized algorithm for the computation of homology group in a semi-algebraic set. At ﬁrst, we will explain the potential grid algorithm in the $Proposition 5.1$ of \cite{1}. Then we will give an adaptive version in interval, or dimension 1. Finally, we will give a much more general version in the unit sphere $\mathbb{S}^n$. As a most useful special case in our work, we will give an adaptive grid algorithm in a unit sphere.
We assume that $f : M \rightarrow (0,1]$ an $A-Lipschitz$ continuous function with $A ≥ 1$ in this section. And suppose that the complexity of the following algorithms are only the total number of nodes in the corresponding decision tree.

\subsection{An nonadaptive grid algorithm}
To introduce nonadaptive and adaptive grid algorithms, we consider the problem of computing the minimum of a 1-Lipschitz function f : [a,b] → (0,1].

At ﬁrst I will give an nonadaptive algorithm.

\begin{theorem}\label{Theorem 2.1}
 Given a small $\epsilon > 0$ and a $1-Lipschitz$ function $f : [0,1] \rightarrow (0,1]$, we have the following algorithm to ﬁnd a minimum of $f$. Concretely, there is an interval $[c,d] \subset [0,1]$ such that $| \frac{\min_{x\in [0,1]} f(x)} {\min_{x\in\{c,d\}} f(x) }-1| \leq \epsilon$. This algorithm performs $O(\frac{1}{\epsilon} \max_{x\in [0,1]} \frac{1}{ f(x)})$ evaluations on $f$.

\end{theorem}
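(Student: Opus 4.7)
The plan is to use a uniform grid $G_h = \{0, h, 2h, \dots\}$ on $[0,1]$, with spacing $h$ tuned adaptively to match the unknown minimum $m = \min_{[0,1]} f$. The Lipschitz hypothesis is the whole engine: for every $x \in [0,1]$ there is a grid point $x_i$ with $|x - x_i| \le h/2$, hence $|f(x) - f(x_i)| \le h/2$. Applying this at a minimizer gives $m_h := \min_i f(x_i) \le m + h/2$, and choosing $[c,d] = [x_i, x_{i+1}]$ where $x_i$ achieves $m_h$ yields $\min\{f(c), f(d)\} \le m + h/2$. The target relative error $|m/\min\{f(c),f(d)\} - 1| \le \epsilon$ is then implied by $h \le 2\epsilon m / (1 - \epsilon)$, i.e.\ by $h = \Theta(\epsilon m)$, which requires $O(1/(\epsilon m))$ grid evaluations.

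Since $m$ is unknown at the start, I would proceed by doubling: at step $k$ refine the grid to spacing $h_k = 2^{-k}$, evaluating $f$ only at the newly inserted midpoints, and track the running minimum $m_k = \min_{x_i \in G_{h_k}} f(x_i)$. I would halt at the first $k$ for which $h_k \le 2\epsilon m_k/(1-\epsilon)$; this test depends only on past evaluations, and it must eventually succeed because $h_k \to 0$ while $m_k \ge m > 0$. Combining the halting inequality with $m_k \le m + h_k/2$ gives $h_k \le 2\epsilon m / (1 - 2\epsilon)$, and the failure of the test at the previous step gives $h_k > \epsilon m /(1-\epsilon)$, so $h_k = \Theta(\epsilon m)$ at termination. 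The telescoping cost of all doublings is then $O(2^k) = O(1/h_k) = O(\epsilon^{-1} \max_{x \in [0,1]} 1/f(x))$, matching the claim.

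The only genuinely delicate step is tuning the halting constant so that the output satisfies the sharp relative error bound $\epsilon$, rather than merely $\epsilon(1 + O(\epsilon))$. The flexibility comes from the fact that in the Lipschitz inequality the overestimate $h_k/2$ is strictly tighter than $h_k$, so a slightly conservative stopping rule of the form $h_k \le c\epsilon m_k$ with a fixed small $c$ handles this cleanly without affecting the asymptotic count. Apart from this calibration, each step of the argument is a direct application of 1-Lipschitz continuity on a nested sequence of grids, and no further ingredient seems necessary.
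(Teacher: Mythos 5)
Your proposal is correct and follows essentially the same route as the paper: a uniform dyadic grid that is repeatedly refined by halving, a stopping test comparing the spacing $2^{-k}$ to $\epsilon$ times the current grid minimum of $f$, the $1$-Lipschitz bound to control the gap between the grid minimum and the true minimum, and a geometric sum yielding $O\!\left(\tfrac{1}{\epsilon}\max_{x\in[0,1]}\tfrac{1}{f(x)}\right)$ evaluations. The calibration point you flag is harmless: stopping when $h_k \le 2\epsilon m_k$ already gives $(m_k-m)/m_k \le (h_k/2)/m_k \le \epsilon$, which is the same test the paper uses up to constants.
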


\begin{algorithm}  
        \caption{}  
        \begin{algorithmic}[1] 
            \Require  a small $\epsilon > 0$ and a 1-Lipschitz function $f : [0,1] \rightarrow (0,1]$
            \Ensure   a interval [c,d] such that $|\frac{\min_{x\in [0,1]} f(x)} {\min_{x\in\{c,d\}} f(x) }-1| \leq \epsilon$
            \State $G \gets\{0,1\}$
            \State $k \gets 0 $
            \While {$r \geq \epsilon f(x)$ for a $x \in G$ }
              \State $k \gets k + 1$ and $G \gets {i2^{−k}|0 \leq i \leq 2^k}$
            \EndWhile 
            \State \Return { $\min_{x\in G} f(x)$}

        \end{algorithmic}  
    \end{algorithm}  

\begin{remark}
This algorithm is nonadaptive since the grid generation process. We can see each grid generation process as a division of generated intervals together.
\end{remark}

\begin{proof}
At ﬁrst, we prove the correctness of this algorithm. Let $m = \min_ {x∈{0,1}}f(x)$. Notice
that if $r \leq m\epsilon$, we see that 
$|\frac{\inf_{[a,b]}(f)}{ m} −1|\leq\frac{r}{m} \leq \epsilon$. Take an interval $[a,b] \subset [0,1]$. By the recursion process, we see that the algorithm is correct when $|a−b| = r$ implies the algorithm is correct when $|a−b| = 2r$. As a result, for the interval $[0,1]$, we get the algorithm is correct from this induction process.
Then, we will prove that this algorithm will terminate. Notice that if $r ≤ m\epsilon$, the algorithm will satisfy the stop condition. Take an interval $[a,b] \subset [0,1]$. By the recursion, we see that the algorithm will terminate if $|a−b| = r $implies that the algorithm will ﬁnish if $|a−b| = 2r$. As a result, for the interval $[0,1]$, we get the algorithm will terminate from this induction process.
Now let us give a complexity analysis. Notice the $n^{th}$ iteration generates $2n$ intervals. Notice there must be a point in $[0,1]$ where $f(x)$ takes the minimum. As a result, suppose the maximum iteration time is $k$, we have $\frac{1}{ 2^k} ≤ \min_{x\in[0,1]} f(x)\epsilon$, or$ k \leq \log(\min_{x\in[0,1]} f(x)\epsilon) + 1$. As a result, when we assume an eﬃcient computation in the evaluation, the complexity is bounded by
$$\sum^{\lceil\max_{x\in [0,1]}(-\log(f(x)\epsilon)+1)\rceil}_{i=0}2^i =
2^{\lceil\max_{x\in [0,1]}(-\log(f(x)\epsilon)+2)\rceil } -1 \leq \mathcal{O}
(\frac{1}{\epsilon}\max_{x\in [0,1]}\frac{1}{f(x)}) $$
\end{proof}

\subsection{ An adaptive grid algorithm in an interval }

Notice that the complexity of this nonadaptive algorithm depends on the maximum of $\frac{1}{ f(x)}$ because there are many useless computations. For example, when we get an interval which satisﬁes the condition $r < \epsilon \min\{f(a),f(b)\}$, we still divide this interval. As a result, a precise partition of the generated intervals should be added to the design.
Now we have the following algorithm to calculate the minimum of a strictly positive 1-Lipschitz function f on $[a,b]$ with the complexity $\mathcal{O}(\frac{1}{\epsilon}\int_0^1\frac{1}{f(x)}).$

\begin{theorem}\label{Theorem 2.3}
  Given a closed interval $[a,b] \subset [0,1]$, a small $\epsilon > 0$ and a $1-Lipschitz$ function $f : [a,b] \rightarrow (0,1]$, we have the following algorithm to to ﬁnd a minimum of $f$. Concretely, there is an interval $[c,d] \subset [0,1]$ such that $ | \frac{\inf_{[a,b]}(f)}{min_{x\in\{c,d\}}f(x)}  −1| \leq \epsilon$. This algorithm performs $\mathcal{O}(\frac{1}{\epsilon}\int_0^1\frac{1}{f(x)})$ evaluations on $f$. 
 
\end{theorem}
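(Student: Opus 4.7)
The plan is to follow the structure of the proof of Theorem 2.1, but replace the uniform refinement rule by a local stopping criterion and then amortise the cost of the resulting non-uniform binary tree against the weighted integral $\int \tfrac{1}{f}$.

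First I would describe the algorithm precisely. Maintain a binary tree rooted at $[a,b]$. At each node $[u,v]$ evaluate $f(u)$ and $f(v)$; if $v-u \leq \epsilon\min\{f(u),f(v)\}$ declare $[u,v]$ a leaf, otherwise split at the midpoint and recurse on the two halves. Return the leaf $[c,d]$ minimising $\min\{f(c),f(d)\}$ over all leaves. Correctness and termination reuse the induction of Theorem 2.1 essentially verbatim: on any leaf, the $1$-Lipschitz bound $\min\{f(c),f(d)\} - (d-c) \leq \inf_{[c,d]} f \leq \min\{f(c),f(d)\}$ combined with the stopping rule $d-c \leq \epsilon\min\{f(c),f(d)\}$ gives a $(1\pm\epsilon)$ approximation on that leaf; the leaves partition $[a,b]$, so the leaf containing a global minimiser provides the bound in the theorem statement, and the returned leaf is at most as bad. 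Termination is clear because $\inf_{[a,b]}f>0$, so the stopping rule fires once the depth exceeds $\log_2\bigl((b-a)/(\epsilon\inf f)\bigr)$.

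The main step is the complexity bound, and the key technical trick is to charge cost against the \emph{leaves} rather than against the split intervals, because leaves are pairwise disjoint while split intervals on a root-to-leaf path nest. Let $I_1,\ldots,I_L$ be the leaves with $I_j=[c_j,d_j]$, and let $P_j=[u_j,v_j]$ be the parent of $I_j$, so $|P_j|=2(d_j-c_j)$ and $P_j$ satisfies the splitting condition $v_j-u_j > \epsilon\min\{f(u_j),f(v_j)\}$. Then for every $x \in P_j$, the Lipschitz bound gives
\[
f(x) \;\leq\; \min\{f(u_j),f(v_j)\} + |P_j| \;<\; |P_j|\bigl(1+\tfrac{1}{\epsilon}\bigr),
\]
and since $I_j\subseteq P_j$,
\[
\int_{I_j}\frac{dx}{f(x)} \;\geq\; \frac{d_j-c_j}{|P_j|\bigl(1+\tfrac{1}{\epsilon}\bigr)} \;=\; \frac{\epsilon}{2(1+\epsilon)}.
\]
Summing over the disjoint leaves yields $L \leq \frac{2(1+\epsilon)}{\epsilon}\int_a^b \tfrac{dx}{f(x)}$, and since the tree has $2L-1$ nodes, the total number of evaluations is $\mathcal{O}\bigl(\tfrac{1}{\epsilon}\int_a^b \tfrac{dx}{f(x)}\bigr)$.

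The main obstacle is exactly this amortisation step. A naive attempt to bound the number of split intervals by summing the local inequality $\int_{[u,v]} 1/f \geq \Omega(\epsilon)$ directly over all split intervals fails, because an ancestor overlaps each of its descendants and the inequalities cannot simply be added. Pushing the charge one level down to the leaves, which by construction form a disjoint cover of $[a,b]$, repairs this at the cost of only a factor $2$ coming from the length ratio $|P_j|/|I_j|$.
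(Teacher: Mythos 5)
Your proof is correct and follows essentially the same route as the paper: both arguments amortise the number of leaves of the binary tree against $\int_a^b \frac{1}{f}$ by showing that each leaf $I_j$ contributes $\int_{I_j}\frac{dx}{f(x)} = \Omega(\epsilon)$, then multiply by a constant to count all nodes. Your version is in fact slightly cleaner than the paper's: you explicitly invoke the parent interval $P_j$ and its failed stopping test to get the upper bound $f(x) \le |P_j|(1+1/\epsilon)$ on the leaf, whereas the paper asserts the analogous bound $f(a)\epsilon \le 2|a-b|$ on the leaf endpoints without explaining that it comes from the parent having been split, and it has a couple of harmless typos (e.g.\ writing $\frac{1}{2+\epsilon}f(x)\le|a-b|$ where the preceding chain gives $\frac{\epsilon}{2+\epsilon}f(x)\le|a-b|$).
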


\begin{algorithm}  
        \caption{}  
        \begin{algorithmic}[1] 
            \Require  a small $\epsilon > 0$ and a 1-Lipschitz function $f : [0,1] \rightarrow (0,1]$
            \Ensure   a interval [c,d] such that $|\frac{\min_{x\in [0,1]} f(x)} {\min_{x\in\{c,d\}} f(x) }-1| \leq \epsilon$
            \State $L_{todo} \gets \{[a,b]\}$
            \State $L_{final} \gets \emptyset  $
            \While { $L_{todo}$ is not empty }
              \State $L_{next} = \emptyset$
              \For{$[c,d] \in L_{todo}$ }
                    \If{ $\min\{f(c),f(d)\}\epsilon \geq|c−d|$ }
                        \State $L_{final} \gets L_{final} \cup [c,d].$
                    \Else
                        \State $L_{next} \gets L_{next} \cup \{[c, \frac{c+d }{2} ],[\frac{c+d }{2} ,d]\}.$
                    \EndIf               
              \EndFor
            \State $L_{todo} \gets L_{next}$   
            \EndWhile 
            \State \Return { $L_{final}$}

        \end{algorithmic}  
    \end{algorithm}  

\begin{proof}
At ﬁrst, we prove the correctness of this algorithm. Given an interval $[a,b]$, a strictly positive $1-Lipschitz$ function $f$ and a very small positive integer $\epsilon > 0$. Let $m = \min _{x\in {a,b}} f(x)$. Notice that if $r \leq m\epsilon$, we see that $| \frac{\inf_{[a,b]}(f)}{m } −1| ≤ r m ≤ \epsilon.$ By the recursion process, we see that the algorithm is correct if $|a−b| = r$ implies the algorithm is correct if $|a−b| = 2r$. As a result, for any interval $[a,b]$, we get the algorithm is correct by the induction process.
Then, we prove that this algorithm can terminate . Notice that if $r ≤ m\epsilon $, the algorithm will satisfy the stop condition. By the recursion, we see that the algorithm will terminate if $|a−b| = r $implies that the algorithm will terminate if $|a−b| = 2r$. As a result, for any interval $[a,b]$, we get the algorithm will terminate by the induction process.
At last, let us analyze the complexity of this algorithm.
Notice that we can visualize the process of the algorithm as a binary tree. The nodes of the tree are input intervals in each iteration of the algorithm, the $n^th$ layer is the process after $n−1$ iterations and the leaves are the ﬁnal iteration in each branches. In addition,
the complexity is less than the double of cardinality of leaves For any $x$ in a leaf $[a,b]$, we have $$f(x)\epsilon = (f(x)−f(a) + f(a))\epsilon \leq |f(x)−f(a)|\epsilon  + f(a)\epsilon  \leq  \epsilon |x−a|+ 2|a−b|\leq \epsilon|b−a|+ 2|a−b| = (2 + \epsilon)|a−b|$$
and
$$|a−b|\leq f(a)\epsilon \leq|f(a)−f(x)|\epsilon + f(x)\epsilon  \leq |a−x|\epsilon + f(x)\epsilon ≤|a−b|\epsilon + f(x)\epsilon.$$
In summary, we get
$\frac{1}{2+\epsilon}
f(x) \leq |a−b|\leq
\frac{1}{1−\epsilon}
f(x)$.
By dividing by $f(x)\epsilon$ and taking the integral over $[a,b]$, we get
$\frac{1}{2 + \epsilon}  \leq
\frac{1}{\epsilon}\int_b^a\frac{1}{f(x) }
\leq
\frac{1}{1−\epsilon}$
.
Suppose there are $m$ leaves $\{[a_i,b_i]\}$. By the process of the recursion algorithm, we can see m as the operations times and the union of all leaves is $[0,1]$. After the m sums of the above inequality, we get
$$\frac{1-\epsilon}{\epsilon} \int_b^a
\frac{1}{f(x)} \leq m ≤
\leq \frac{2 + \epsilon}{\epsilon} \int_b^a
\frac{1}{f(x)} .$$
As a result, we get the complexity of the algorithm as $\mathcal{O}(\frac{1}{\epsilon}\int_0^1\frac{1}{f(x)}).$ 
\end{proof}

\begin{subsection}{An adaptive grid algorithm in a unit sphere
}
Inspired by $\mathbf{Theorem\ref{Theorem 2.3}}$, we want to ﬁnd an adaptive grid algorithm in a unit sphere. Actually, we can even design a similar algorithm in a certain Riemennian manifold.
However, we notice that the above algorithm analysis greatly depends on how to divide the interval. For instance, there is no overlaps in each partition, which allows us to take an integral on a small interval and then taking the sum. However, in general, it is not easy to show the control of overlaps. So we need to consider some additional conditions on the open cover.
At ﬁrst, we will give some useful notations.
Let $\mathcal{J}_k$ be a maximal set such that $\forall x,x′ \in \mathcal{J}_k, x \neq x′ \implies d_{\mathbb{S}^n}(x,x′) \leq 2^{−k} .$ Let $\mathcal{B}_k = \{B(x,2^{−k}) | x\in \mathcal{J}_k\}$.

\begin{lemma}
 Each $\mathcal{B}_k$ is an open cover of $\mathbb{S}^n$.
\end{lemma}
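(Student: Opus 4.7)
The plan is to use the maximality of $\mathcal{J}_k$ directly. (I am reading the defining property of $\mathcal{J}_k$ as a maximal $2^{-k}$-separated set, i.e.\ any two distinct points are at distance \emph{at least} $2^{-k}$; under the literal reading as stated the set would be contained in a ball of radius $2^{-k}$ and could not cover $\mathbb{S}^n$ in general, so I treat the inequality direction as a typo to be corrected in passing.)

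First I would fix an arbitrary $y \in \mathbb{S}^n$ and split into two cases. If $y \in \mathcal{J}_k$, then $y \in B(y, 2^{-k}) \in \mathcal{B}_k$ and we are done. If $y \notin \mathcal{J}_k$, the maximality of $\mathcal{J}_k$ means that $\mathcal{J}_k \cup \{y\}$ fails the separation property; since $\mathcal{J}_k$ itself satisfies it, the failure must come from the new point $y$, so there exists some $x \in \mathcal{J}_k$ with $d_{\mathbb{S}^n}(x,y) < 2^{-k}$. This gives $y \in B(x, 2^{-k}) \in \mathcal{B}_k$. Since $y$ was arbitrary, $\mathcal{B}_k$ covers $\mathbb{S}^n$, and each ball is open in $\mathbb{S}^n$ by definition of the metric topology, so $\mathcal{B}_k$ is an open cover.

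The only non-routine point is to justify that a maximal such $\mathcal{J}_k$ exists. I would dispatch this either abstractly by Zorn's lemma applied to the poset of $2^{-k}$-separated subsets ordered by inclusion (ascending chains have unions that are still separated), or more constructively using compactness of $\mathbb{S}^n$: a greedy construction that at each stage adds any point whose distance from every already-chosen point is at least $2^{-k}$ must terminate, because the balls $B(x, 2^{-k}/2)$ centered at chosen points are pairwise disjoint and each has a uniform positive volume in $\mathbb{S}^n$, so their number is bounded above by the total volume divided by this lower bound. The resulting finite set is then maximal by construction.

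I do not expect any serious obstacle; the content of the lemma is essentially the standard fact that a maximal packing is automatically a covering at twice (here, the same) the radius, and the main thing to get right in the writeup is simply to flag the definition of $\mathcal{J}_k$ so that the direction of the separation inequality is unambiguous.
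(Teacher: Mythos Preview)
Your argument is correct and is essentially the same as the paper's: both use maximality of $\mathcal{J}_k$ to show that any uncovered point could be adjoined while preserving the separation condition, contradicting maximality. You are also right that the inequality in the definition of $\mathcal{J}_k$ is a typo (the paper's own proof writes it as $\geq 2^{-k}$), and your added remark on the existence of a maximal $\mathcal{J}_k$ fills a small gap the paper leaves implicit.
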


\begin{proof}
We can get this statement from a contradiction. Suppose there is a $x \in \mathbb{S}^n$ but is not covered by $\mathcal{B}_k$ . Then $\mathcal{J}_k \cup{x}$ satisﬁes the above condition that $\forall x,x′ \in  \mathcal{J}_k\cup\{x\}, x \neq x′ \implies d_{\mathbb{S}_n}(x,x′) \geq 2−k$ , which is contradicted to the maximal set of $ \mathcal{J}_k$.
\end{proof}

For a Ball $B$, we denote by $radius_B$ or $r_B$ its radius and $center_B$ or $c_B$ its center. For $B \in mathcal{B}_r$, let $divide(B) = \{B′ \in \mathcal{B}_{\frac{r}{2}} | B′\cap B \neq \emptyset \}$. We will also use the following notation: $union(x) = \cup_{A
\in X}A$. Notice that $ B\subset union(divide(B))$.
We deﬁne
$L_0 = B_0 L_{k+1} = union(\{divide(B)) | B \in \mathcal{L}_k and radius_B \geq f(center_B)\})$ for $k \geq 0$, and $\mathcal{M}_k = \cup^k_i=0{B \in \mathcal{L}i | radius_B < f(center_B)}$ for $k \leq 0$.
Now we design the algorithm and give a complexity analysis. Suppose An is the area of a $n−1$ dimension unit sphere.  Suppose $A_n$ is the area of a $n−1$ dimension unit sphere. 

\begin{theorem}\label{Theorem 2.5}
 Given a $A-Lipchitz (A \geq 0)$ continuous function $f : \mathbb{S}n \rightarrow (0,1]$, we have the following algorithm to generate a open cover $L_{final}$ of the unit sphere Sn such that $\forall B(c_B,r_{c_B}) \in L_{final}$ must satisfy $r_{c_B} \leq f(c_B)$. This algorithm performs less than $\frac{(16(2+A))^n}{ A_n log_e^2} \int_{\mathbb{S}^n}(\frac{1}{f} )ndV$ evaluations on $f$. Notice that $\mathcal{L}_final = \mathcal{M}_k$ where $k = \min\{k \in \mathbb{Z} |\mathcal{L}_k = \emptyset\}$ 
\end{theorem}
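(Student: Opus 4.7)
The plan is to verify correctness and termination quickly, and then spend the bulk of the argument on the complexity bound.

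For correctness, every ball in $\mathcal{L}_{final}=\mathcal{M}_k$ satisfies $r_B<f(c_B)$ by the very definition of $\mathcal{M}_k$, so the output has the required property. That $\bigcup_{B\in\mathcal{L}_{final}}B=\mathbb{S}^n$ follows by induction on levels: $\mathcal{L}_0=\mathcal{B}_0$ covers $\mathbb{S}^n$ by the preceding lemma, and whenever a ball $B\in\mathcal{L}_k$ is subdivided we have $B\subset\operatorname{union}(\operatorname{divide}(B))\subset\bigcup_{B'\in\mathcal{L}_{k+1}}B'$. Termination follows from compactness: $m:=\min_{\mathbb{S}^n}f>0$ by continuity, and as soon as $2^{-k}<m$ every ball at level $k$ satisfies the stopping condition, so $\mathcal{L}_{k+1}=\emptyset$.

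The first step of the complexity analysis is a localisation lemma: for $B\in\mathcal{L}_k$ with $k\geq 1$ and any $x\in B$, one has
\begin{equation*}
f(x)\;\leq\;(2+4A)\cdot 2^{-k}.
\end{equation*}
Indeed, the parent $B'\in\mathcal{L}_{k-1}$ was subdivided, hence $f(c_{B'})\leq r_{B'}=2\cdot 2^{-k}$; since $B\cap B'\neq\emptyset$ and $x\in B$, the triangle inequality gives $d(x,c_{B'})\leq 2r_B+r_{B'}=4\cdot 2^{-k}$, and the $A$-Lipschitzness of $f$ yields the estimate. Consequently $\bigcup_{B\in\mathcal{L}_k}B\subset V_k:=\{f\leq(2+4A)2^{-k}\}$. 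Combining the geodesic-ball volume bound $\operatorname{vol}(B(p,r))\geq c_n(A_n/n)\,r^n$ for $r\leq 1$ (from $\sin(s)\geq 2s/\pi$) with a dimensional overlap bound — any point of $\mathbb{S}^n$ meets at most a constant $M_n$ balls of $\mathcal{B}_k$, since the centers $\mathcal{J}_k$ are $2^{-k}$-separated — one obtains the single-level count $|\mathcal{L}_k|\leq\tfrac{M_n n}{A_n c_n}\cdot 2^{nk}\operatorname{vol}(V_k)$.

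The main obstacle will be summing this over $k$ without picking up a spurious $\log(1/m)$ factor, which the naïve bound $\sum_k|\mathcal{L}_k|\leq(\text{number of levels})\cdot\text{const}^n\int(1/f)^n\,dV$ produces. To avoid it, I plan to interpret $\sum_k 2^{nk}\operatorname{vol}(V_k)$ as a dyadic discretisation of the integral appearing in the layer-cake identity
\begin{equation*}
\int_{\mathbb{S}^n}\frac{dV}{f^n}\;=\;n\int_0^\infty\operatorname{vol}(\{f\leq t\})\,t^{-n-1}\,dt.
\end{equation*}
Breaking the right side on the dyadic intervals $[(2+4A)2^{-k-1},(2+4A)2^{-k}]$ and using monotonicity of $t\mapsto\operatorname{vol}(\{f\leq t\})$ yields an estimate of the form $\sum_k 2^{nk}\operatorname{vol}(V_k)\lesssim(2+4A)^n\int(1/f)^n\,dV$, with the factor $1/\log_e 2$ emerging from the dyadic-to-continuous conversion. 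Substituting this back into the single-level count and absorbing the purely dimensional constants $M_n,c_n,n,2^n/(2^n-1),\ldots$ into the numerator $16^n$ gives the claimed bound $\tfrac{(16(2+A))^n}{A_n\log_e 2}\int(1/f)^n\,dV$.
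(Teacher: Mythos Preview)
Your correctness and termination arguments match the paper's; the paper packages the covering invariant as a separate Lemma~2.6, but the content is the same induction.

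For the complexity bound, your route genuinely differs from the paper's. The paper never uses a parent-based localisation lemma or a bounded-overlap constant $M_n$. Instead it first proves a packing bound $\#\operatorname{divide}(B)\leq 16^n$ (Lemma~2.7) and uses it to pass from $\sum_k|\mathcal{L}_k|$ to $16^n\sum_k\#\{B\in\mathcal{B}_k:f(c_B)\leq 2^{-k}\}$; it then bounds the latter count by observing that the \emph{half}-radius balls $B(c_B,2^{-k-1})$ of such $B$ are pairwise disjoint and, by $A$-Lipschitzness from the centre only, contained in $\{f\leq(1+\tfrac{A}{2})2^{-k}\}$. Both proofs finish with the same Fubini/layer-cake conversion, and that is indeed where $1/\ln 2$ appears. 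Your localisation via the parent ball is arguably more direct --- it sidesteps Lemma~2.7 entirely --- but it costs you in the constants: your sublevel threshold is $(2+4A)2^{-k}$ rather than the paper's $(1+\tfrac{A}{2})2^{-k}=\tfrac{2+A}{2}\,2^{-k}$, and the pair $(M_n,c_n)$ replaces the single explicit factor $16^n$.

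That is where your proposal has a real gap: the final ``absorb the dimensional constants into $16^n$'' step is asserted, not verified. With $M_n$ of order $6^n$ (by the same packing reasoning) and $c_n=(2/\pi)^{n-1}$, your bound comes out as $C^n(2+4A)^n/(A_n\ln 2)\cdot\int(1/f)^n\,dV$ with $C>16$, and since $(2+4A)\leq 4(2+A)$ the discrepancy with $(16(2+A))^n$ only compounds. Your argument correctly proves a bound of the right \emph{form}, $K^n(2+A)^n/(A_n\ln 2)\cdot\int(1/f)^n\,dV$ for some absolute $K$, but not the stated constant; to recover $16^n$ you would have to follow the paper's sharper path through the $\#\operatorname{divide}$ bound and the $(1+\tfrac{A}{2})$ sublevel set.
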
 

\begin{algorithm}  
        \caption{ Cover(f) }  
        \begin{algorithmic}[1] 
            \Require  a small $\epsilon > 0$ and a 1-Lipschitz function $f : \mathbb{S}^n \rightarrow (0,1]$
            \Ensure    ﬁnite set $X = {B_{\mathbb{S}^n}(x,r_x)}$ of balls such that $\cup B_{\mathbb{S}^n}(x,r_x) = \mathbb{S}^n$ and $\forall x \in X, r_x \leq f(x) $
            \State $L_{todo} \gets \{B|B \in \mathcal{B}_0\}$
            \State $L_{final} \gets \emptyset  $
            \While { $L_{todo}$ is not empty }
              \State $L_{next} = \emptyset$
              \For{$B_{\mathbb{S}^n}(x,r_x) \in L_{todo}$ }
                    \If{ $ r < f(x) $ }
                        \State $L_{final} \gets L_{final} \cup \{B_{\mathbb{S}^n}(x,r_x)\}.$
                    \Else
                        \State $L_{next} \gets L_{next} \cup divide(B_{\mathbb{S}^n}(x,r_x)).$
                    \EndIf               
              \EndFor
            \State $L_{todo} \gets L_{next}$   
            \EndWhile 
            \State \Return { $L_{final}$}

        \end{algorithmic}  
    \end{algorithm}  
\begin{proof}

 Now we begin the proof of the correctness. From the judgement condition of the algorithm, we see that each output $B(c_B,r_{c_B})$ must satisfy $r_{c_B} \leq f(c_B)$. We will use the following lemma to get the output is an open cover of Sn. 
 
\begin{lemma}\label{Lemma 2.6}
  For all $k \geq 0$, $\mathcal{L}_k \subset \mathcal{B}_k$ and $union(\mathcal{L}_k)\cup union(\mathcal{M}_k) = \mathbb{S}^n$.
  \end{lemma}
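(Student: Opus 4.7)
The plan is to proceed by induction on $k$, treating the two conclusions $\mathcal{L}_k \subset \mathcal{B}_k$ and $union(\mathcal{L}_k) \cup union(\mathcal{M}_k) = \mathbb{S}^n$ simultaneously. The inclusion is essentially a definitional bookkeeping: $\mathcal{L}_0 = \mathcal{B}_0$, and whenever $B \in \mathcal{L}_k \subset \mathcal{B}_k$, the set $divide(B)$ consists by construction of balls drawn from $\mathcal{B}_{k+1}$, so the union of all such $divide(B)$ over the relevant $B \in \mathcal{L}_k$ again lies inside $\mathcal{B}_{k+1}$.

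For the covering equality, the base case $k = 0$ reduces directly to the preceding lemma, which asserts that $\mathcal{B}_0$ covers $\mathbb{S}^n$; since $\mathcal{L}_0 = \mathcal{B}_0$, we have $union(\mathcal{L}_0) = \mathbb{S}^n$ and the claim is trivial. For the inductive step, I would partition
\[
\mathcal{L}_k = \mathcal{L}_k^{\mathrm{stop}} \cup \mathcal{L}_k^{\mathrm{cont}},
\]
where $\mathcal{L}_k^{\mathrm{stop}} = \{B \in \mathcal{L}_k : r_B < f(c_B)\}$ collects the balls that terminate at level $k$ and by definition land in $\mathcal{M}_{k+1}$, while $\mathcal{L}_k^{\mathrm{cont}}$ contains the balls whose children populate $\mathcal{L}_{k+1}$. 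From the definitions, $union(\mathcal{L}_k^{\mathrm{stop}}) \subset union(\mathcal{M}_{k+1})$. The crucial geometric input is the identity $B \subset union(divide(B))$ that was observed in the text; it gives $union(\mathcal{L}_k^{\mathrm{cont}}) \subset union(\mathcal{L}_{k+1})$. Combined with the monotonicity $\mathcal{M}_k \subset \mathcal{M}_{k+1}$, the inductive hypothesis
\[
\mathbb{S}^n = union(\mathcal{L}_k) \cup union(\mathcal{M}_k)
\]
transfers directly to $\mathbb{S}^n = union(\mathcal{L}_{k+1}) \cup union(\mathcal{M}_{k+1})$, closing the induction.

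The main obstacle here is not combinatorial but notational: one first has to reconcile the two uses of $union$ in the statement (the set-theoretic union of families of balls, versus the topological union of the underlying subsets of $\mathbb{S}^n$) and then rigorously justify the identity $B \subset union(divide(B))$. This last claim reduces to the previous lemma: for any $x \in B$, the cover $\mathcal{B}_{k+1}$ of $\mathbb{S}^n$ furnishes some $B'$ with $x \in B'$, and then $B' \cap B \ni x$ certifies $B' \in divide(B)$. Once this bookkeeping is fixed, the induction itself is a short chain of inclusions, with no genuinely new geometric content beyond what Lemma 2.5 already provides.
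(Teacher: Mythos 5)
Your proposal is correct and follows essentially the same route as the paper: induction on $k$, splitting $\mathcal{L}_k$ into stopped balls (absorbed into $\mathcal{M}_{k+1}$) and continuing balls (covered by their children via $B \subset union(divide(B))$), together with the monotonicity $\mathcal{M}_k \subset \mathcal{M}_{k+1}$. The only difference is that you additionally justify the inclusion $B \subset union(divide(B))$ from the covering lemma, a step the paper merely asserts, which makes your write-up slightly more complete but not a different argument.
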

\begin{proof}
$\mathcal{L}_k \subset \mathcal{B}_k$ is from the deﬁnition. We will prove $union(\mathcal{L}_k)∪union(\mathcal{M}_k) = \mathbb{S}^n$ by induction on the $k$. Suppose $k = 0$, we get the lemma from the deﬁnition of $\mathcal{B}_0$. Now suppose it is true for $k = m − 1$. We need to prove that it is true for $k = m$. Notice $\mathcal{M}_{k-1} \subset  \mathcal{M}_k$. For $\forall B \in  \mathcal{L}_{k-1}\supseteq \mathcal{M}_{k-1}$, we can have $divide(B) \subset \mathcal{L}_{k}$ and $B \subset union(divide(B))$. As a result, we get $union(\mathcal{L}_{k})\cup union(\mathcal{M}_{k}) \supseteq union(\mathcal{L}_{k-1}\setminus \mathcal{M}_{k-1})\cup  union(\mathcal{M}_{k-1}) = \mathbb{S}^n.
$
\end{proof} 
Notice from this lemma, the output of our algorithm is an open cover.
Now we will show that the algorithm will terminate. We need to note that if $r_{c_B} < \inf_{x\in M} f(x)$, the algorithm must terminate. In addition, it is easy to see that if the algorithm terminates when the ﬁnal input in the iteration process is $r$, the algorithm must terminate when the input in the iteration process has radius $2r$. As a result, the algorithm must terminate by the induction on the radius.
Before the later complexity analysis, we ﬁrst notice that on the requirement of the theorem, we can ﬁnd an upper bound of $divide(B)$.
\begin{lemma}\label{Lemma 2.7}
We have $\#divide(B) < 16^n$.
\end{lemma}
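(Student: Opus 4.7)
The plan is to reduce the counting problem to a standard packing argument on $\mathbb{S}^n$ and then execute a volume comparison. Writing $r = 2^{-k}$ so that $B = B_{\mathbb{S}^n}(c_B, r) \in \mathcal{B}_k$, each $B' \in \mathrm{divide}(B)$ is of the form $B_{\mathbb{S}^n}(y, r/2)$ with $y \in \mathcal{J}_{k+1}$, and the condition $B \cap B' \neq \emptyset$ combined with the triangle inequality on $\mathbb{S}^n$ gives $d_{\mathbb{S}^n}(y, c_B) < r + r/2 = 3r/2$. So first I would reduce the task to bounding the cardinality of $\mathcal{J}_{k+1} \cap B_{\mathbb{S}^n}(c_B, 3r/2)$, i.e., the number of $(r/2)$-separated points contained inside a spherical cap of geodesic radius $3r/2$.

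The natural second step is a disjoint-packing argument. By the defining separation property of $\mathcal{J}_{k+1}$ (maximal $2^{-(k+1)}$-separated set), distinct such centers $y, y'$ satisfy $d_{\mathbb{S}^n}(y, y') \geq r/2$, so the open geodesic balls $B_{\mathbb{S}^n}(y, r/4)$ are pairwise disjoint; the triangle inequality gives that each one is contained in $B_{\mathbb{S}^n}(c_B, 7r/4)$, since for $z \in B_{\mathbb{S}^n}(y, r/4)$ we have $d(z, c_B) \leq d(z, y) + d(y, c_B) < r/4 + 3r/2 = 7r/4$. Comparing the total volume of these disjoint small balls to the volume of the enclosing cap then yields
$$\#\,\mathrm{divide}(B) \;\leq\; \frac{\mathrm{vol}(B_{\mathbb{S}^n}(c_B, 7r/4))}{\min_{y} \mathrm{vol}(B_{\mathbb{S}^n}(y, r/4))}.$$

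The last step is the explicit volume estimate on $\mathbb{S}^n$. Using the classical formula $\mathrm{vol}(B_{\mathbb{S}^n}(p, \rho)) = A_n \int_0^\rho \sin^{n-1}(t)\,dt$ together with the elementary inequalities $(2/\pi)\,t \leq \sin t \leq t$ valid on $[0, \pi/2]$, I would bound the numerator above by $A_n (7r/4)^n / n$ and the denominator below by $A_n (2/\pi)^{n-1} (r/4)^n / n$, producing a ratio at most $(\pi/2)^{n-1} \cdot 7^n$, which is safely less than $16^n$. The one point that requires care is the regime where $7r/4$ exceeds $\pi/2$ (small $k$, i.e., $r$ close to $1$): there the upper estimate on the numerator must be modified either by using $\mathrm{vol}(\mathbb{S}^n) = A_{n+1}$ as a trivial upper bound, or by splitting the integral at $\pi/2$ and exploiting the symmetry $\sin(\pi - t) = \sin t$. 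I expect this boundary case to be the main obstacle, but since $r \leq 1$ forces $7r/4 < \pi$ in all cases and the constant $16^n$ is comfortably larger than the clean estimate $7^n \cdot (\pi/2)^{n-1}$, there is ample slack to absorb the distortion and close out the proof.
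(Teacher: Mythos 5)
Your proposal is correct and follows essentially the same route as the paper: the same reduction $d_{\mathbb{S}^n}(c_{B'},c_B)<3r/2$, the same packing of pairwise disjoint geodesic balls of radius $r/4$ inside $B_{\mathbb{S}^n}(c_B,7r/4)$, and the same volume comparison via $\mathrm{vol}(B_{\mathbb{S}^n}(p,\rho))=A_n\int_0^\rho\sin^{n-1}(t)\,dt$ with linear bounds on $\sin$ (the paper uses $\sin t\ge t/2$ and lands on $14^n$, you use $\sin t\ge (2/\pi)t$ and land on roughly $(7\pi/2)^n$, both below $16^n$). The boundary case you flag is not actually an issue, since $\sin t\le t$ holds for all $t\ge 0$ so the numerator bound needs no modification even when $7r/4>\pi/2$.
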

\begin{proof}

\end{proof}
 By considering the area of a disk with geodesic radius $r$ on $\mathbb{S}^n$ is the integral of $n−1$ dimensional spheres from radius $0$ to radius $sin(r)$, we get
$$Vol(B_{\mathbb{S}^n}(x,r)) = A_n\int^r_0
sin^{n−1}(t)dt.$$
Since $0 < t \leq 1$ from the design of our algorithm, we have $$\frac{A_n }{n} 2^{−n(k+1)} \leq Vol(B_{\mathbb{S}^n}(x,2^{−k})) \leq \frac{A_n }{n} 2^{−nk}.$$
Take $B \in \mathcal{B}_k$. From the deﬁnition of $divide(B)$ we have $B′ \in  divide(B)$ if and only if $center_{B′} \in (B_{\mathbb{S}^n}(c_{B}, 2^{−k} + 2^{−k-1})$ if and only if $B_{\mathbb{S}^n}(c_{B′},2^{−k−2}) \subset  B_{\mathbb{S}^n}(c_{B′},2^{−k} + 2^{−k−1} + 2^{−k−2}).$ Notice under our assumption, any two $B_{\mathbb{S}^n}(c_{B′},2^{−k−2})$ are disjoint. As a result, we have
$$\#divide(B) Vol(B_{\mathbb{S}^n}(c_{B′},2^{−k−2}) \leq Vol(B_{\mathbb{S}^n}(c_{B′},2^{−k} + 2^{−k−1} + 2^{−k−2}))$$.
Combine the two inequalities, we get $$\#divide(B) \leq\frac{\int^{2^{−k}+2^{−k−1}+2^{−k−2}}_0 sin^{n−1}(t)dt}{\int^{2{−k−2}}_0 sin^{n−1}(t)dt}   ≤\leq\frac{\int^{2^{−k}+2^{−k−1}+2^{−k−2}}_0 t^{n−1}dt}{\int^{2{−k−2}}_0 (\frac{t}{2})^{n−1}dt} ≤ 16^n$$.
Since $f(x) ≤ 1$ and $r_0 = 1$, we have
$$C \leq
\sum_{k=0}^{\infty}
\#\mathcal{L}_k =
\sum_{k=0}^{\infty} \sum_{B\in \mathcal{L}_k}\#divide(B) \leq  16^n
\sum_{k=0}^{\infty}\sum_{B\in \mathcal{L}_k}1
$$ $$\leq 16^n
\sum_{k=0}^{\infty}
\#\{B \in \mathcal{L}_k|r_B \geq f(c_B)\}\leq 16^n
\sum_{k=0}^{\infty}
\#\{B \in \mathcal{B}_k|r_B\geq f(c_B)\} (\ast )$$
To estimate the $\#\{B \in \mathcal{B}_k|r_B \geq f(c_B)\} $, we can use the similar trick as the estimate of the $\#divide(B)$. Finally, we get
$$(\ast) \leq  16^n
\sum_{k=0}^{\infty} \frac{Vol(\cup_{\{B\in \mathcal{B}_k}, 2−k≥f(c_B)\}B_{\mathbb{S}_n}(c_B,2^{−k−1}))}{ Vol(B_{\mathbb{S}_n}(c_B,2^{−k−1}))
}
(\ast \ast )$$
If $2^k \geq f(c_B)$, then $\forall x \in  B_{\mathbb{S}_n}(c_B,2^{−k−1})$, we have $$f(x) ≤ A2^{−k−1} + f(c_B) ≤ A2^{−k−1} + 2^{−k} = (1 + \frac{A}{2}
)2^{−k}.$$ 
So we get
$$(\ast \ast ) \leq 16^n
\sum_{k=0}^{\infty}
16^n \sum_{k=0}^{\infty} \frac{Vol(\{x \in \mathcal{S}^n|f(x) \leq (1 +\frac{A}{2})2^{−k}\})}{ Vol(B_{\mathbb{S}_n}(c_B,2^{−k−1}))} \leq \frac{16^n}{A_n} \sum_{k=0}^{\infty} \frac{Vol(\{x \in \mathcal{S}^n|f(x) \leq (1 +\frac{A}{2})2^{−k}\})}{ 2^{(−k−1)n}}$$ $$\leq \frac{n16^n}{A_n} \int_{0}^{\infty}2^{nt} Vol(\{x \in \mathcal{S}^n| (\frac{f}{2+A})^n \leq 2^{−nt}\})dt \leq \frac{n16^n}{A_n} \int_{0}^{\infty}2^{nt} \int_{\mathbb{S}^n} \mathbb{I}_{ (\frac{f}{2+A})^n \leq 2^{−nt}} dVdt ~(\ast \ast \ast )$$

From Fubini-Tonelli theorem, we have
$$(\ast \ast \ast) \leq
\frac{n16^n}{A_n}  \int_{\mathbb{S}^n}\int_{0}^{\infty}
2^{nt} \mathbb{I}_{ (\frac{f}{2+A})^n \leq 2^{−nt}}dtdV \leq
\frac{n16^n}{nA_n \log_e2} \int_{\mathbb{S}^n}\int_{0}^{(\frac{2+A}{f})^n }dsdV
\leq 
\frac{(16(2+A))^n}{A_n \log_e2}\int_{\mathbb{S}^n}(\frac{1}{f})^ndV $$
As a result, we have shown that the complexity is bounded by $\frac{(16(2+A))^n}{A_n \log_e2}\int_{\mathbb{S}^n}(\frac{1}{f})^ndV $

\begin{remark}
In fact the above algorithm can be generalised to a certain compact Riemannian manifold. We can generalise the initial notations in this section to any compact Riemannian manifold M and change the $\mathbb{S}^n$  appeared in the algorithm and lemmas to $M$, which gives the algorithm and the correctness proof.

The complexity analysis can be generalised to a certain compact Riemannian manifold, through the Bishop-G$\ddot{u}$nther inequality in\cite{8} which give the volume estimates of geodesic balls by curvature. The method is to use the Bishop-G$\ddot{u}$nther inequalities to give a new estimate of $\#divide(B)$ and follow the other part of the complexity analysis in the sphere. Notice we need to add some conditions on the compact Reimannian manifold to satisfy the requirement of this inequaltiy.

\end{remark}

\end{proof}
\end{subsection}

\end{section}

\begin{section}{Application to the homology of real semialgebraic set by homotopy equivalence
}

Notice we have designed an adaptive grid algorithm on the sphere. We will prove that $\frac{1}{ \kappa(F,G,x)}$ is a $D-Lipschitz$ function. As a result, ﬁnding the maximum of $\kappa(F,G,x)$ is a case on how to ﬁnd a minimum of a Lipschitz continuous function. Actually, we can assume that $\kappa(F,G,x)$ is positive, which will hold if there is no singular point. Then we will give a reformulation of the Niyogi-Smale-Weinberger theorem in a unit sphere. Finally we would like to apply our grid algorithm to generate a ﬁnite open balls whose union is homotopical to $S(F,G)$. As a result, the calculation of the homology group of semialgebraic set is equivalent to the calculation on the abstract simple set generated by these open balls since the homopoty equivalence implies the isomorphism of homology groups.

\begin{subsection}{Basic notions}
To a degree tuple $\mathbf{d} = (d_1,...,d_q)$ we associate it a linear space $\mathcal{H}_d[q]$ of polynomial systems $F = (f_1,...f_q)$ where $f_i  \in R[X_o,X_1,...,X_n]$ is homogeneous of degree $d_i$. There is a Euclidean inner product called Weyl inner product, deﬁned as follows. We have $h′ = \sum_{|\mathbf{a}|=d}h′_{\mathbf{a}}X{\mathbf{a}}$ in $R[X_0,...X_n]$, $\mathbf{a} = (a_0,...,a_n) \in  N^{n+1}$ where $|\mathbf{a}| := a0 + ... + an$. for homogeneous polynomials $h = \sum_{|\mathbf{a}|=d}h_{\mathbf{a}}X{\mathbf{a}}$, we have the deﬁnition.

\begin{definition}

$< h,h′ >= \sum \binom{\mathbf{d}}{\mathbf{a}}h_ah′_a
$
where $\binom{\mathbf{d}}{\mathbf{a}}$ denotes the multinomial coeﬃcient.
\end{definition}

For any $q$-tuples of homogeneous polynomials $F,F′ \in Hd[q]$ where $F = (f_1,...f_q)$ and $F′ = (f′_1,...f′_q)$, we have
$$< F,F′ >:= \sum^q _{j=1} < f_j,f′_ j >. $$
In other words, the Weyl inner product is a dot product with respect to a specially weighted monominal basis. In particular, it is invariant under orthogonal transformations of the homogeneous variables $(X_0,...,X_n)$. That is, for any orthogonal transformation $u : R^n \rightarrow R^n$ and any $f \in \mathcal{H}=_d[q]$, we have $||F|| = ||F \circ u||$. In all the later texts , all normes in spaces $\mathcal{H}_d[q]$ refer to the norm induced by the Weyl inner product.

For a point $x \in R^{n+1}$ and a system $F \in \mathcal{H}=_d[q]$ at $x \in  \mathbb{S}^n$ has been well studied. We deﬁne it as ∞ when the derivative $DF(x)$ of $F$ at $x$ is not surjective, otherwise as $$\mu_{norm}(F,x) := ||F||||DF(x)^+\Delta||$$, where the norm $||DF(x)^+\Delta||$ is the spectral norm. We also deﬁne the following variant of $\mu_{norm}$, more speciﬁc to homogeneous systems,
$$\mu_{proj}(F,x) := \mu_{norm}(F|\mathcal{T}_x)$$ where $T_x = {x}^{\perp}$
and $\mathcal{T}_x := x + T_x$. 
(The number $\mu_{norm}(F|\mathcal{T}_x)$ is well-deﬁned after identifying $\mathcal{T}_x$ with $R^n$.)

The numbers $\mu_{norm(F,x)}$ and $\mu_{proj(F,x)}$ measure the sensitivity of the zero $x$ of $F$ when $F$ is slightly perturbed. They are consequently useful at a zero, or near a zero, of the system $F$. To deal with points in Sn far away from the zeros of $F$, in particular to understand how much $F$ needs to be perturbed to make such a point a zero, a more global notion of conditioning is needed. 

\begin{definition}\label{Deﬁnition 3.2}
The real homogeneous condition number of $F \in \mathcal{H}_d[q]$ at $x \in \mathbb{S}^n$ is $\kappa(F,x) := ( \frac{1}{\mu_{proj}(F,x)^2}  + \frac{||F(x)||^2}{||F||^2}  )^{-\frac{1}{2}}$ , where we use the conventions $\infty^{-1} := 0, 0^{-1} := \infty$, and $\kappa(0,x) := \infty$. We further deﬁne $\kappa(F) := \max_{x\in\mathbb{S}^n }\kappa(F,x)$.

\end{definition} 
If $q > n$ (that is, if the system $F$ is overdetermined) then $DF(x)|\mathcal{T}_x$ cannot be surjective and $\kappa(F,x) = \frac{||F||}{||F(x)||} $ for all $x \in \mathbb{S}^n$ Thus, $\kappa(F) < \infty$ if and only if $F$ has no zeros in $\mathbb{S}^n$. The special case $F(x) = 0$ is worth highlighting. 

\begin{lemma}\label{Lemma 3.3.}
(\cite{1} Lemma 4.3.) For any $F \in \mathcal{H}_d[q]$ and $x \in\mathbb{S}^n$, if $F(x) = 0$, then
$\kappa(F,x) = \mu_{proj}(F,x) = \mu_{norm}(F,x).$
\end{lemma}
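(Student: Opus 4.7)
The identity $\kappa(F,x)=\mu_{\mathrm{proj}}(F,x)$ is immediate from Definition 3.2: substituting $F(x)=0$ kills the term $\|F(x)\|^2/\|F\|^2$, leaving $\kappa(F,x)^{-2}=\mu_{\mathrm{proj}}(F,x)^{-2}$, and the conventions $0^{-1}:=\infty$, $\infty^{-1}:=0$ handle the degenerate cases $F=0$ or $DF(x)|_{T_x}$ not surjective. So the only real content of the lemma is the second equality $\mu_{\mathrm{proj}}(F,x)=\mu_{\mathrm{norm}}(F,x)$.

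\textbf{Second equality via Euler's identity.} The plan is to exploit Euler's relation for homogeneous polynomials: for each index $i$, $Df_i(x)\cdot x=d_i f_i(x)$, so $DF(x)\,x=\operatorname{diag}(d_1,\dots,d_q)\,F(x)$. Under the hypothesis $F(x)=0$, this forces $DF(x)\,x=0$, i.e.\ $x\in\ker DF(x)$; equivalently, the row space $(\ker DF(x))^{\perp}$ is contained in $T_x=x^{\perp}$. Now the Moore--Penrose pseudoinverse $DF(x)^{+}\colon R^{q}\to R^{n+1}$ sends each $y$ to the minimum-norm element of $DF(x)^{-1}(y)$, which therefore lies in $(\ker DF(x))^{\perp}\subseteq T_x$, and this minimum-norm preimage is precisely what $(DF(x)|_{T_x})^{+}y$ returns. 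Hence $DF(x)^{+}$ and $(DF(x)|_{T_x})^{+}$ agree as linear maps $R^q\to T_x\hookrightarrow R^{n+1}$, and their spectral norms coincide; right-multiplication by the diagonal $\Delta=\operatorname{diag}(\sqrt{d_i})$ preserves this, so $\|DF(x)^{+}\Delta\|=\|(DF(x)|_{T_x})^{+}\Delta\|$.

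\textbf{Main obstacle and conclusion.} What remains is to verify that the global factor $\|F\|$ appearing in $\mu_{\mathrm{norm}}(F,x)=\|F\|\cdot\|DF(x)^{+}\Delta\|$ matches the norm factor implicit in $\mu_{\mathrm{proj}}(F,x)=\mu_{\mathrm{norm}}(F|\mathcal{T}_x)$. I expect this to be the only nontrivial bookkeeping step of the proof: one has to unfold what $\mu_{\mathrm{norm}}$ means on the affine restriction and check that the Weyl norm of $F$ is unchanged, which follows because $\mathcal{T}_x$ is identified with $R^{n}$ through an orthogonal transformation and the Weyl norm is orthogonally invariant (Section 3.1). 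Once this identification is made explicit, the computation telescopes to $\mu_{\mathrm{norm}}(F|\mathcal{T}_x)=\|F\|\cdot\|(DF(x)|_{T_x})^{+}\Delta\|=\mu_{\mathrm{norm}}(F,x)$, and combined with the first paragraph we obtain $\kappa(F,x)=\mu_{\mathrm{proj}}(F,x)=\mu_{\mathrm{norm}}(F,x)$. The algebraic heart of the lemma is nothing more than the one-line application of Euler's identity; the rest is making the orthogonal-invariance bookkeeping precise.
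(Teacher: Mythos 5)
This lemma is quoted verbatim from \cite{1} (Lemma 4.3) and the paper supplies no proof of its own, so the only comparison is with the source; your argument is correct and is essentially the same as the one given there. Reading the first equality off the definition of $\kappa$ once $F(x)=0$ kills the $\|F(x)\|^{2}/\|F\|^{2}$ term, and getting the second from Euler's identity ($DF(x)\,x=\mathrm{diag}(d_{1},\dots,d_{q})F(x)=0$, hence $x\in\ker DF(x)$, so $DF(x)^{+}$ and $(DF(x)|_{T_x})^{+}$ coincide as maps into $T_x$ and have equal spectral norms after composing with $\Delta$), together with the orthogonal invariance of the Weyl norm to identify the $\|F\|$ factors under the identification of $\mathcal{T}_x$ with $R^{n}$, is exactly the intended proof.
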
 
We will introduce the following two propositions frequently since it makes the statement that take a positive $A-Lipchitz (A ≥ 1)$ continuous function $f : M \rightarrow (0,1]$ on the algorithms in the $§2$ reasonable. 

\begin{theorem}\label{Theorem 3.4. }
 (\cite{1} Corollary 4.5.) For any $F \in \mathcal{H}_d[q]$ and any $x \in \mathbb{S}^n, \kappa(F,x) \geq 1$. 

\end{theorem}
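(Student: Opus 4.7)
The plan is to recast $\kappa(F,x)\geq 1$ as the single inequality
\[
\frac{1}{\mu_{proj}(F,x)^2}+\frac{\|F(x)\|^2}{\|F\|^2}\leq 1, \quad (\star)
\]
and then extract both summands on the left from one Bessel-type bound in the Hilbert space $\mathcal{H}_d[q]$ endowed with the Weyl inner product. The point is that the Weyl pairing is precisely designed so that evaluation and first-order differentiation at a unit vector $x$ are realized by orthogonal reproducing vectors of controlled norm.

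First I would record the reproducing-kernel structure on the space of degree-$d$ homogeneous polynomials in $n+1$ variables. For $\|x\|=1$, evaluation at $x$ is represented by $K_x(y)=(x\cdot y)^d$ with $\|K_x\|=1$, while for each unit $v\in T_x$ the directional derivative $\partial_v$ at $x$ is represented by $K_x^{v}(y)=d(x\cdot y)^{d-1}(v\cdot y)$, for which a direct computation with the Weyl pairing gives $\|K_x^{v}\|^2=d$. The orthogonality $\langle K_x,K_x^{v}\rangle=0$ for $v\perp x$ follows from evaluating $\partial_v K_x$ at $x$, and similarly $\langle K_x^{v},K_x^{v'}\rangle=0$ for $v\perp v'$ with $v,v'\in T_x$. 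Hence for any orthonormal basis $v_1,\dots,v_n$ of $T_x$, the family $\{K_x\}\cup\{d^{-1/2}K_x^{v_j}\}_{j=1}^n$ is orthonormal in $\mathcal{H}_d$.

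Next, applying Bessel's inequality componentwise to $F=(f_1,\dots,f_q)$ with $f_i\in\mathcal{H}_{d_i}$ and summing over $i$, I obtain
\[
\|F\|^2 \;\geq\; \sum_{i=1}^{q} f_i(x)^2+\sum_{i=1}^{q}\frac{1}{d_i}\sum_{j=1}^{n}\bigl(\partial_{v_j}f_i(x)\bigr)^2 \;=\; \|F(x)\|^2+\bigl\|\Delta^{-1}DF(x)|_{T_x}\bigr\|_F^2,
\]
where $\|\cdot\|_F$ is the Frobenius norm. Unfolding the definition of $\mu_{proj}$ one checks that, when the restricted derivative is surjective, $\mu_{proj}(F,x)^{-2}=\sigma_{\min}(\Delta^{-1}DF(x)|_{T_x})^2/\|F\|^2$; in the non-surjective case $\mu_{proj}=\infty$ and $(\star)$ already follows from $\|F(x)\|\leq\|F\|$, which is the first summand of the bound. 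Combining the displayed inequality with the elementary singular-value estimate $\sigma_{\min}(B)^2\leq\|B\|_F^2$ and dividing by $\|F\|^2$ yields $(\star)$, and hence $\kappa(F,x)\geq 1$.

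The main technical checkpoint is the orthonormality claim in the first step, which reduces to a short combinatorial identity for the Weyl pairing of monomials; the normalization $\|K_x^{v}\|^2=d$ is the crucial constant that forces the $1/d_i$ weights in the Bessel bound to align with the $\Delta^{-1}=\mathrm{diag}(1/\sqrt{d_i})$ appearing in the definition of $\mu_{proj}$. Once that is settled, the rest is a one-line application of Bessel plus the trivial bound $\sigma_{\min}\leq\|\cdot\|_F$, and the degenerate cases (non-surjective restricted derivative, and $F=0$ handled by the convention $\kappa(0,x)=\infty$) follow directly from Definition 3.2.
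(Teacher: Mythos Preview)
Your argument is correct. The reformulation $(\star)$ is an immediate unwinding of Definition~\ref{Deﬁnition 3.2}, the reproducing-kernel identities for the Weyl inner product (with $\|K_x\|=1$, $\|K_x^v\|^2=d$, and the stated orthogonalities for $v\perp x$ and $v\perp v'$) are standard and easily verified, and the passage from Bessel's inequality to $(\star)$ via $\sigma_{\min}(\Delta^{-1}DF(x)|_{T_x})^2\le\|\Delta^{-1}DF(x)|_{T_x}\|_F^2$ is clean. The identification $A^{+}\Delta=(\Delta^{-1}A)^{+}$ for surjective $A=DF(x)|_{T_x}$, which you implicitly use to get $\mu_{proj}(F,x)^{-1}=\sigma_{\min}(\Delta^{-1}A)/\|F\|$, is a one-line check and worth stating explicitly. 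The degenerate cases are handled correctly.

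As for comparison: the present paper does not supply its own proof of this statement---it is quoted verbatim as \cite{1}~Corollary~4.5 and used as a black box. Your reproducing-kernel/Bessel argument is in fact the approach taken in \cite{1} (and in the B\"urgisser--Cucker \emph{Condition} framework more generally), so you have essentially reconstructed the original proof rather than found an alternative route.
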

\begin{theorem}\label{Theorem 3.5. }
 (\cite{1} Proposition 4.7.) For $F \in \mathcal{H}_d[q]$ , the map $\mathbb{S}^n → \kappa(F,x)^{−1}$ is $D−Lipschitz$ continuous with respect to the Euclidean metric and sphere metric on $\mathbb{S}^n$.
\end{theorem}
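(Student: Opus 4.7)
The plan is to use a ``distance to ill-posedness'' interpretation of $\kappa(F,x)^{-1}$ together with the orthogonal invariance of the Weyl inner product. Introduce
\[
\Sigma_x \;:=\; \bigl\{ G \in \mathcal{H}_d[q] \;:\; G(x) = 0 \text{ and } DG(x)|_{T_x} \text{ fails to be surjective} \bigr\},
\]
a closed cone through the origin of $\mathcal{H}_d[q]$. The first step is to verify an Eckart--Young-type identity
\[
\kappa(F,x)^{-1}\,\|F\| \;=\; \mathrm{dist}_W(F,\Sigma_x),
\]
where $\mathrm{dist}_W$ denotes distance in the Weyl norm. Its proof uses that, up to the normalization encoded in $\Delta$, the two summands in Definition~\ref{Deﬁnition 3.2} measure respectively the distance from $F$ to the linear subspace $\{G:G(x)=0\}$ and to the determinantal variety $\{G:DG(x)|_{T_x}\text{ singular}\}$; these two obstructions are orthogonal in the Weyl metric, so their squared distances add.

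For $x,y\in\mathbb{S}^n$, let $u\in O(n+1)$ be the simple rotation in the plane $\mathrm{span}(x,y)$ sending $x$ to $y$ and equal to the identity on $\{x,y\}^{\perp}$; a direct computation gives $\|u-I\|_{\mathrm{op}}\le\|x-y\|$. Because $u$ is an isometry and $u(T_x)=T_y$, one checks the exact identity $\Sigma_y=\Sigma_x\circ u^{-1}$. Thus for any $G\in\Sigma_x$ the system $G\circ u^{-1}$ lies in $\Sigma_y$, and by the triangle inequality
\[
\mathrm{dist}_W(F,\Sigma_y)\;\le\;\|F-G\|_W+\|G-G\circ u^{-1}\|_W.
\]

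The technical heart of the argument is the Weyl-norm estimate
\[
\|G-G\circ u^{-1}\|_W \;\le\; D\,\|G\|_W\,\|u-I\|_{\mathrm{op}}\qquad \bigl(G\in\mathcal{H}_d[q],\ u\in O(n+1)\bigr),
\]
which I would prove by writing $u_t=\exp(tA)$ for an antisymmetric $A$, differentiating $t\mapsto G\circ u_t^{-1}$ in $t$, and using the representation-theoretic fact that the infinitesimal $O(n+1)$-action on $\mathcal{H}_{d_i}$ has operator norm at most $d_i\le D$ in the Weyl metric (equivalently, the irreducibles appearing in $\mathcal{H}_{d_i}$ have weights bounded by $d_i$). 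Integrating from $0$ to $1$ along $u_t$ yields the claim. This is the step I expect to be the main obstacle; everything else is bookkeeping.

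To finish, pick $G\in\Sigma_x$ attaining $\mathrm{dist}_W(F,\Sigma_x)$. Since $\Sigma_x$ is a cone through the origin, optimality along the ray $\lambda\mapsto\lambda G$ forces $\langle F,G\rangle_W=\|G\|_W^2$, so $\|G\|_W^2=\|F\|_W^2-\|F-G\|_W^2\le\|F\|_W^2$. Combined with the previous inequality this gives
\[
\mathrm{dist}_W(F,\Sigma_y)\;\le\;\mathrm{dist}_W(F,\Sigma_x)+D\,\|F\|_W\,\|x-y\|.
\]
Dividing by $\|F\|_W$ and symmetrizing in $x\leftrightarrow y$ yields $|\kappa(F,x)^{-1}-\kappa(F,y)^{-1}|\le D\|x-y\|$, the $D$-Lipschitz statement in the chordal (Euclidean) metric on $\mathbb{S}^n$; the spherical metric version follows because geodesic distance on $\mathbb{S}^n$ dominates chordal distance.
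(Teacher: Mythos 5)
The paper itself gives no proof of this statement; it is imported verbatim from \cite{1} (Proposition 4.7), so there is nothing internal to compare against. Judged on its own, your route is the standard one for results of this type (and essentially the one behind the cited source): the Eckart--Young/condition-number-theorem identity $\kappa(F,x)^{-1}\,\|F\| = \mathrm{dist}_W(F,\Sigma_x)$, the identity $\Sigma_y = \Sigma_x\circ u^{-1}$ from orthogonal invariance of the Weyl norm, and a perturbation bound for the $O(n+1)$-action. Two remarks. First, in the Eckart--Young step, "the two obstructions are orthogonal so squared distances add" is not literally a statement about distances to two separate sets; the correct justification is the orthogonal decomposition of each $\mathcal{H}_{d_i}$ at $x$ into multiples of $\langle x,X\rangle^{d_i}$, the space $\langle x,X\rangle^{d_i-1}\ell(X)$ with $\ell(x)=0$, and the polynomials whose value and derivative vanish at $x$; within it the value part of the distance is $\|F(x)\|$ and the derivative part is $\sigma_{\min}(\Delta^{-1}DF(x)|_{T_x})$, which together give exactly $\kappa(F,x)^{-1}\|F\|$. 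This is a known result, so flagging it as a step to verify is acceptable. Second, and more substantively: your proposed proof of the key estimate (differentiate along $u_t=\exp(tA)$ and integrate the infinitesimal bound $\le D\|G\|\,\|A\|_{\mathrm{op}}$) yields $\|G-G\circ u^{-1}\|_W \le D\|G\|\,\theta = D\|G\|\,d_{\mathbb{S}}(x,y)$, where $\theta$ is the rotation angle, not the claimed $D\|G\|\,\|u-I\|_{\mathrm{op}} = D\|G\|\,\|x-y\|$, because $\|u-I\|_{\mathrm{op}} = 2\sin(\theta/2) < \theta$. That weaker bound proves the sphere-metric half of the theorem but not the (stronger) Euclidean-metric half with constant $D$. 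The fix uses exactly the ingredient you already invoke: decompose $G$ into weight vectors for the circle generated by $A$ (these are Weyl-orthogonal since the action is isometric), note all weights $k$ satisfy $|k|\le d_i\le D$, and use $|e^{ik\theta}-1|\le |k|\,|e^{i\theta}-1|$, which gives the bound with $\|u-I\|_{\mathrm{op}}$ directly, without integrating. The remaining steps are sound: $\Sigma_y=\Sigma_x\circ u^{-1}$ is correct, and the first-order optimality argument giving $\|G\|_W\le\|F\|_W$ works (for $F\neq 0$; the case $F=0$ is trivial since $\kappa(0,x)^{-1}=0$), although you could skip it entirely by moving the isometry onto $F$ instead of onto the minimizer: $\mathrm{dist}_W(F,\Sigma_y)=\mathrm{dist}_W(F\circ u,\Sigma_x)$ and $|\mathrm{dist}_W(F\circ u,\Sigma_x)-\mathrm{dist}_W(F,\Sigma_x)|\le\|F\circ u-F\|_W$, after which the same estimate finishes the proof.
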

Notice we are doing research on the semialgebraic set. As a result, we have to generalise the above deﬁnitions to the semialgebraic set.

We consider (closed) homogeneous semialgebraic systems, i.e., systems of the form
$f_1(x) = 0,...,f_q(x) = 0$ and $g1(x) \geq 0,...,g_s(x) \geq 0, ~~(\ast\ast\ast)$
where the $f_i$ and the $g_j$ are homogeneous polynomials in $R[X_0,X_1...,X_n]$. The system is a element $(F,G) \in \mathcal{H}_d[q + s]$. The set of solutions $x \in \mathbb{S}^n$ of system $(\ast\ast\ast)$, which we will denote by $S(F,G)$, is a spherical basic semialgebraic set. Needless to say, we do allow for the possibility of having $q = 0$ or $s = 0$. These correspond with systems having only inequalities (resp. only equalities.)

To a homogeneous semialgebraic system $(F,G)$ we associate a condition number $κ^{\ast}(F,G)$ and $κ(F,G,x)$ as follows. For a subtuple $L = (g_{j_1},...,g_{j_l})$ of $G$, let $F_L$ denote the system obtained from F by appending the polynimials from L, that is,
$F^L := (f_1,...,f_q,g_{j_1},...,g{j_l}) \in \mathcal{H}_d[q + l]$
(where now d denotes the appropriate degree pattern in $\mathbb{N}^{q+l}$ ). Abusing this notation, we will frequently use set notations $L \subset G$ or $g \in G $to denote subtuples or coeﬃcients of $G$.

\begin{definition}\label{Deﬁnition 3.6.}
Let $q \leq n+1$, $(F,G) \in \mathcal{H}_d[q + s]$. The condition number of the homogeneous semialgebraic system $(F,G)$ is deﬁned as
$$\kappa^{\ast}(F,G) := \max_{L\subset G, q +|L|\leq n+1}\kappa(F^L).$$
$$\kappa(F,G,x) := \max_{L\subset G, q +|L|\leq n+1}\kappa(F^L,x).$$
\end{definition} 

In addition, we need to generalise the D-Lipschitz condition from $\kappa(F)$ to the $\kappa(F,G,x)$. 
\begin{theorem}\label{Theorem 3.7.}
 
\end{theorem}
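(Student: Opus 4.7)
The statement to prove (evidently the intended content of Theorem~\ref{Theorem 3.7.}, which is left blank in the excerpt but is flagged by the preceding paragraph as the generalization from $\kappa(F,x)$ to $\kappa(F,G,x)$) is that for any $(F,G) \in \mathcal{H}_d[q+s]$, the map $x \mapsto \kappa(F,G,x)^{-1}$ on $\mathbb{S}^n$ is $D$-Lipschitz with respect to both the Euclidean and the spherical metric. My plan is to reduce this to Theorem~\ref{Theorem 3.5. } by the observation that a pointwise minimum of a finite family of $D$-Lipschitz functions is again $D$-Lipschitz, so no genuinely new analytic ingredient is required.

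First I would take reciprocals in Definition~\ref{Deﬁnition 3.6.} to rewrite
$$\kappa(F,G,x)^{-1} \;=\; \min_{L \subset G,\; q+|L|\leq n+1} \kappa(F^L,x)^{-1}.$$
The index set is finite (at most $2^s$ subtuples $L$ of $G$ need be considered), so minima are attained and standard finite-family arguments apply. This reformulation is the essential bridge between the semialgebraic conditioning and the purely equational conditioning controlled by Theorem~\ref{Theorem 3.5. }.

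Next I would apply Theorem~\ref{Theorem 3.5. } separately to each system $F^L \in \mathcal{H}_{d'}[q+|L|]$. The degree tuple of $F^L$ is a subtuple of the degree tuple of $(F,G)$, so the maximum degree appearing in $F^L$ is bounded above by $D$. Hence each map $x \mapsto \kappa(F^L,x)^{-1}$ is $D_{F^L}$-Lipschitz with $D_{F^L} \leq D$, and in particular $D$-Lipschitz on $\mathbb{S}^n$ in both the Euclidean and the spherical metric.

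Finally I would invoke the elementary lemma that if $\{h_L\}_L$ is a finite collection of $D$-Lipschitz real-valued functions on a metric space and $h := \min_L h_L$, then $h$ is $D$-Lipschitz. For points $x,y$ and any $L_0$ achieving $h(y) = h_{L_0}(y)$, one has $h(x) - h(y) \leq h_{L_0}(x) - h_{L_0}(y) \leq D\, d(x,y)$, and the symmetric inequality gives $|h(x) - h(y)| \leq D\, d(x,y)$. Applied to the family $\{\kappa(F^L,\cdot)^{-1}\}$, this yields the desired Lipschitz estimate for $\kappa(F,G,x)^{-1}$. The main mathematical content lives entirely in Theorem~\ref{Theorem 3.5. } (quoted from \cite{1}); the only thing to verify carefully is that the ambient $D$ of $(F,G)$ uniformly dominates the Lipschitz constants of every $F^L$, which it does by inclusion of degree patterns. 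I do not anticipate a substantive obstacle beyond this bookkeeping.
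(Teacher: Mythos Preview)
Your proposal is correct and follows essentially the same approach as the paper: both arguments reduce to Theorem~\ref{Theorem 3.5. } by choosing, at one of the two points, a subtuple $L$ realizing the minimum $\kappa(F,G,\cdot)^{-1}=\min_L \kappa(F^L,\cdot)^{-1}$ and then invoking the $D$-Lipschitz bound for $\kappa(F^L,\cdot)^{-1}$. Your formulation via the abstract ``pointwise minimum of $D$-Lipschitz functions is $D$-Lipschitz'' lemma is exactly the paper's inline computation, just packaged more cleanly.
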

 For $(F,G) \in \mathcal{H}_d[q + s]$, the map $\mathbb{S}^n \rightarrow \kappa(F,G,x)^{−1}$ is $D −Lipschitz$ continuous with respect to the Euclidean metric and sphere metric on $\mathbb{S}^n$.
\begin{proof}
Take two points x,y ∈ Sn and suppose $\kappa(F,G,x) \geq \kappa(F,G,y) ≥ 0$. From the deﬁnition of $\kappa(F,G,x)$, there is a tuple $L \subset G $such that$\frac{1}{\kappa(F,G,x)} = 1\frac{1}{\kappa(F^L,x)}.$ We have $$| \frac{1}{\kappa(F,G,y)} - \frac{1}{\kappa(F,G,x)}| = \frac{1}{\kappa(F,G,y)} - \frac{1}{\kappa(F,G,x)} = \frac{1}{\kappa(F,G,y)} - \frac{1}{\kappa(F^L,x)}  $$ $$\leq  \frac{1}{\kappa(F^L,y)} - \frac{1}{\kappa(F^L,x)} \leq  D||x-y||\leq Dd_{\mathbb{S}}(x,y)$$
\end{proof}

For a nonempty subset $W \subset E$, let $d_W(x) := inf_{p\in W} ||x-p||$ the distance from $x$ to $W$.

\begin{definition}\label{Deﬁnition 3.8}
 The medial axis of $W$ is deﬁned as the closure of the set
$$\Delta_W := \{x \in E|\exists p,q \in W,p \neq q ~~and~~ ||x-p|| = ||x-q|| = d_W(x)\}$$. The reach or local feature size of W at a point $p \in W$ is deﬁned as$ \tau(W,p) := d_{\Delta_W} (p)$. The (global) reach of $W$ is deﬁned as $\tau(W) :=
inf_{p\in W}\tau(W,p)$. We also set $\tau(\emptyset) := +\infty.$
\end{definition}

In addition, $\tau(W)$ is given by $\inf_{x\in\Delta_W} d_W(x)$. That is a start point of our generalisation. Notice that we can also characterize $\tau(W)$ as the maximum of all $\epsilon$ such that for every $x \in E$ with $d_W(x) < \epsilon$, there exists a unique point $p \in W$ with $||-−p|| = d_W(x)$. We will denote this unique point by $\pi_W(x)$.
The \cite{1} deﬁnes $T(W) : \{x \in E|d_W(x) < \tau(W)\}$ and shows that 

\begin{theorem}\label{Theorem 3.9} 
(\cite{1} \textbf{Proposition 2.2.}) if $\tau(W) > 0$, then $\pi_W : T(W) \rightarrow W$ is continuous and the map
$$T(W)\times [0,1] \rightarrow T(W),~~(x,t) \rightarrow t\pi_W(x) + (1-t)x$$
is a deformation retract of $T(W)$ onto $W$.

\end{theorem}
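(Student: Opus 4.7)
The plan is to split the statement into two halves: continuity of $\pi_W$ on $T(W)$, and then the formal verification that $H(x,t) := t\pi_W(x)+(1-t)x$ realises a deformation retract of $T(W)$ onto $W$. Throughout I would take $W$ to be closed (this is implicit in the definition of $\tau(W)$, since otherwise nearest points need not exist), and recall that for $x\in T(W)$ the condition $d_W(x)<\tau(W)$ is exactly what guarantees that the nearest point in $W$ is \emph{unique}, which is the only property beyond the definition that is really needed.

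For continuity of $\pi_W$, fix $x\in T(W)$ and a sequence $x_n\to x$. Since $d_W$ is $1$-Lipschitz, $d_W(x_n)\to d_W(x)<\tau(W)$, so for large $n$ each $\pi_W(x_n)$ is well defined. The sequence $(\pi_W(x_n))$ is bounded, because $||\pi_W(x_n)-x_n||=d_W(x_n)$ and $(x_n)$ is bounded. Now I would run the standard subsequence-of-subsequence trick: extract $\pi_W(x_{n_k})\to p$; by closedness of $W$, $p\in W$, and passing to the limit in $||\pi_W(x_{n_k})-x_{n_k}||=d_W(x_{n_k})$ yields $||p-x||=d_W(x)$, so $p$ is a nearest point to $x$ in $W$. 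Uniqueness of the nearest point on $T(W)$ forces $p=\pi_W(x)$, and since every subsequence has a further subsequence converging to the same limit, the full sequence converges.

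For the deformation retract, the key geometric observation is that the segment from $x$ to $\pi_W(x)$ stays inside $T(W)$. Writing $y=H(x,t)$, a direct computation gives $||y-\pi_W(x)||=(1-t)\,d_W(x)$, so $d_W(y)\leq (1-t)\,d_W(x)<\tau(W)$, i.e.\ $y\in T(W)$. Moreover, if some $q\in W$ satisfied $||y-q||<(1-t)\,d_W(x)$, then the triangle inequality would give $||x-q||<t\,d_W(x)+(1-t)\,d_W(x)=d_W(x)$, contradicting minimality; hence $d_W(y)=(1-t)\,d_W(x)$ and in fact $\pi_W(y)=\pi_W(x)$, a convenient (though not strictly necessary) compatibility. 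Continuity of $H$ then follows from its explicit formula together with the continuity of $\pi_W$ just established, and the boundary identities $H(x,0)=x$, $H(x,1)=\pi_W(x)\in W$, and $H(p,t)=p$ for $p\in W$ (because $\pi_W(p)=p$) are immediate from the formula, giving a strong deformation retract.

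The main obstacle is the continuity of $\pi_W$: since $\pi_W$ is defined by a pointwise minimisation, one cannot simply exchange the limit with the projection. The strict inequality $d_W(x)<\tau(W)$ is exactly what lets one turn convergence of distances into convergence of the actual nearest points, via the uniqueness property built into the definition of the reach. Everything after that step is formal from the formula for $H$ and the triangle inequality.
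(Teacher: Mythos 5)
Your proof is correct and matches the approach the paper takes for its own generalization (\ref{Theorem 3.12}): continuity of $\pi_W$ is established by bounding $\pi_W(x_n)$ via $\|\pi_W(x_n)-x_n\|=d_W(x_n)$, extracting a convergent subsequence, showing any limit point is a nearest point and hence equals $\pi_W(x)$ by uniqueness, and then concluding the full sequence converges. Where the paper dismisses the retraction as ``easy to see,'' you supply the missing verification that the segment $[x,\pi_W(x)]$ lies in $T(W)$ (via $\|H(x,t)-\pi_W(x)\|=(1-t)d_W(x)<\tau(W)$) and check the boundary identities, which is the same argument with the details filled in rather than a different route.
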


There is a more general version of this theorem.

\begin{definition}\label{Deﬁnition 3.10}
 Given a compact set $W \subset E$. Now we deﬁne $T′(W) = E \setminus \Delta W$.

\end{definition}
Notice that $T′(W)$ is open in $E$ since $\Delta_W$ is closed.
Now we deﬁne a map from $T′(W)$ to $W$ as the following way. 

\begin{definition}\label{Deﬁnition 3.11}
 Since $\Delta_W$ is not in $T′(W)$, $\forall x \in T′(W)$, there is a unique point $p \in W$ such that $d(x,p) = d(x,W)$.Deﬁne $\pi_X : T′(X,U) \rightarrow W$ as $\pi_W(x) = p.$
\end{definition}
 
 \begin{theorem}\label{Theorem 3.12}
 If $\tau(W) > 0$, then $\pi_W : T′(W) \rightarrow W$ is continuous and the map $$T′(W)\times [0,1] \rightarrow T′(W),~~(x,t) \rightarrow t\pi_W(x) + (1-t)x$$
is a deformation retract of $T′(W)$ onto $W$. 
 \end{theorem}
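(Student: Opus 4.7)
The plan is to generalize Proposition 2.2 of \cite{1} (recalled as Theorem \ref{Theorem 3.9}) by enlarging the domain of the nearest-point projection from the tubular neighbourhood $T(W)=\{x:d_W(x)<\tau(W)\}$ to the complement $T'(W)=E\setminus\Delta_W$ of the medial axis. The argument naturally splits into three steps: well-definedness and continuity of $\pi_W$ on $T'(W)$; invariance of $T'(W)$ under the straight-line homotopy $H(x,t) = t\pi_W(x) + (1-t)x$; and verification of the deformation-retract axioms.

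First, since $\Delta_W$ is defined as a closure it is closed, so $T'(W)$ is open in $E$, and every $x \in T'(W)$ has a unique nearest point in $W$ (otherwise $x$ belongs to the underlying set of two-nearest-point points, hence to its closure $\Delta_W$). Thus $\pi_W$ is well-defined on $T'(W)$. Continuity follows from a standard compactness/uniqueness argument: for $x_n \to x$ in $T'(W)$ with $W$ compact, every subsequential limit of $\{\pi_W(x_n)\}$ is a nearest point of $x$, hence equals $\pi_W(x)$ by uniqueness, and the whole sequence converges to $\pi_W(x)$.

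Second, and this is the heart of the argument, I would show that for each $x \in T'(W)$ the entire segment $\{y_t = (1-t)x + t\pi_W(x) : t \in [0,1]\}$ lies in $T'(W)$. Writing $p = \pi_W(x)$, direct expansion using $y_t - p = (1-t)(x - p)$ together with the optimality inequality $2\langle x - p, p - q\rangle \ge -\|p - q\|^2$ (valid for every $q \in W$ since $p$ minimises the distance from $x$) yields the quantitative uniqueness
\[
\|y_t - q\|^2 - \|y_t - p\|^2 \;\ge\; t\,\|p - q\|^2, \qquad q \in W,\; t \in [0,1],
\]
so $p$ is strictly the unique nearest point to $y_t$ for every $t \in (0, 1]$. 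To upgrade strict pointwise uniqueness to the topological condition $y_t \notin \Delta_W$, I would invoke the hypothesis $\tau(W) > 0$, which gives $B(p, \tau(W)) \cap \Delta_W = \emptyset$, together with a perturbation argument showing that any $z$ close enough to $y_t$ has its nearest point within $\tfrac{4}{t}\|z - y_t\|$ of $p$; this rules out multi-nearest configurations in a neighbourhood of $y_t$. Since $T'(W)$ is open, contains $y_0 = x$, and contains $y_1 = p \in W$ (as $\tau(W) > 0$ forces $W \subset T'(W)$), an open/closed bootstrap along $t \in [0, 1]$ then concludes $y_t \in T'(W)$ for all $t$.

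Finally, continuity of $\pi_W$ gives continuity of $H$, which maps into $T'(W)$ by the previous step. The identities $H(x, 0) = x$, $H(x, 1) = \pi_W(x) \in W$, and $H(w, t) = w$ for $w \in W$ (since $\pi_W(w) = w$) exhibit $H$ as a strong deformation retract of $T'(W)$ onto $W$. The main obstacle is the second step: strict pointwise uniqueness of the nearest point at $y_t$ does not by itself preclude $y_t$ from being a limit of multi-nearest-point configurations, and ruling this out requires both the positive-reach hypothesis and the perturbation estimate near $p$. I would follow the geometric structure of the proof of Theorem \ref{Theorem 3.9} in \cite{1}, with $T'(W)$ replacing $T(W)$ throughout and the connectedness bootstrap replacing the direct application of the radius bound $d_W(x) < \tau(W)$.
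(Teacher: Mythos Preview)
Your continuity argument is exactly the paper's: a sequential compactness argument showing every limit point of $\pi_W(x_k)$ is a nearest point of $x$, hence equals $\pi_W(x)$. For the remainder, the paper gives essentially no proof at all---it literally writes ``The other statement is easy to be seen from the definition of the deformation retract''---so your plan is considerably more detailed than what you are being compared against.

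That said, your sketch of the codomain step has a gap worth naming. The quantitative uniqueness inequality $\|y_t-q\|^2-\|y_t-p\|^2\ge t\|p-q\|^2$ is correct, and it does yield the Lipschitz-type localisation ``every nearest point of $z$ lies within $\tfrac{2}{t}\|z-y_t\|$ of $p$'' for $z$ near $y_t$. But this by itself does \emph{not} ``rule out multi-nearest configurations in a neighbourhood of $y_t$'': nothing prevents $z$ from having two distinct nearest points $q_1\ne q_2$ that are \emph{both} close to $p$. The hypothesis $B(p,\tau(W))\cap\Delta_W=\emptyset$ tells you points \emph{near $p$} lie outside $\Delta_W$, not that points whose \emph{projections} are near $p$ do. So the combination you describe does not, as written, show $y_t\notin\Delta_W$, and the open/closed bootstrap stalls because the set $\{t:y_t\in T'(W)\}$ is open but you have not shown it is closed.

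One clean way to close this: from your inequality one gets $\langle z_n-p_n,\,p_n-q_n\rangle=-\tfrac12\|p_n-q_n\|^2$ whenever $p_n\ne q_n$ are both nearest to $z_n$; rescaling $z_n$ along the ray from $p_n$ by the factor $\tfrac{1}{1-t_0}$ produces points converging to $x$ for which $q_n$ is \emph{strictly} closer than $p_n$, and one can then compare against $d_W(x)$ to reach a contradiction with $x\notin\Delta_W$. Alternatively, one may show directly that if the half-open segment $(p,x]$ meets $\Delta_W$ then already $x\in\Delta_W$. Either route repairs the argument; the paper simply does not address the issue.
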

  
The proof of this general version is quite similar to the \cite{1} \textbf{Proposition 2.2 .}
\begin{proof}
Concerning the continuity of $\pi_w$, let $(x_k)~~k\geq0$ be a sequence in $T′(W)$ converging to some $x \in T′(W)$. We have $$||\pi_W(x_k)-x||\leq||\pi_W(x_k)-x_k||+||x_k-x|| = d_W(x_k)+||x_k−x||\leq d_W(x)+2||x_k-x||$$,
where we used the Lipschitz continuity of $d_W$ for then last inequality. Hence the sequence $\pi_W(x_k)$ is bounded. Let $y \in W$ be a limit point of $\pi_w(x_k)$. The above inequality implies that $||y -x|| \leq d_W(x)$, hence $y = \pi_W(x)$. Thus $\pi_W(x)$ is the only limit point of the sequence $\pi_W(x_k)$ and therefore, $\lim_{k\rightarrow +\infty}\pi_W(x_k) = \pi_W(x)$.

The other statement is easy to be seen from the deﬁnition of the deformation retract.
\end{proof}

In addition, we introduce another kind of reach, which will be helpful in the following proofs. Let $W \subset E$ be a closed subset and $p \in W$. Moreover, consider $u \in E$ with $||u|| = 1$. It is easy to see that $\{t \geq 0|d_W(p + tu) = t\}$ is an interval containing $0$. We are interested in those directions $u$, where this interval has positive length and deﬁne the reach $\tau(W,p,u)$ of $W$ at $p$ along the direction $u$ as the length of this interval, that is,
$\tau(W,p,u) := \sup\{t \geq 0|d_W(p + tu) = t\}$. We note that $π_W(p + tu) = p$ for any $0 \leq t < \tau(W,p,u)$. And we have the following lemma.
\begin{lemma}\label{Lemma 3.13}
(\cite{1} \textbf{Lemma 2.5.}) Let $W \subset E$ be a closed subset, $p \in W$, and $u \in E$ be a unit vector such that $\tau(W,p,u)$ is positive. Then we have $\tau(W,p) \leq \tau(W,p,u)$.
\end{lemma}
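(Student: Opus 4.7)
The plan is to exhibit the point $y_\tau := p + \tau u$, where $\tau := \tau(W,p,u)$, as an element of the medial axis $\Delta_W$. This gives immediately $\tau(W,p) = d_{\Delta_W}(p) \leq \|y_\tau - p\| = \tau$, which is the desired inequality.

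First I would verify that $d_W(y_\tau) = \tau$. The set $A := \{t \geq 0 : d_W(p+tu) = t\}$ is closed (by continuity of $d_W$) and downward-closed in $[0,\infty)$: if $t \in A$ and $0 \leq s \leq t$, the triangle inequality along the ray gives $d_W(p+su) \leq s$, while $d_W(p+su) \geq d_W(p+tu) - (t-s) = s$. So $A = [0,\tau]$ and, by continuity, $d_W(y_\tau) = \tau$. Hence $p$ is a closest point of $W$ to $y_\tau$. Either there is a second one, i.e.\ some $q \in W \setminus \{p\}$ with $\|y_\tau - q\| = \tau$, in which case $y_\tau \in \Delta_W^0 \subset \Delta_W$ and the proof is finished, or else $p$ is the unique closest point of $W$ to $y_\tau$.

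In the latter (harder) case I would produce a sequence of two-closest-point configurations converging to $y_\tau$. Pick $t_k \downarrow \tau$ with $t_k > \tau$. Since $\tau = \sup A$, $d_W(p + t_k u) < t_k$, and closedness of $W$ provides $q_k \in W\setminus\{p\}$ attaining the distance, so $\|p + t_k u - q_k\| < t_k$. A triangle-inequality bound yields $\|y_\tau - q_k\| \to \tau$, and since $p$ is the unique minimiser, every convergent subsequence of $(q_k)$ goes to $p$; hence $q_k \to p$. The bisecting hyperplane $H_k$ of the segment $[p,q_k]$ is then the natural locus on which $p$ and $q_k$ become joint candidates to realise $d_W$, and the goal is to pick $z_k \in H_k$ with $z_k \to y_\tau$ and $\|z_k - p\| = d_W(z_k)$, which places $z_k \in \Delta_W^0$ and gives $y_\tau \in \overline{\Delta_W^0} = \Delta_W$.

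The main obstacle is the selection of $z_k$. The natural on-ray candidate $w_k := p + s_k u$ with $s_k = \|q_k - p\|^2/(2\langle u, q_k - p\rangle)$ is equidistant from $p$ and $q_k$ and satisfies $w_k \to y_\tau$, but the case assumption forces $s_k > \tau$, so by the definition of $\tau$ one has $d_W(w_k) < s_k = \|w_k - p\|$ and $w_k \notin \Delta_W^0$. The fix is to move $w_k$ transversally within $H_k$: the equation $\|z - p\| = d_W(z)$ cuts out a hypersurface on $H_k$ near $w_k$, and one shows via an implicit-function or transversality argument that this hypersurface passes arbitrarily close to $y_\tau$ as $k \to \infty$. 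Carrying out this perturbation rigorously, or, equivalently, invoking Federer's classical reach argument for closed subsets as used in [1], is the technical heart of the proof.
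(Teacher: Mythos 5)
Your reduction and the easy half are fine: for finite $\tau:=\tau(W,p,u)$ the set $A=\{t\ge 0: d_W(p+tu)=t\}$ is indeed $[0,\tau]$, so $d_W(y_\tau)=\tau$ and $p$ is a nearest point of $y_\tau$; if $y_\tau$ has a second nearest point you are done, and the observation $q_k\to p$ in the remaining case is correct. (Do dispatch $\tau=\infty$ separately, since $y_\tau$ is not defined there.) But the remaining case is not a technical afterthought — it is the entire content of the lemma — and your proposal does not prove it: the final step is an appeal to an unspecified ``implicit-function or transversality argument'' or to ``Federer's classical reach argument,'' i.e. exactly the statement to be established. Note that the paper itself offers no proof either (it simply cites Lemma 2.5 of \cite{1}), so a citation would match the paper; but as a standalone proof your text has a genuine gap at the decisive point.

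Moreover, the specific repair you sketch aims at the wrong locus. You look for points $z_k$ near $y_\tau$ on the bisector $H_k$ of $[p,q_k]$ with $\|z_k-p\|=d_W(z_k)$, i.e. medial-axis points having $p$ among their nearest points. In general no such points need exist near $y_\tau$: take $W=\{p\}\cup\{w_j\}_j$ with isolated points $w_j\to p$ placed just outside the ball $B(y_\tau,\tau)$ but inside $B(p+t_ju,t_j)$ for $t_j\downarrow\tau$ (so $\tau(W,p,u)=\tau$). A short Voronoi computation shows that, for suitable choices of the $w_j$, every point of the bisector of $(p,w_j)$ close to $y_\tau$ has some other $w_{j'}$ strictly closer than $p$, so the set you want to perturb onto is empty near $y_\tau$; the medial-axis points that do approach $y_\tau$ arise from ties between two of the $w_j$'s, with $p$ not a nearest point at all. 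So an argument along your lines would have to track switches of the nearest point among \emph{all} of $W$ (not just between $p$ and $q_k$), and no soft convexity or continuity estimate along the ray suffices — this is precisely the nontrivial step in Federer's Theorem 4.8(6), respectively in the proof of Lemma 2.5 of \cite{1}. Also, your stronger claim $y_\tau\in\Delta_W$ is more than the lemma needs: $\tau(W,p)\le\tau$ only requires some medial-axis point within distance $\tau$ of $p$, and insisting on locating it at $y_\tau$ makes the hard case harder rather than easier.
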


Now we will give a more detail analysis about the neighbourhoods of spherical basic semialgebraic sets. There are two kinds of neighbourhoods in our discussions. For a subset $A \subset \mathbb{S}^n$ we denote by $\mathcal{U}_{\mathbb{S}}(A,r) := \{x \in \mathbb{S}^n|d_{\mathbb{S}}(x,A) < r\}$ the open r−neighborhood of $A$ with respect to the geodesic distance $d_{\mathbb{S}}$ on the sphere $\mathbb{S}^n$.
Also, for a homogeneous system $(F,G) \in \mathbb{H}_d[q+s]$ and $r > 0$, we deﬁne the $r−relaxation$ of $S(F,G)$:
$$Approx(F,G,r) := {x ∈\mathbb{S}^n|\forall f \in F |f(x)| < ||f||r~~ and ~~g \in G,~~ g(x) > −||g||r}.$$

\end{subsection}

\begin{subsection}{ An extension of the Niyogi-Smale-Weinberger theorem}

In the \cite{1}, the authors observed an extension of the Niyogi-Smale-Weinberger theorem on any compact subset $X$, $S$ provided $S$ has positive reach $\tau(S)$. By deﬁning Hausdorﬀ distance between two nonempty closed subsets $A,B \subset E$ as
$$d_H(A,B) := \max(\sup_{a\in A} (d_B(a)),\sup_{ b\in B} d_A(b)),$$
the theorem can be stated as
\begin{theorem}\label{Theorem 3.14}
(\cite{1} \textbf{Theorem 2.8.}) Let $S$ and $X$ be nonempty compact subsets of $E$. The set $S$ is a deformation retract of $ U(X,\epsilon)$ for any $\epsilon$ such that $3d_H(S,X) < \epsilon < 1 2\tau(S).$
\end{theorem}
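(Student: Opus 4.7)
Let $\delta := d_H(S, X)$, so the hypotheses become $\delta < \epsilon / 3$ and $\epsilon < \tau(S)/2$. The natural candidate for the retraction is the nearest-point map $H(x,t) := (1-t)\,x + t\,\pi_S(x)$, and the plan is to check that it is well defined on $U(X,\epsilon)$, continuous, and stays inside $U(X,\epsilon)$. First I would verify the two basic inclusions. For $S \subset U(X,\epsilon)$: every $p \in S$ admits some $x \in X$ with $\|p - x\| \leq \delta < \epsilon$. For $U(X,\epsilon) \subset T(S)$: if $x \in U(X,\epsilon)$ and $y \in X$ realises $\|x-y\| < \epsilon$, then $d_S(x) \leq \|x-y\| + d_S(y) < \epsilon + \delta < 4\epsilon/3 < \tau(S)$, so $\pi_S(x)$ is well defined by Theorem 3.9. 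Moreover, every point on the segment $[x,\pi_S(x)]$ satisfies the same estimate, so the straight-line homotopy never leaves $T(S)$ and its projection along the segment is constantly $\pi_S(x)$.

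With $\pi_S$ continuous on $T(S)$ by Theorem 3.9, the map $H$ is continuous, $H(\cdot,0) = \mathrm{id}$, $H(\cdot,1) = \pi_S$ takes values in $S$, and $H$ fixes $S$ pointwise. The only non-formal statement is that $H(x,t)$ lies in $U(X,\epsilon)$ for every pair $(x,t)$; this I expect to be the main obstacle.

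To handle it, I would fix $x \in U(X,\epsilon)$, set $p := \pi_S(x)$, $D := d_S(x)$, $w_t := H(x,t)$, and use two $X$-witnesses: a point $y \in X$ with $\|x-y\| < \epsilon$, and a point $x' \in X$ with $\|x' - p\| \leq \delta$ (which exists because $p \in S$). Near $t = 0$, the triangle inequality gives $\|w_t - y\| \leq \|x-y\| + tD < \epsilon$ on an initial subinterval; near $t = 1$ it gives $\|w_t - x'\| \leq (1-t)D + \delta < \epsilon$ on a terminal subinterval. The difficulty is to show that these two subintervals together cover $[0,1]$; the crude linear estimates alone only guarantee a cover if $2\epsilon > \|x-y\| + D + \delta$, which can fail because $\|x-y\|$ may be close to $\epsilon$ and $D$ close to $\epsilon + \delta$.

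To close the gap, I would exploit the reach assumption at $p$. Since $p = \pi_S(x)$, the vector $x - p$ is normal to $S$ at $p$ in the sense of Lemma 3.13, and a Federer-type tube inequality then yields $\langle x-p,\, p-x'\rangle \geq -\|x-p\|\,\|p-x'\|^2 / (2\tau(S))$, forcing $x'$ into a half-space favourable to the segment. Substituting into the expansion $\|w_t - x'\|^2 = (1-t)^2 D^2 + 2(1-t)\langle x-p, p-x'\rangle + \|p - x'\|^2$ gives a quadratic, rather than linear, improvement over the crude bound $(1-t)D + \delta$, and a symmetric argument at $y$ shows that under the strict hypothesis $\epsilon < \tau(S)/2$ the two coverage subintervals overlap. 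Gluing them into a continuous selection $t \mapsto y(t) \in X$ with $\|w_t - y(t)\| < \epsilon$ completes the verification that $H$ takes its values in $U(X,\epsilon)$ and hence realises the claimed deformation retract.
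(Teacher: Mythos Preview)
The paper does not give its own proof of Theorem~3.14: the result is quoted from \cite{1} (their Theorem~2.8) and used as a black box. The closest the paper comes is the proof of the local refinement, Theorem~3.15, which adapts the argument of \cite{1}; that is the natural point of comparison.

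Your set-up and reductions are correct, and you have correctly isolated the one nontrivial step: keeping the segment $[x,\pi_S(x)]$ inside $U(X,\epsilon)$. But the resolution you sketch has a real gap. The Federer reach inequality $\langle x-p,\,p-q\rangle \ge -\|x-p\|\,\|p-q\|^{2}/(2\tau(S))$ requires $q\in S$; your witness $x'$ lies only in $X$. Passing to $\pi_S(x')\in S$ and then back to $x'$ costs an extra term of order $\|x-p\|\cdot\delta$ in the inner product, which is linear in $\delta$ and wipes out the quadratic gain you are counting on. The ``symmetric argument at $y$'' is still more doubtful, since $y\in X$ is not attached to any particular point of $S$. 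As written, you have not shown the two subintervals overlap.

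The argument in \cite{1}, reproduced in the paper's proof of Theorem~3.15, avoids Federer's inequality altogether. One lets $u$ be the last point of the segment lying on $\partial B(x,\epsilon)$ and introduces an auxiliary point $w$ on the ray from $p=\pi_S(x)$ through $x$ with $\|w-p\|=2\epsilon$. Lemma~3.13 and the hypothesis $2\epsilon<\tau(S)$ force $\pi_S(w)=p$, so $d_S(w)=2\epsilon$; comparing with $d_S(x)<\epsilon+\delta$ bounds $\|w-x\|$ from below, and an acute-angle (Pythagorean-type) estimate along the collinear configuration $p,u,w$ then yields $\|p-u\|<\tfrac{2}{3}\epsilon$. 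Hence $u$, and with it all of $[u,p]$, lies in the $\epsilon$-ball around the $X$-witness near $p$. This is the device that closes the gap your linear estimates leave open, and it is what you should substitute for the Federer step.
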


However, for the purpose of a more adaptive algorithm, we would like a local version of this theorem. If we get the local version, we can design an recursion algorithm by subdividing a set $X$ to small open covers. Fortunately, we can remove the Hausdorﬀ distance and give a local version of Niyogi-Smale-Weinberger theorem.

\begin{theorem}\label{Theorem 3.15}
Let $S$ and $X$ be two nonempty compact subsets of $E$. Given a function $\epsilon : X \rightarrow \mathbb{R}$. Suppose $$\forall x \in X,~~d(x,\Delta S) > 5\epsilon(x)$$, $$\forall x \in X,~~ d(x,S) <\frac{1}{3}\epsilon(x),$$
$$\forall p \in S, \exists x \in X ~~such~~that~~ ||x-p|| <
\frac{1}{4}\epsilon(x)$$
and $$\forall x,y \in X, ~~|\epsilon(x)-\epsilon(y)| <
\frac{1}{ 17}||x-y||,$$ we get that the set $S$ is a deformation retract of $\mathcal{\mathcal{U}}(\mathcal{X},\epsilon)$.
\end{theorem}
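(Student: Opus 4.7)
The plan is to imitate the proof of Theorem \ref{Theorem 3.14}, using the nearest-point projection $\pi_S$ and the straight-line homotopy $(y, t) \mapsto (1-t) y + t \pi_S(y)$ as the deformation retract, but now one must manage a radius $\epsilon(x)$ that varies from ball to ball. The four hypotheses are calibrated so that each of the three standard verification steps (well-definedness of $\pi_S$ on $\mathcal{U}(X,\epsilon)$, the inclusion $S \subset \mathcal{U}(X, \epsilon)$, and invariance of the straight-line homotopy) still goes through.

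First I would check that $\tau(S) > 0$ and $\mathcal{U}(X, \epsilon) \subset T'(S) = E \setminus \Delta_S$, so that $\pi_S$ is defined and continuous on $\mathcal{U}(X, \epsilon)$ by Theorem \ref{Theorem 3.12}. For the reach, any $p \in S$ admits (by condition 3) an $x \in X$ with $\|x - p\| < \tfrac14 \epsilon(x)$, and condition 1 then gives $d(p, \Delta_S) \geq d(x, \Delta_S) - \tfrac14 \epsilon(x) > \tfrac{19}{4}\epsilon(x) > 0$; compactness of $X$ yields a uniform lower bound. For the inclusion, if $y \in B(x, \epsilon(x))$ then $d(y, \Delta_S) \geq d(x, \Delta_S) - \|y - x\| > 4\epsilon(x) > 0$. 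The inclusion $S \subset \mathcal{U}(X, \epsilon)$ is immediate from condition 3: for any $p \in S$ the witness $x$ satisfies $\|x - p\| < \tfrac14 \epsilon(x) < \epsilon(x)$.

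The main obstacle, and where all four numerical constants come in, is to show that the straight-line path $\gamma(t) = (1-t) y + t \pi_S(y)$ stays inside $\mathcal{U}(X, \epsilon)$ for every $y \in \mathcal{U}(X, \epsilon)$ and every $t \in [0,1]$. Fix $y \in B(x, \epsilon(x))$. By condition 2, $d(y, S) \leq \|y - x\| + d(x, S) < \tfrac43 \epsilon(x)$, so $\|y - \pi_S(y)\| < \tfrac43 \epsilon(x)$. By condition 3, pick $x' \in X$ with $\|x' - \pi_S(y)\| < \tfrac14 \epsilon(x')$. The triangle inequality gives $\|x - x'\| < \tfrac73 \epsilon(x) + \tfrac14 \epsilon(x')$, and feeding this into condition 4 makes $\epsilon(x)$ and $\epsilon(x')$ comparable up to a small multiplicative factor. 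I would then split $[0,1]$ into a subinterval on which $B(x, \epsilon(x))$ covers $\gamma(t)$ and a subinterval on which $B(x', \epsilon(x'))$ covers $\gamma(t)$: the estimate $\|\gamma(t) - x\| \leq (1-t)\|y - x\| + t\|\pi_S(y) - x\| < (1-t)\epsilon(x) + t \cdot \tfrac73 \epsilon(x)$ shows $x$ works for $t$ close to $0$, the symmetric estimate centered at $x'$ shows $x'$ works for $t$ close to $1$, and the Lipschitz constant $\tfrac{1}{17}$ in condition 4 should be sharp enough to force the two regimes to overlap.

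Once these three facts are in place, the map $h(y,t) = (1-t)y + t\pi_S(y)$ is a well-defined continuous deformation retraction of $\mathcal{U}(X, \epsilon)$ onto $S$, exactly as in Theorem \ref{Theorem 3.12}. The delicate part will be the bookkeeping in the overlap step; the particular numerology $5,\ \tfrac13,\ \tfrac14,\ \tfrac{1}{17}$ in the hypotheses is suggestive of a careful triangle-inequality computation on the two subintervals rather than any deeper structural argument.
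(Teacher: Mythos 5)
Your setup matches the paper's: well-definedness and continuity of $\pi_S$ on $\mathcal{U}(X,\epsilon)$ from condition 1 (via Theorem \ref{Theorem 3.12}), the inclusion $S\subset\mathcal{U}(X,\epsilon)$ from condition 3, and the straight-line homotopy. The gap is in the main step, and it is not mere bookkeeping: your two-regime covering of $\gamma(t)=(1-t)y+t\pi_S(y)$ does not close. Near $t=0$ your bound $\|\gamma(t)-x\|<(1-t)\|y-x\|+t\cdot\frac{7}{3}\epsilon(x)$ certifies $\gamma(t)\in B(x,\epsilon(x))$ only on an interval whose length is proportional to $\epsilon(x)-\|y-x\|$, which can be arbitrarily small (as $\|y-x\|\to\epsilon(x)$ the bound becomes $\epsilon(x)(1+\frac{4}{3}t)\geq\epsilon(x)$ for every $t>0$). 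At the other end, the ball $B(x',\epsilon(x'))$ around the witness of $\pi_S(y)$ covers $\gamma(t)$ only when $(1-t)\frac{4}{3}\epsilon(x)<\frac{3}{4}\epsilon(x')$, i.e.\ roughly $t$ larger than about $\frac12$ even after using condition 4 to compare $\epsilon(x)$ and $\epsilon(x')$. So when $y$ is close to the boundary of $B(x,\epsilon(x))$ and $\pi_S(y)$ lies in the outward direction, a middle range of $t$ is covered by neither ball, and no choice of the Lipschitz constant in condition 4 repairs this: your estimates never exclude that the part of the segment outside $B(x,\epsilon(x))$ has length close to $\frac{4}{3}\epsilon(x)$, which is too long to fit in the witness ball.

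The missing ingredient is the reach-based control of that far part, which is the heart of the paper's proof (inherited from the argument for Theorem \ref{Theorem 3.14}). Writing $p=\pi_S(v)$, one takes the point $w$ on the half-line from $p$ through $v$ with $\|w-p\|=2\epsilon(x)$, shows $d(p,\Delta_S)>2\epsilon(x)$ from conditions 1 and 2, and invokes the directional-reach Lemma \ref{Lemma 3.13} to get $\pi_S(w)=p$ and $d(w,S)=2\epsilon(x)$. Combined with $d(x,S)<\frac{1}{3}\epsilon(x)$ this gives $\|w-x\|>\frac{5}{3}\epsilon(x)$, and a Pythagoras-type estimate at the exit point $u$ (the point of $[v,p]$ with $\|u-x\|=\epsilon(x)$) yields $\|w-u\|>\frac{4}{3}\epsilon(x)$, hence $\|p-u\|=\|w-p\|-\|w-u\|<\frac{2}{3}\epsilon(x)$. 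Only now is the leftover piece $[u,p]$ short enough that both its endpoints, and hence by convexity the whole piece, lie in the single ball $B(y,\epsilon(y))$ around the witness $y$ with $\|y-p\|<\frac{1}{4}\epsilon(y)$; this is exactly where conditions 3 and 4 and the constants $5,\ \frac13,\ \frac14,\ \frac{1}{17}$ enter. Note also that the paper does not stitch two regimes together at an overlap: it covers $[v,u)$ by $B(x,\epsilon(x))$ and all of $[u,p]$ by one witness ball. Without the $w$/Lemma \ref{Lemma 3.13} step your proposal cannot be completed.
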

\begin{proof}
At ﬁrst, we prove the main theorem. For the convenience of proof, we let $A_1 = 5,A_2 =\frac{1}{3},A_3 =\frac{1}{4},A_4 =\frac{1}{17}$ for each coeﬃcient in the above equalities. Take $u \in  \mathcal{U}$, $\exists x \in  \mathcal{X}$ such that $$d(x,p) < \epsilon(x),~~ d(p,\Delta_S) > d(x,\Delta S)−d(x,p) > (A_1)\epsilon(x) > \epsilon(x)$$. 
So $\mathcal{U}\cap \Delta_S = \emptyset$ which implies that $\pi_S$ is well deﬁned on $\mathcal{U}$.

As a result, we can see that the map
$$\mathcal{U}(\mathcal{X},\epsilon)\times[0,1] \rightarrow E,~~ (x,t) \mapsto t\pi_S(x) + (1−t)x$$
is well deﬁned. The map is also continuous by the above theorem. It remains to prove that its image is included in $\mathcal{U}(\mathcal{X},\epsilon)$. Let $v \in B(\mathcal{X},\epsilon(x))$ and $p := \pi_S(v)$. If $||p−x|| \leq \epsilon(x)$, then the line segment $[v,p]$ is entirely included in the ball of radius $\epsilon(x)$ around $x$, which is a part of $\mathcal{U}$, and we are done. So we assume that $||p−x||\geq \epsilon(x)$. let $u$ be the unique point in $[v,p]$ such that $||u-x|| = \epsilon(x)$. The line segment $[v,u)$ being included in the ball $B(x,\epsilon(x)) \subset \mathcal{U}$, it only remains to check that $[u,p]$ is also included in $\mathcal{U}$. Let $r :=\frac{1}{3\epsilon(x)}$. Also, let $l$ be the open half line starting from $p$ and passing through $v$ and $w$ be the unique point in $l$ such that $||w -p|| = 2\epsilon(x)$. 

At ﬁrst, we want to show that $d(p,\Delta_S) > 2\epsilon(x)$. Notice that
$$||x-p||≤||x-v||+||v-p|| = d(v,S)+||x-v||\leq d(v,x)+d(x,S)+\epsilon(x) < (2+A_2)\epsilon(x)$$
and
$$d(p,\Delta_S) \geq d(x,\Delta_S)−d(x,p) > (A_1 −A_2 −2)\epsilon(x) > 2\epsilon $$. Also, as $p = \pi_S(v) \in S_x$, we have $\tau(S,p,\frac{ w-p }{||w-p||}  ) > 0$. By \cite{1} \textbf{lemma 2.5}, we obtain $\tau(S,p) \leq \tau(S,p,\frac{ w-p }{||w-p||})$ and therefore $2\epsilon(x) < \tau(S,p, \frac{ w-p }{||w-p||} )$. This implies that $\pi_S(w) = p$ and $d_S(w) = ||w-p|| = 2\epsilon(x)$. Let $q = \pi_S(x)$. Since $||x-q|| = d(x,S) < A_2\epsilon(x)$, $$||w-x||\geq ||w-q||-||q-x|| > d(w,S)-A_2\epsilon(x) \geq (2 + A_2)\epsilon(x) \geq\frac{5}3 \epsilon(x).$$

By our assumption, we can ﬁnd $y \in X$ such that $||y -p|| \leq A_3\epsilon(x)$. Now we want to show $d(y,u) < \epsilon(y)$. $A_3 < 1$ implies that $||p-y|| < A_3\epsilon(y) < \epsilon(y)$. Notice $$||y-u||\leq||y-p||+||p-u|| < A_3\epsilon(y)+||w-p||-||w-u|| = A_3\epsilon(y)+2\epsilon(x)-||w-u||,$$ $$||w-u||2 \geq||w-x||^2 -||x-u||^2 > (\frac{5}{3}\epsilon(x))^2 −(\epsilon(x))^2 = \frac{16}{9}\epsilon(x)^2 \implies ||w-u|| > \frac{4}{3}\epsilon(x).$$ Now we get
$$||y-u|| < A_3\epsilon(y) + 2\epsilon(x)−
\frac{4}{3}\epsilon(x) = A_3ϵ(y) +
\frac{2}{3}\epsilon(x).$$
By our assumption, we can get
$$||x-y||\leq||y-u||+ \epsilon(x) \leq A_3\epsilon(y) +
\frac{2}{5}\epsilon(x)$$ $$\leq A_3\epsilon(y) + \frac{5}{3}
(\epsilon(x)−\epsilon(y)) +
\frac{5}{3}\epsilon(y) \leq A_3\epsilon(y) +
\frac{5}{3}A_4||x−y||+
\frac{5}{3}\epsilon(y).$$
As a result, we get
$$||x-y||\leq\frac{1- \frac{5}{3}A_4}{ A_3 +\frac{5}{3}}\epsilon(y).$$
Finally, we get
$$||y−u|| < A_3\epsilon(y) +
\frac{2}{3}
(\epsilon(x)−\epsilon(y)) +
\frac{2}{3}
\epsilon(y) \leq (
\frac{2}{3}
+ A_3)\epsilon(y) +
\frac{2}{3}
A_4||x−y||
\leq (
\frac{2}{3}
+ A_3 +
\frac{2}{3}
A_4
 \frac{A_3 + \frac{5}{3}}{1-\frac{5}{3}A_4}
)\epsilon(y)  \leq \epsilon(y).
 $$
\end{proof}

\end{subsection}

\begin{subsection}{Refoumulation of the Niyogi-Smale-Weinberger theorem in sphere}

Now we will establish two theorems to get a reformulation of the Niyogi-Smale-Weinberger theorem in a unit sphere, which will be a basis for our main algorithm.

\begin{subsubsection}{Local versions of the \cite{1} \textbf{theorem 4.12} and \cite{1} \textbf{Proposition 4.17}
}

At ﬁrst, let us recall the \cite{1} Theorem 4.12 and \cite{1} Proposition 4.17 which are important in the proof of \cite{1} Proposition 5.1.

\begin{theorem}\label{Theorem 3.16.}
(Theorem 4.12 in [1]) For any homogeneous semialgebraic system $(F,G)$ deﬁning a semialgebraic set $S := S(F,G) \subset \mathbb{S}^n$, if $\kappa^{\ast}(F,G) < \infty$, then
$$D^{\frac{3}{2}} \tau(S)\kappa^{\ast}(F,G) \geq \frac{1}{7}$$
\end{theorem}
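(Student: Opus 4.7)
The plan is to argue by contradiction, assuming $D^{3/2}\tau(S)\kappa^{\ast}(F,G) < \frac{1}{7}$ and deriving a contradiction. Using the characterization $\tau(S) = \inf_{x \in \Delta_S} d_S(x)$, I would pick a witness $z$ in (the closure of) the medial axis together with two distinct points $p,q \in S$ realizing $\|z-p\| = \|z-q\| = d(z,S) \leq \tau(S)$, and then localize the whole argument at $p$.

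The next step is to identify the active constraints at $p$. I would let $L \subset G$ collect exactly those inequality coefficients $g \in G$ with $g(p) = 0$; the remaining inequalities are strictly positive at $p$ and remain so on a small neighborhood $U$ of $p$. Inside $U$ the semialgebraic set $S$ then coincides with the zero locus $V := \{F^L = 0\} \cap \mathbb{S}^n$, where $F^L = (f_1,\ldots,f_q,g_{j_1},\ldots,g_{j_l})$. Since $p \in S$ we have $F^L(p) = 0$, and the hypothesis $\kappa^{\ast}(F,G) < \infty$ forces $q+|L| \leq n+1$ (otherwise $F^L$ would be overdetermined with a zero, so $\kappa(F^L)$ would be infinite). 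Therefore $F^L$ is among the subsystems in Definition 3.6, giving $\kappa(F^L, p) \leq \kappa^{\ast}(F,G)$, and Lemma 3.3 yields $\mu_{proj}(F^L, p) = \kappa(F^L, p)$.

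The core of the argument is a classical estimate bounding the local reach of the smooth algebraic variety $V$ in terms of $\mu_{proj}$. Specifically, one uses that the normalized derivative, together with the diagonal matrix of $\sqrt{d_i}$, controls the second fundamental form of $V$ at $p$ up to a factor of $D$, which in turn yields a local reach bound of the form $\tau(V, p) \geq \frac{c}{D^{3/2}\, \mu_{proj}(F^L,p)}$ for a universal constant $c$. Packaging the Euclidean-versus-spherical conversion and the geometry of the medial axis witness $z$ lying on a short segment transverse to $V$ at $p$ gives the specific constant $\frac{1}{7}$. Combining $\tau(S,p) \geq \tau(V,p)$ locally with $\mu_{proj}(F^L,p) = \kappa(F^L,p) \leq \kappa^{\ast}(F,G)$ then yields $\tau(S) \geq \frac{1}{7\,D^{3/2}\,\kappa^{\ast}(F,G)}$, contradicting the hypothesis.

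The main obstacle is the inequality $\tau(S,p) \geq \tau(V,p)$: enlarging $S$ to $V$ need not enlarge the reach, since new branches of $V \setminus S$ (points where some non-active $g \in G\setminus L$ becomes negative) could in principle create spurious medial-axis points near $p$. I would control this by a quantitative continuity estimate for the non-active inequalities, showing that $V \setminus S$ is bounded away from $p$ on the scale of $\frac{1}{D^{3/2}\kappa^{\ast}(F,G)}$, using the same condition-number framework applied to the smaller subtuples of $G$. A secondary delicate point is the $\mu_{proj}$-to-reach bound itself, where the constants (and the precise $D^{3/2}$ rather than some other power of $D$) must be tracked through the curvature computation; this is the technical heart that mirrors the argument of Bürgisser--Cucker--Lairez in \cite{1}.
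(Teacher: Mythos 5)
The key reduction in your sketch is wrong, and it is exactly the step where your route departs from the one used in \cite{1} (and reproduced in this paper's local analogue, Theorem 3.20). You claim that near the foot point $p$ the set $S$ coincides with the zero locus $V:=\{F^L=0\}\cap\mathbb{S}^n$ of the active constraints. It does not: the active constraints enter as \emph{inequalities}, so in a small neighborhood of $p$ one has $S=\{F=0\}\cap\{g\geq 0:\ g\in L\}$, which in general properly contains $V$ and has higher dimension. The simplest case already kills the claim: $q=0$, $s=1$, $S=\{g\geq 0\}$, and $p$ a boundary point with $g(p)=0$; there $S$ is locally a half-cap, while $V=\{g=0\}$ is a hypersurface. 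Consequently the comparison $\tau(S,p)\geq\tau(V,p)$ is a comparison between the wrong sets, and the difficulty you flag as the ``main obstacle'' (spurious branches of $V\setminus S$) is not the actual problem; the problem is that the medial axis of $S$ near $p$ is governed by the corner geometry of $\{F=0\}\cap\{g\geq 0:\ g\in L\}$, which no single system $F^L=0$ describes. Your secondary black box (the curvature estimate $\tau\geq c/(D^{3/2}\mu_{proj})$ with the constant giving $1/7$) is precisely the algebraic case $G=\emptyset$ of the theorem, so deferring it to \cite{1} is fine, but it means the only genuinely new content of your proposal is the flawed reduction.

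The paper (following \cite{1}) handles the inequalities differently: instead of identifying $S$ with one active-constraint variety, it uses the reach-of-intersection inequality $\tau(W\cap V,p)\geq\min(\tau(W,p),\tau(W\cap\partial V,p))$ and its iterate (Theorem 3.18 and Corollary 3.19 here, Theorem 2.4 and Corollary 2.6 in \cite{1}), applied with $W=S(F,\emptyset)$ and $V_g=\{g\geq 0\}$. Together with the identification $W\cap\bigcap_{g\in L}\partial V_g=S(F^L,\emptyset)$ (which needs the surjectivity of $DF^L$, i.e.\ $\kappa(F^L,x)<\infty$, to know that the sign of each $g\in L$ changes across its zero set), this yields $\tau(S)\geq\min_{L\subset G,\ q+|L|\leq n+1}\tau(S(F^L,\emptyset))$, and then the algebraic case applied to each $F^L$ gives $7D^{3/2}\tau(S)\geq 1/\max_L\kappa(F^L)=1/\kappa^{\ast}(F,G)$. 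If you want to salvage your pointwise, active-set strategy, you would have to prove a quantitative statement of this intersection type at $p$ (covering all subtuples of the active set, not just the full one), which is exactly the lemma your sketch is missing.
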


\begin{theorem}\label{Theorem 3.17.}
 (Proposition 4.19 in \cite{1}) Let $q \leq n + 1$. For any positive number $r < (13D^{\frac{3}{2}} κ^2_{\ast})^{−1}$ we have $$Approx(F,G,r) \subset  \mathcal{U}_S(S(F,G),3κ_{\ast}r).$$
\end{theorem}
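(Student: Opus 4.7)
The plan is, given $x \in \mathrm{Approx}(F,G,r)$, to exhibit a point $y \in S(F,G)$ at spherical distance less than $3\kappa_{\ast}r$ from $x$, by (i) adjoining to the equality system $F$ the subset $L$ of inequalities from $G$ that are nearly active at $x$, forming a not-overdetermined system $F^{L}$ for which $x$ is already an approximate zero; (ii) invoking a condition-number-based projective Newton convergence result under the smallness assumption on $r$ to produce a genuine zero $y$ of $F^{L}$ near $x$; and (iii) verifying that the remaining inequalities from $G\setminus L$ are automatically strictly satisfied at $y$ by a Lipschitz argument. This mirrors the strategy of the corresponding statement (Proposition 4.19) in \cite{1}.

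First I would set $L := \{ g \in G : g(x) < 3\kappa_{\ast}\|g\|r \}$, the barely-satisfied inequalities at $x$. Two properties of $L$ must be checked. On the one hand, $q+|L|\leq n+1$, so that the defining clause of $\kappa_{\ast}$ applies and gives $\kappa(F^{L})\leq \kappa_{\ast}$; this is a dimension-counting argument, using the fact that if too many polynomials from $G$ could simultaneously be that small at a single point, the real homogeneous condition number of some subsystem of admissible size would be forced to blow up, contradicting $\kappa_{\ast}<\infty$ together with the hypothesis $r<(13D^{3/2}\kappa_{\ast}^{2})^{-1}$. On the other hand, $x$ is an $r$-approximate zero of $F^{L}$: $|f(x)|<\|f\|r$ holds for $f\in F$ by the definition of $\mathrm{Approx}(F,G,r)$, and $|g(x)|<3\kappa_{\ast}\|g\|r$ for $g\in L$ by construction, so the normalised residual $\|F^{L}(x)\|/\|F^{L}\|$ is $O(\kappa_{\ast}r)$.

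Next I would apply the quantitative projective Newton / $\alpha$-theory machinery used throughout \cite{1}, using a $\gamma$-estimate of the form $\gamma(F^{L},x)\lesssim D^{3/2}\kappa(F^{L})^{2}\leq D^{3/2}\kappa_{\ast}^{2}$. The threshold $r<(13D^{3/2}\kappa_{\ast}^{2})^{-1}$ is calibrated precisely so that the residual bound above and this $\gamma$-estimate together place $x$ inside the basin of quadratic convergence of projective Newton iteration for $F^{L}$ on $\mathbb{S}^{n}$; the limit point $y$ is then a true zero of $F^{L}$ on $\mathbb{S}^{n}$ with $d_{\mathbb{S}}(x,y)<3\kappa_{\ast}r$. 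I expect the main obstacle to be this step: threading the numerical constants through the $\alpha$-$\beta$-$\gamma$ framework for a not-necessarily-square homogeneous polynomial system on the sphere, exactly as in the proof of Theorem \ref{Theorem 3.16.} (i.e.\ Theorem 4.12 in \cite{1}), so that the threshold $(13D^{3/2}\kappa_{\ast}^{2})^{-1}$ and the radius $3\kappa_{\ast}r$ come out simultaneously with the prescribed constants.

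Finally I would verify $y\in S(F,G)$. The equalities $f(y)=0$ for $f\in F$ and $g(y)=0\geq 0$ for $g\in L$ are immediate from $F^{L}(y)=0$. For an inequality $g\in G\setminus L$ we have $g(x)\geq 3\kappa_{\ast}\|g\|r$ by the choice of $L$; combining this with the standard Lipschitz bound on the sphere for a homogeneous polynomial of degree at most $D$ with unit Weyl norm (an $O(\sqrt{D})$-Lipschitz estimate, underlying also Theorem 3.5) and with $d_{\mathbb{S}}(x,y)<3\kappa_{\ast}r$ gives $|g(x)-g(y)|\lesssim \sqrt{D}\,\|g\|\,\kappa_{\ast}r$; one further use of the hypothesis $r<(13D^{3/2}\kappa_{\ast}^{2})^{-1}$ then yields $g(y)>0$. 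Hence $y\in S(F,G)$ with $d_{\mathbb{S}}(x,y)<3\kappa_{\ast}r$, as required.
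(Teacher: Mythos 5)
The paper does not reprove this statement (it is quoted from \cite{1}), but it does reproduce the proof technique in the local version, Theorem \ref{Theorem 3.23.}, and comparing your plan with that argument exposes a genuine gap. The crux is your choice of $L:=\{g\in G: g(x)<3\kappa_{\ast}\|g\|r\}$ together with the closing Lipschitz step. First, with this enlarged threshold the normalised residual of $F^{L}$ at $x$ is only $O(\kappa_{\ast}r)$, so the $\alpha$-type quantity you need to make small is of order $D^{3/2}\kappa_{\ast}^{2}\cdot\kappa_{\ast}r$; the hypothesis $r<(13D^{3/2}\kappa_{\ast}^{2})^{-1}$ kills only $D^{3/2}\kappa_{\ast}^{2}r$, leaving a stray factor $\kappa_{\ast}$, so $x$ need not lie in the Newton convergence basin, and even when it does the zero you reach lies only within distance $\approx\kappa_{\ast}\cdot\kappa_{\ast}r$, not $3\kappa_{\ast}r$. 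Second, the final verification fails on its own terms: for $g\in G\setminus L$ you only know $g(x)\geq 3\kappa_{\ast}\|g\|r$, while moving to $y$ with $d_{\mathbb{S}}(x,y)<3\kappa_{\ast}r$ can change $g$ by roughly $\sqrt{D}\,\|g\|\cdot 3\kappa_{\ast}r\geq 3\kappa_{\ast}\|g\|r$, exactly wiping out the slack, so $g(y)>0$ does not follow. This circularity cannot be fixed by recalibrating the threshold: any activity threshold $T$ yields a residual of order $T$, hence a displacement of order $\kappa_{\ast}T$, which times the Lipschitz constant always dominates $T$.

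The actual proof (in \cite{1}, and mirrored in the paper's proof of Theorem \ref{Theorem 3.23.}) takes $L:=\{g\in G:\ |g(x)|<r\|g\|\}$ with the \emph{same} $r$, bounds $|L|\leq n-q$ via the lemma corresponding to \ref{Lemma 3.22.} (your dimension-counting sketch is a rough version of this), and then replaces your Lipschitz step by a path-following recursion: one follows the Newton flow $\dot{x}_{t}=-D\widetilde{F}(x_{t})^{+}\widetilde{F}(x_{t})$ in the affine space $T_{x}$; if no inequality ever vanishes along the path, the limit is the desired nearby point of $S(F,G)$, and otherwise one stops at the \emph{first} time some $g\in G$ vanishes, observes that the current point lies in $Approx(F^{H},G\setminus H,re^{-t})$ for the newly active set $H$ (the residuals decay like $e^{-t}$ along the flow), and applies the induction hypothesis on the number of inequalities with $g$ adjoined to the equalities. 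This first-hitting-time recursion is the idea your proposal is missing, and it is precisely what keeps the constants (and, in the local version, the factor $B_{s}$) under control.
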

  
Notice the two main theorems are based on the global condition numbers of a set. Actually, we can get local versions of the two main theorem by changing $κ_{\ast}$ with $\kappa(F,G,x)$. Notice we have gotten a local version of the Niyogi-Smale-Weinberger theorem. We will show combining the above theorems leads to a more adaptive local version of NiyogiSmale-Weinberger theorem in a unit sphere.

Now I will give a local version of \textbf{Theorem 3.16}. We begin by two lemmas, which are local versions of \textbf{Theorem 2.4} and \textbf{Corollary 2.6} in \cite{1}.
\begin{theorem}\label{Theorem 3.18}
 For closed subsets $V,W$ of $E$ we have $\tau(W\cap V,p) \geq \min(\tau(W,p),\tau(W\cap \partial V,p))$.
\end{theorem}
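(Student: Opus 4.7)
I will reduce the inequality to the set-theoretic inclusion $\Delta_{W \cap V} \subseteq \Delta_W \cup \Delta_{W \cap \partial V}$. Once this is in hand, taking $d(p,\cdot)$ on both sides and invoking $\tau(A,p) = d(p,\Delta_A)$ yields
$$\tau(W \cap V, p) = d(p, \Delta_{W \cap V}) \geq \min\bigl(d(p,\Delta_W),\, d(p,\Delta_{W \cap \partial V})\bigr) = \min(\tau(W,p), \tau(W \cap \partial V, p)),$$
which is exactly the statement.

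To prove the inclusion, I fix $x \in \Delta_{W \cap V}$ and pick two distinct $q_1, q_2 \in W \cap V$ both realizing $r := d_{W \cap V}(x)$. The argument splits on whether the unconstrained nearest distance $d_W(x)$ equals $r$ or is strictly smaller. If $d_W(x) = r$, then $q_1, q_2$ are themselves nearest points of $W$ to $x$, so $x \in \Delta_W$ directly. Otherwise $d_W(x) < r$, and any $q^{\ast} \in W$ realizing $\|x - q^{\ast}\| < r$ must lie in $W \setminus V$, since $q^{\ast} \in W \cap V$ would contradict the optimality of $q_1$. I would then argue that both $q_1$ and $q_2$ are forced onto $\partial V$; granting this, $q_1, q_2 \in W \cap \partial V$ are two distinct points at common distance $r$ from $x$, and since $W \cap \partial V \subseteq W \cap V$ admits no point strictly closer, we get $r = d_{W \cap \partial V}(x)$ and hence $x \in \Delta_{W \cap \partial V}$.

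\textbf{Main obstacle.} The delicate step is the boundary-localization claim in the second case: that $q_i \in \partial V$ rather than merely $q_i \in V$. A naive ``small neighborhood of an interior point stays in $V$'' argument only yields that $q_i$ is a \emph{local} minimum of $\|x - \cdot\|$ on $W$, which is perfectly compatible with a strictly closer global minimum $q^{\ast}$ elsewhere in $W \setminus V$. The intended argument mirrors Theorem 2.4 of \cite{1}. I expect to proceed either by tracking the evolution of the contact set $\overline{B(x,\rho)} \cap W$ as $\rho$ grows monotonically from $d_W(x)$ up to $r$ and extracting a $\partial V$-contact at the instant the contact set first meets $V$, or by a direct reach-theoretic computation combining the normal-cone viewpoint implicit in Lemma \ref{Lemma 3.13} with the uniqueness of nearest-point projection inside the balls of radius $\tau(W,p)$ and $\tau(W \cap \partial V, p)$ around $p$. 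Making this boundary-localization rigorous is where essentially all the work concentrates; the rest of the proof is a direct packaging.
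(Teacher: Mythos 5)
Your opening reduction is fine as far as it goes: since $\tau(A,p)=d(p,\Delta_A)$, the inclusion $\Delta_{W\cap V}\subseteq\Delta_W\cup\Delta_{W\cap\partial V}$ would indeed give the inequality, and this is the same general route (medial-axis inclusions) that the paper itself takes. The problem is the step you yourself flag as the main obstacle: in the case $d_W(x)<r$ the two nearest points $q_1,q_2\in W\cap V$ are \emph{not} forced onto $\partial V$, and in fact the inclusion you reduce to is false. Take $E=\mathbb{R}^2$, $W=\{a,q_1,q_2\}$ with $a=(0,0)$, $q_1=(10,1)$, $q_2=(10,-1)$, and $V=\overline{B}\bigl((10,0),5\bigr)$. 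Then $W\cap V=\{q_1,q_2\}$ with both points in the interior of $V$, and $W\cap\partial V=\emptyset$. The point $x=(3,0)$ lies on the bisector of $q_1,q_2$, so $x\in\Delta_{W\cap V}$ with $r=\sqrt{50}$, and $d_W(x)=3<r$, i.e.\ you are exactly in your second case; yet $q_1,q_2\notin\partial V$, $\Delta_{W\cap\partial V}=\emptyset$, and $x\notin\Delta_W$ (its unique nearest point in $W$ is $a$, and this persists in a whole neighborhood, so $x$ is not even in the closure). Hence $\Delta_{W\cap V}\not\subseteq\Delta_W\cup\Delta_{W\cap\partial V}$, and no refinement of the contact-set or normal-cone argument can rescue the boundary-localization claim at this level of generality: the statement you are trying to prove there is simply not true.

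Note that the theorem itself is untouched by this example: $\tau(W\cap V,q_1)=1=\tau(W,q_1)$ (the point $(10,0)$ lies in $\Delta_W$) while $\tau(W\cap\partial V,q_1)=+\infty$, so the inequality holds; the offending $x$ sits at distance $\sqrt{50}$ from $q_1$, far beyond $\min(\tau(W,q_1),\tau(W\cap\partial V,q_1))$. This tells you what a correct proof must look like: the base point $p$ and the quantitative constraint $\|x-p\|<\min(\tau(W,p),\tau(W\cap\partial V,p))$ have to enter the argument — it suffices to show no $x\in\Delta_{W\cap V}$ can be that close to $p$ — rather than any purely set-theoretic inclusion of medial axes. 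Concretely, the viable route is a pointwise adaptation of the proof of Theorem 2.4 of \cite{1} (first contact along the segment from a nearest point toward $x$, or the directional-reach mechanism behind Lemma \ref{Lemma 3.13}), carried out for such nearby $x$ only. For comparison, the paper's own two-line proof asserts the even stronger inclusions $\Delta_{W\cap V}\subset\Delta_W$ and $\Delta_{W\cap V}\subset\Delta_{W\cap\partial V}$ by monotonicity considerations, and the same example refutes those as well; so your instinct that real work is needed here is right, but the work cannot be organized around the global inclusion you chose.
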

\begin{proof}
Since $W \cap V \subset W$ and $W \cap V \subset W \cap \partial V$, we have $\Delta W\cap V \subset \Delta W$ and $\Delta W\cap V \subset \Delta W\cap \partial V $. Now we ﬁnish the proof by the deﬁnition of $\tau(W,p)$.
\end{proof}
\begin{corollary}\label{Corollary 3.19.}
For closed subsets $V_1,...,V_s$ and $W$ of $E$ we have
$\tau(W \cap V_1 \cap...\cap V_s,p)\geq  \min_{I\subset(1,...,s)}\tau(W \cap \bigcap_{\in I}\partial V_i,p)$
\end{corollary}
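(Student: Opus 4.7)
The plan is to prove Corollary 3.19 by induction on $s$, using Theorem 3.18 as the engine at each step. The base case $s=1$ is essentially Theorem 3.18 itself, once one adopts the convention that the empty intersection equals $E$ (so that the term $I = \emptyset$ on the right-hand side contributes $\tau(W,p)$). Under this convention, the inequality $\tau(W \cap V_1, p) \geq \min(\tau(W,p), \tau(W \cap \partial V_1, p))$ is exactly the claim of the corollary for $s=1$.

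For the inductive step, I would assume the corollary holds for $s-1$ closed subsets and derive it for $s$. First I would apply Theorem 3.18 with $V := V_s$ and with $W$ replaced by $W \cap V_1 \cap \cdots \cap V_{s-1}$, obtaining
\begin{equation*}
\tau\bigl(W \cap V_1 \cap \cdots \cap V_s, p\bigr) \geq \min\Bigl(\tau(W \cap V_1 \cap \cdots \cap V_{s-1}, p),\ \tau(W \cap V_1 \cap \cdots \cap V_{s-1} \cap \partial V_s, p)\Bigr).
\end{equation*}
Then I would invoke the inductive hypothesis separately on the two terms on the right. For the first, it gives directly a minimum over $I \subset \{1,\dots,s-1\}$ of $\tau(W \cap \bigcap_{i\in I} \partial V_i, p)$. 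For the second, I would apply the inductive hypothesis with $W$ replaced by $W \cap \partial V_s$, producing a minimum over $I \subset \{1,\dots,s-1\}$ of $\tau(W \cap \partial V_s \cap \bigcap_{i \in I} \partial V_i, p)$, which is the same as the minimum over subsets $I \subset \{1,\dots,s\}$ containing $s$.

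Taking the minimum of the two families merges them into the full minimum over all $I \subset \{1,\dots,s\}$, completing the induction. The only non-routine step is the re-application of the inductive hypothesis to $W \cap \partial V_s$ in place of $W$, which requires checking that $\partial V_s$ is closed so that $W \cap \partial V_s$ is again a closed subset of $E$ — this is immediate since the boundary of any subset of $E$ is closed. I do not expect a real obstacle here; the main care point is bookkeeping with the empty-index convention so that the $I = \emptyset$ term on the right correctly accounts for the $\tau(W,p)$ contribution coming out of Theorem 3.18 at every level of the induction.
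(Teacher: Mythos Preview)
Your proposal is correct and follows essentially the same route as the paper: induction on $s$, applying Theorem~3.18 to peel off the last $V_s$, and then invoking the inductive hypothesis twice (once with $W$ and once with $W$ replaced by $W \cap \partial V_s$). Your extra remarks about the empty-index convention and the closedness of $\partial V_s$ make explicit what the paper leaves implicit.
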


\begin{proof}
The case $s = 1$ is covered by the above theorem. In general, we argue by induction on $s$. $$\tau(W \cap V_1 \cap...\cap V_{s+1},p) ≥ \min(\tau(W \cap V_1 \cap...\cap V_s,p),τ(W \cap V_1 \cap...\cap \partial V_s,p))
$$ $$\geq \min( \min_{I\subset(1,...,s)}
\tau(W \cap\bigcap_{i\in I}
\partial V_i,p), \min_{I\subset (1,...,s)}
\tau(W \cap \partial V_{s+1} \cap \bigcap_{i\in I}
\partial V_i,p))
= \min_{I\subset(1,...,s+1)}
\tau(W \cap \bigcap_{ i\in I}\partial
V_i,p)$$
where we have applied the above theorem and twice the induction hypothesis.
\end{proof} 

Now we can get a local version of \textbf{theorem 3.16}. Although in the \textbf{theorem 3.16} we need to take $x \in S(F,G)$, we can get a general inequality when $x \notin S(F,G)$.

\begin{theorem}\label{Theorem 3.20.}
For any semialgebraic system $(F,G)$ deﬁning a semialgebraic set$ S : S(F,G) \subset\mathbb{S}^n$, if $κ^{\ast}(F,G) < \infty$, then $\forall x \in S(F,G)$ we have
$$D^\frac{3}{2}\tau(S,x)\kappa(F,G,x) \geq\frac{1}{7} ,~~ \forall x \in S(F,G).$$
What is more, $\forall x \in \mathbb{S}^n$ we have $$dS(x,∆S) \geq
\frac{1}{
7D^\frac{3}{2}\kappa(F,G,x)} -
\frac{8}{ 7}d_{\mathbb{S}}(x,S(F,G))$$

\end{theorem}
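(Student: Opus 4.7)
The plan is to establish the local reach bound at points of $S$ (the first statement) first, then extend it to arbitrary $x \in \mathbb{S}^n$ (the second statement) using the $D$-Lipschitz property of $x \mapsto 1/\kappa(F,G,x)$ from Theorem 3.7.

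For the first statement, I would apply Corollary 3.19 with $W = \{F = 0\}$ and $V_i = \{g_i \geq 0\}$ to reduce the reach of $S$ at $x$ to reaches of pure-equality subsystems:
\[ \tau(S, x) \;\geq\; \min_{L \subset G} \tau(Z(F^L), x), \]
where $L$ ranges over subtuples of $G$ (we can restrict to those with $q + |L| \leq n+1$, since for larger $L$ the system is overdetermined and $Z(F^L)$ is typically empty near $x$). It then suffices to prove, pointwise for each such $L$, the inequality $\tau(Z(F^L), x) \geq \tfrac{1}{7 D^{3/2} \kappa(F^L, x)}$; taking the min over $L$ on the right and using $\kappa(F,G,x) = \max_L \kappa(F^L, x)$ yields the first claim. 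The pointwise bound on $\tau(Z(F^L), x)$ is obtained by replaying the proof of Theorem 3.16 (i.e.\ \cite{1}, Theorem 4.12) at the specific point $x$ rather than at a global extremum: the ingredients are Lemma 3.3 (which identifies $\kappa$ with $\mu_{proj}$ at zeros of $F^L$) together with the standard quantitative $\mu_{proj}$-based neighbourhood estimate for the zero set, applied locally at $x$.

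For the second statement, fix $x \in \mathbb{S}^n$ and let $y \in S$ realise $d_{\mathbb{S}}(x, S)$. Because $d_{\mathbb{S}}(\cdot, \Delta_S)$ is $1$-Lipschitz, the triangle inequality gives
\[ d_{\mathbb{S}}(x, \Delta_S) \;\geq\; d_{\mathbb{S}}(y, \Delta_S) - d_{\mathbb{S}}(x, y) \;=\; \tau(S, y) - d_{\mathbb{S}}(x, S). \]
Apply the first statement at $y \in S$ to get $\tau(S, y) \geq \tfrac{1}{7 D^{3/2} \kappa(F,G,y)}$, and then invoke Theorem 3.7 to obtain $1/\kappa(F,G,y) \geq 1/\kappa(F,G,x) - D\, d_{\mathbb{S}}(x, y)$. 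Chaining the three inequalities,
\[ d_{\mathbb{S}}(x, \Delta_S) \;\geq\; \frac{1}{7 D^{3/2} \kappa(F,G,x)} - \left( 1 + \frac{1}{7\sqrt{D}} \right) d_{\mathbb{S}}(x, S). \]
Since $D \geq 2$ (imposed in the introduction), $1 + \tfrac{1}{7\sqrt{D}} \leq \tfrac{8}{7}$, which gives the desired bound.

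The main obstacle is Step 1, specifically the pointwise equality-system estimate $\tau(Z(F^L), x) \geq 1/(7 D^{3/2} \kappa(F^L, x))$. The global version in \cite{1} is proved via an argument that implicitly selects a point where the conditioning is realised, so care is required to verify that the quantitative bookkeeping goes through at an arbitrary prescribed $x$. A second subtlety is the case $x \notin Z(F^L)$, where $\tau(Z(F^L), x)$ must still be bounded from below by $1/\kappa(F^L, x)$; this should follow from the identity $\kappa(F^L, x) = \|F^L\|/\|F^L(x)\|$ in the overdetermined regime (forcing $x$ far from the zero locus and hence from its medial axis), combined with the $D$-Lipschitz continuity of $1/\kappa(F^L, \cdot)$ from Theorem 3.5 to transport the bound from a nearest zero. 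The remaining ingredients — the Corollary 3.19 reduction, the Lipschitz step, and the triangle inequality — are routine.
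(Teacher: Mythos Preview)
Your proposal is correct and follows essentially the same route as the paper: reduce via Corollary~3.19 to the pure-equality systems $F^L$, invoke the pointwise version of the algebraic reach bound extracted from the proof of \cite{1}~Theorem~4.12, and then for the second inequality combine the triangle inequality, the first statement at a nearest $y\in S$, and the $D$-Lipschitz property of $1/\kappa$. The paper glosses over the subtlety you flag about $x\notin Z(F^L)$, and it only needs $D\geq 1$ (not $D\geq 2$) to obtain the constant $8/7$, since $1+\tfrac{1}{7\sqrt{D}}\leq \tfrac{8}{7}$ already for $D\geq 1$.
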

\begin{proof}
From the proof of \cite{1} \textbf{Theorem 4.12}, we get for any homogeneous algebraic system $(F)$ deﬁning a semialgebraic set $S : S(F) \subset \mathbb{S}^n$,
if $\kappa^{\ast}(F) < \infty$, then
$$D^\frac{3}{2}\tau(S,x)\kappa(F,x) \geq
\frac{1}{7}.$$

We turn to the general case $S := S(F,G) \subset \mathbb{S}^n$ and we assume $\kappa^{\ast}(F,G) < \infty.$ For $g \in G$ we deﬁne $P_g := \{x \in \mathbb{S}^n|g(x) \geq 0\}$ and $W := S(F,\emptyset)$ so that $S = W \cap(\cap_{g\in G}P_g)$. We claim that for any $L \subset G$, $$W \cap(\cap_{g\in G}P_g) = S(F^L,\emptyset)$$. The left to right inclusion is clear since $\partial P_g$ is contained in the zero set of $g$. Conversely, let $x \in S(F^L,\emptyset)$ (in particular, $q+|L|\leq n$, by \cite{1}). The derivative $DF^L(x)$ is surjective, because $κ(F^L,x) < \infty$. In particular, for any $g \in L$, $Dg(x) \neq 0$ and since $g(x) = 0$ it follows that the sign of $g$ changes around $x$. Thus $x \in \partial P_g$ and the above equation follows.

The above theorem implies that
$$\tau(S,x) \geq \min_{L\subset G}
\tau(W \cap(\cap_{g\in G}P_g) ) = \min_{L⊂G}
\tau(S(F^L,\emptyset,x)).$$
It suﬃces to take the minimum over the $L \subset G$ such that $q + |L| \leq n + 1$ because $S(F^L,\emptyset) = \emptyset$ for larger $L$. We obtain from the case $G = \emptyset$ above,
$$7D^{
\frac{3}{2}} \tau(S,x) \geq \min_L
7D^{\frac{3}{2}} \tau(S(F^L,\emptyset),x) \geq \frac{1}{ \max_L κ(F^L,x)}
=
\frac{1}{ \kappa(F,G,x)}
.$$
Now take $y \in S$ such that $d_{R^{n+1}}(x,S) = d_{R^{n+1}}(x,y)$. Then $$d_S(x,∆S) \geq \tau(S,x) \geq d_{R^{n+1}}(y,\Delta S)-d_{R^{n+1}}(x,y) \geq
\frac{1}{ 7\kappa(S(F,G),y)} -d_{R^{n+1}}(x,y)
$$ $$\geq
\frac{1}{ 7\kappa(S(F,G),x)} -
\frac{D}{ 7D
^{\frac{3}{2}}}
d_{R^{n+1}}(x,y)-d_{R^{n+1}}(x,y) \geq
\frac{1}{ 7D
^{\frac{3}{2}}} \kappa(F,G,x) -
\frac{8}{7}
d_S(x,S)$$
from the Lipchitz continuity of $\frac{1}{\kappa(F,G,x)}$.
\end{proof}

Let us recall the \cite{1} \textbf{Theorem 4.19} which related two neighbourhoods of a set in an Euclidian space.
\begin{theorem}\label{Theorem 3.21.}
(\cite{1} \textbf{Theorem 4.19}.) Let $q \leq n + 1$. For any positive number $r < (13D^\frac{3}{2}\kappa^2_{\ast})^{−1}$ we have
$Approx(F,G,r) \subset\mathcal{U}_{\mathbb{S}}(S(F,G),3\kappa_{\ast}r)$.

\end{theorem}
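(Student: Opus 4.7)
The plan is to reduce the inclusion to a Newton-theoretic argument on a well-chosen subsystem of equalities. Given $x \in Approx(F,G,r)$, first identify the set of ``active'' inequalities $L := \{g \in G \mid g(x) < \|g\| r\} \subset G$; by the definition of $Approx$ one has both $|f(x)| < \|f\| r$ for all $f \in F$ and $-\|g\| r < g(x) < \|g\| r$ for all $g \in L$, so each component of the augmented system $F^L = (f_1,\ldots,f_q,g_{j_1},\ldots,g_{j_{|L|}})$ is small at $x$ relative to its norm. One also needs $q + |L| \leq n+1$: otherwise $DF^L(x)$ cannot be surjective and $\kappa(F^L,x) = \infty$, contradicting $\kappa(F^L,x) \leq \kappa_{\ast} < \infty$ that is built into the definition of $\kappa^{\ast}(F,G)$. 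Hence $F^L \in \mathcal{H}_d[q+|L|]$ is a square or underdetermined system in the regime of Smale's $\alpha$-theory.

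Next I would invoke a standard $\gamma$-theoretic approximate-zero criterion for $F^L$ on $\mathbb{S}^n$, using the bound $\gamma(F^L,x) \leq \tfrac{1}{2} D^{3/2}\mu_{\mathrm{norm}}(F^L,x)$ (a classical consequence of the higher-derivative estimates on the Weyl inner product, appearing in the proof of Proposition~4.17 in \cite{1}). Writing $u := \|F^L(x)\|/\|F^L\| \leq r$ and exploiting $\mu_{\mathrm{norm}}(F^L,x) \leq \kappa(F^L,x) / \sqrt{1 - u^2}$ from Definition~\ref{Deﬁnition 3.2}, the hypothesis $r < (13 D^{3/2} \kappa_{\ast}^2)^{-1}$ is calibrated exactly so that the $\alpha$-theory convergence radius is met. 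Newton iteration on $F^L$ starting from $x$ therefore converges to a zero $x^* \in \mathbb{S}^n$ of $F^L$ satisfying an explicit bound $d_{\mathbb{S}}(x,x^*) \leq 2 u \, \mu_{\mathrm{norm}}(F^L,x) \leq 2\kappa_{\ast} r / \sqrt{1-r^2}$, which with the $13$ in the hypothesis is comfortably below $3\kappa_{\ast} r$.

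The main obstacle, and the step where the constants are actually spent, is showing that this Newton limit $x^*$ lies in $S(F,G)$ itself, i.e.\ that the \emph{inactive} inequalities $g \in G \setminus L$ remain nonnegative at $x^*$. By construction such a $g$ satisfies $g(x) \geq \|g\| r$. A Markov-type inequality bounds the Lipschitz constant of $g/\|g\|$ on $\mathbb{S}^n$ by a multiple of $D$, so $|g(x) - g(x^*)| \leq C D \|g\| \cdot d_{\mathbb{S}}(x,x^*)$ for an absolute $C$, and the bound $d_{\mathbb{S}}(x,x^*) \leq 2\kappa_{\ast} r$ derived above together with $r < (13 D^{3/2} \kappa_{\ast}^2)^{-1}$ gives $g(x^*) > 0$. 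The final geodesic distance from $x$ to $S(F,G)$ is then at most $d_{\mathbb{S}}(x,x^*) \leq 3\kappa_{\ast} r$, completing the inclusion. The delicate part is this bookkeeping: the threshold defining $L$, the $\gamma$-theoretic constant, the conversion between Euclidean and spherical distance, and the Markov bound all interact, and the factor $13$ in the hypothesis is the smallest integer that makes every inequality along the chain slack enough to yield precisely the stated $3\kappa_{\ast}r$ radius.
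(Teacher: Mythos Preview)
Your overall strategy---extract the active constraints $L$, run Newton on the augmented equality system $F^L$, and argue that the limit lies in $S(F,G)$---is the right first move, but the step where you claim the \emph{inactive} inequalities $g\in G\setminus L$ remain nonnegative at the Newton limit $x^*$ does not close quantitatively. You have margin $g(x)\ge \|g\|\,r$, while the displacement is $d_{\mathbb S}(x,x^*)\le 3\kappa_*r$; any Markov-type bound gives $|g(x)-g(x^*)|\le c_D\,\|g\|\,d_{\mathbb S}(x,x^*)$ with $c_D\ge 1$, so the potential drop $3c_D\kappa_*r\,\|g\|$ already exceeds the margin $\|g\|\,r$ whenever $3c_D\kappa_*>1$. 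Since $\kappa_*\ge 1$ always (Theorem~\ref{Theorem 3.4. }), this fails for every input, and the hypothesis $r<(13D^{3/2}\kappa_*^2)^{-1}$ cannot help because $r$ cancels from both sides of the comparison. (A smaller issue: your argument that $q+|L|\le n+1$ is circular, since by definition $\kappa^*(F,G)$ only ranges over subtuples of that size and therefore says nothing about $\kappa(F^L,x)$ for larger $L$; the correct route is Lemma~\ref{Lemma 3.22.}.)

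The proof in \cite{1}, whose local refinement is spelled out here as Theorem~\ref{Theorem 3.23.}, sidesteps this obstruction by an induction on the number $s$ of inequalities. If $L\neq\emptyset$ one applies the induction hypothesis directly to $(F^L,G\setminus L)$ and is done. If $L=\emptyset$ one runs the Newton flow on $F$ alone; should the limit $y_\infty$ violate some $g\ge 0$, one stops the flow at the first time $s$ at which some $g$ vanishes, observes that $y_s\in Approx(F^{H},G\setminus H,\,re^{-s})$ for the nonempty set $H$ of constraints hitting zero, and invokes the induction hypothesis on the strictly smaller inequality set $G\setminus H$ to get $d_{\mathbb S}(y_s,S)<3\kappa_* r e^{-s}$. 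Together with the flow estimate $d_{\mathbb S}(x,y_s)<3\kappa_* r(1-e^{-s})$ this telescopes to $d_{\mathbb S}(x,S)<3\kappa_* r$. It is the induction, not a one-shot Lipschitz bound, that absorbs the possible sign change of the inactive constraints.
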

 
Actually, there is local version of the above theorem, which will be introduced after the following lemma, which is a local version of \cite{1} \textbf{Lemma 4.18.}

\begin{lemma}\label{Lemma 3.22.}
Let H ⊂ L ⊂ G be such that |H| = n−q + 1. Suppose that κ(FH) < ∞ and 0 < r < 1 κ(FH,p), p ∈Sn. Then p / ∈ Approx(FL,G\L,r).
\end{lemma}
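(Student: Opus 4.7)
My plan is to argue by contradiction, exploiting the observation that the numerical choice $|H|=n-q+1$ is calibrated precisely to push $F^H$ into the \emph{overdetermined} regime on $\mathbb{S}^n$. Concretely, $F^H$ has $q+|H|=n+1$ homogeneous components in $n+1$ variables, so for any $p\in\mathbb{S}^n$ the restriction $DF^H(p)|_{T_p}$ sends the $n$-dimensional tangent space $T_p=\{p\}^\perp$ into $\mathbb{R}^{n+1}$ and therefore cannot be surjective. By Definition 3.2 this forces $\mu_{\mathrm{proj}}(F^H,p)=\infty$, so the defining formula for the condition number collapses to the identity
\[
\frac{1}{\kappa(F^H,p)} \;=\; \frac{\|F^H(p)\|}{\|F^H\|}.
\]
The hypothesis $\kappa(F^H)<\infty$ is exactly what guarantees that this ratio makes sense, i.e.\ that $\|F^H\|\neq 0$ and (since $F^H$ is overdetermined) that $\|F^H(p)\|>0$.

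With the displayed identity in hand, the rest reduces to a one-line contradiction. Suppose toward contradiction that $p\in \mathrm{Approx}(F^L,G\setminus L,r)$. Since $H\subset L$, every component of the tuple $F^H$ appears as one of the equality components of $F^L$, so the defining condition of $\mathrm{Approx}$ directly yields $|f(p)|<\|f\|\,r$ for each $f$ in $F^H$. Squaring and summing across the components of $F^H$ produces $\|F^H(p)\|^2<r^2\|F^H\|^2$, i.e.\ $\|F^H(p)\|/\|F^H\|<r$. Substituting into the displayed identity gives $1/\kappa(F^H,p)<r$, contradicting the hypothesis $r<1/\kappa(F^H,p)$.

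The main "obstacle" is really just noticing that the numerical choice $|H|=n-q+1$ is what triggers the overdetermined collapse of $\kappa$; once that observation is spotted, no appeal to the Lipschitz machinery of Theorem 3.7, to the reach bounds of Theorem 3.20, or to any other metric input is needed. In particular the lemma is genuinely local at $p$ (no quantity $\kappa^{*}$ of the global system appears), which is the property that will let it be used later to replace the global hypothesis $r<(13D^{3/2}\kappa_{*}^{2})^{-1}$ of Theorem 3.21 by a pointwise one.
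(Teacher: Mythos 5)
Your proof is correct and follows essentially the same route as the paper's: assume $p\in Approx(F^L,G\setminus L,r)$, use $H\subset L$ to get $\|F^H(p)\|/\|F^H\|<r$, and invoke the overdetermined collapse $\kappa(F^H,p)=\|F^H\|/\|F^H(p)\|$ (valid since $q+|H|=n+1>n$) to contradict $r<1/\kappa(F^H,p)$. The only cosmetic difference is that the paper cites Lemma 4.11 of \cite{1} (i.e.\ $S(F^H,\emptyset)=\emptyset$) where you note directly that $\kappa(F^H)<\infty$ keeps the ratio well defined.
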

 \begin{proof}
 Since $\kappa(F^H) < \infty$ we have $S(F^H,\emptyset) = \emptyset$, by \textbf{Lemma 4.11}in \cite{1}. Assume $p \in Approx(F^L,G\setminus L,r)$. Then as $H \subset L$ we have that $|h(p)|\leq e||h||$. for all $h \in F^H$ and it follow
 s that $$\frac{1}{\kappa(F^H,p)}  = \frac{||F^H(p)||}{ ||F^H||} \leq r$$. This contradicts with the hypothesis on $r$ and hence $p \notin Approx(F^L,G\setminus L,r)$.
 \end{proof}

Now we can give the proof of the local version of \ref{Theorem 3.21.}.

\begin{theorem}\label{Theorem 3.23.}
Deﬁne $B_s = ∑^{s−1}_{ i=o}(4D)^i = \frac{(4D)^s−1}{ 4D−1}$. Assume that $\kappa(F,G) < \infty$. Let $q \leq n + 1$. For any positive number $r < 13D 3 2 κ(F,G,x)$ we have $$x ∈ Approx(F,G,r) =⇒ dS(x,S(F,G)) < 3Bsκ(F,Gx)r.$$
\end{theorem}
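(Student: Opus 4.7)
The plan is to mirror the proof of the global version, Theorem \ref{Theorem 3.21.} (Proposition 4.19 of \cite{1}), but to replace the uniform constant $\kappa_{\ast}$ by the point-wise condition number $\kappa(F,G,x)$ and to track carefully how the $D$-Lipschitz continuity of $x\mapsto \kappa(F,G,x)^{-1}$ from Theorem \ref{Theorem 3.7.} propagates through each iteration of the argument.

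I would organize the proof as an induction on $s$, interpreted as the number of inequalities that need to be \emph{activated} (promoted from strict inequality to equality) before we land exactly in $S(F,G)$. The base case $s=1$ gives $B_1=1$: if $x\in \mathrm{Approx}(F,\emptyset,r)$ with $r<1/(13D^{3/2}\kappa(F,x))$, then the surjectivity of $DF(x)$ guaranteed by $\kappa(F,x)<\infty$, combined with a Newton-type iteration as in the proof of Proposition~4.19 of \cite{1}, produces a zero $y\in S(F,\emptyset)$ at geodesic distance less than $3\kappa(F,x)r$ from $x$, which matches the coefficient $3B_1=3$.

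For the inductive step, let $x\in \mathrm{Approx}(F,G,r)$ with $r<1/(13D^{3/2}\kappa(F,G,x))$. Applied contrapositively, Lemma \ref{Lemma 3.22.} tells us that for any $H\subset L\subset G$ with $|H|=n-q+1$, the bound $r<1/\kappa(F^H,x)$ forces $x\notin \mathrm{Approx}(F^L,G\setminus L,r)$. This constrains which subsystems can have $x$ in their approximation set, so the ``tight'' inequalities at $x$ (those with $g(x)<\|g\|r$) form a sub-tuple $L$ with $|L|<n-q+1$, and in particular $\kappa(F^L,x)\leq \kappa(F,G,x)$ by the very definition of the semialgebraic condition number. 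Promoting these tight inequalities into equalities, I would apply the inductive hypothesis to the augmented system $(F^L,G\setminus L)$ to find a point $x'$ close to $x$ in the zero set of $F^L$. If $x'$ also satisfies every remaining inequality $g\in G\setminus L$, then $x'\in S(F,G)$ and we are done; otherwise some non-tight inequality is violated at $x'$, we add it to the activated tuple, and we recurse.

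The main obstacle will be book-keeping the constants across the recursion. At each step we displace $x$ by at most $3B_{s-1}\kappa(F,G,x)r$, and Theorem \ref{Theorem 3.7.} gives $\kappa(F,G,x')^{-1}\geq \kappa(F,G,x)^{-1}-D\, d_{\mathbb{S}}(x,x')$, so that $\kappa(F,G,x')\leq 2\kappa(F,G,x)$ once the smallness of $r$ is unpacked (this is where the hypothesis $r<1/(13D^{3/2}\kappa(F,G,x))$ is absorbed). The factor $4D$ picked up at each level of the recursion is produced by combining this factor of $2$ from the condition-number degradation with a factor of $2D$ arising from the geometric shift, and summing the contributions over at most $s$ recursion levels yields the geometric series $B_s=\sum_{i=0}^{s-1}(4D)^i$. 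Verifying that the smallness threshold $r<1/(13D^{3/2}\kappa(F,G,x))$ is preserved at every intermediate point, so that the inductive hypothesis can be re-invoked without losing control of the denominators, is the most delicate part of the argument.
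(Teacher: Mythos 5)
Your skeleton matches the paper's in outline (induction on the number of inequalities, the set $L=\{g\in G:|g(x)|<r\|g\|\}$ of nearly-tight constraints, Lemma \ref{Lemma 3.22.} to force $|L|\le n-q$, the monotonicity $\kappa(F^L,G\setminus L,x)\le\kappa(F,G,x)$, and Lipschitz bookkeeping of $\kappa$), but the inductive step as you describe it has a genuine gap, in two places. First, when $L\neq\emptyset$ the induction hypothesis already finishes the proof: it bounds $d_{\mathbb S}(x,S(F^L,G\setminus L))$, and every point of $S(F^L,G\setminus L)$ automatically satisfies the remaining inequalities, so $S(F^L,G\setminus L)\subset S(F,G)$ and no re-checking is needed. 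Your plan to ``find a point $x'$ close to $x$ in the zero set of $F^L$, check the remaining inequalities at $x'$, and activate any violated one'' misreads the induction hypothesis, which is a distance bound rather than a procedure producing a point that could then violate constraints.

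Second, and more seriously, your proposal omits the case $L=\emptyset$, which is the heart of the paper's argument: when no inequality is tight at $x$ there is nothing to ``activate'' at $x$ itself, and the induction cannot be launched from $x$. The paper handles this by running the continuous Newton flow $\dot x_t=-D\widetilde F(x_t)^{+}\widetilde F(x_t)$ in the affine space $T_x$ (legitimate because the hypothesis on $r$ gives $\sqrt2\,\kappa(F,G,x)\ge\mu_{proj}(F,x)$ and hence $\alpha(\widetilde F,x)\le\tfrac1{13}$, so Theorem 3.1 of \cite{1} applies). Either the limit $y_\infty$ is a zero of $F$ satisfying all inequalities, giving $d_{\mathbb S}(x,S)\le 2\sqrt2\,\kappa(x)r<3\kappa(x)r$, or some inequality first vanishes at a time $s_0$; then, because the residuals decay like $e^{-t}$ along the flow, one gets $y_{s_0}\in Approx(F^H,G\setminus H,re^{-s_0})$ with $H$ the inequalities vanishing at $y_{s_0}$, Lemma \ref{Lemma 3.22.} gives $|H|<n-q+1$, and the induction hypothesis is invoked at the \emph{displaced} point $y_{s_0}$ with the \emph{shrunk} radius $re^{-s_0}$. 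The per-level factor $4D$ in $B_s$ comes exactly from comparing $\kappa(y_{s_0})$ with $\kappa(x)$ via the $D$-Lipschitz continuity of $\kappa^{-1}$ combined with the displacement bound $\|x-y_{s_0}\|\le 2\sqrt2\,\kappa(x)r(1-e^{-s_0})$, yielding $d_{\mathbb S}(x,S)<3(1-e^{-s_0})\kappa(x)r+3B_{s-1}e^{-s_0}\kappa(y_{s_0})r\le 3(1+4DB_{s-1})\kappa(x)r=3B_s\kappa(x)r$. Your heuristic ``factor $2$ of condition-number degradation times $2D$ from the geometric shift'' is not a substitute for this computation, and without the flow and the $e^{-s_0}$ decay of the radius you cannot verify the point you yourself flag as delicate, namely that the smallness threshold on $r$ survives at the intermediate point where the induction hypothesis is re-invoked.
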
 
\begin{proof}
We will abbreviate $S := S(F,G)$ and $\kappa(x) := \kappa(F,G,x)$. The proof is by induction on $s$ which is the number of polynomials in $G$.

We use $l = n−q + 1$ to denote the diﬀerence between the number of variables and the number of equations. If $l = n−q + 1 = 0$, then $\kappa(F,p) < \infty$ and because of our hypothesis, $r <\frac{1}{\kappa(F,p)}$. We deduce from the \textbf{lemma 3.22} with $L = H = \emptyset$. Now we assume $l > 0$, i.e. $q \leq  n$, and consider a point $x \in Approx(F,G,r)$. It is enough to show that $d_S(x,S) < 3\kappa(x)r$.

To do so, we focus on the set 
$L := \{g \in G | ~|g(x)| < r||g||.\}$ By construction, we have $x \in Approx(F^L,G\setminus L,r)$, and moreover $g(x) \geq r||g|| > 0$ for all $g \in G\setminus L$. We further note that $|L|\leq n−q$, otherwise there would exist $H \subset L$ with $|H| = n−q+1$ and we would use again \textbf{lemma 3.22} to deduce that $Approx(F,G,r) = \emptyset$, in contradiction with the fact that $x \in Approx(F,G,r)$. We next divide by cases. 

\textbf{Case 1} : $L \neq \emptyset$. As $|F^L|\leq n + 1$ we may apply the induction hypothesis to the larger set $F^L$ of equations and the smaller set $G\setminus L$ of inequalities. Note that $\kappa(F^L,G\setminus L,x) \leq \kappa(F,G,x)$ so the hypothesis on $r$ is still true for $S(F^L,G\setminus L)$. The inclusion hypothesis yields
$$p \in Approx(F^L,G\setminus L,r) \implies d_S(x,S(F^L,G\setminus L)) < 3B_{s−1}\kappa(FL,G\L,x)r \implies d_S(x,S) < 3B_s\kappa(x)r.$$Hence we obtain the theorem in this case.

\textbf{Case 2} : $L = \emptyset$. We put $u := \frac{|F(x)|}{||F||}$. Then $u \leq r$ since $x \in Approx(F,G,r)$. Moreover, $\kappa(F,G,x)u \leq \kappa(F,G,x)r < 1 3$ by the assumption. By deﬁnition, $\kappa(F,G,x) \geq \frac{1}{ \mu_{proj}(F,x)^{−2}+u^2} \geq \frac{1}{ 2}\min\{\mu_{proj}(F,x)^2,u^{−2}\}.$ The minimal euqals $\mu_{proj}(F,x)^2$ since $\kappa(F,G,x)u \leq \frac{1}{13}$. So we get $\sqrt{2}\kappa(F,G,x) \geq \mu_{proj}(F,x) = \mu_{norm}(\widetilde F,x)$ where $\widetilde F := F|T_x$ denotes the restriction
of $F$ to the affine space $T_x$. It follows that $$alpha (\widetilde F,x) \leq \frac{1}{2}D^{\frac{3}{2}} \mu_{norm}(\widetilde F,x)2u$$ $$ \leq D^{\frac{3}{2}}  \kappa(F,G,x)2r \beta(\widetilde F,x) \leq \mu_{norm}(\widetilde F,x) \leq\sqrt{2}\kappa(F,G,x)r$$.

From the assumption on $r$, we get $\alpha(\widetilde F,x) \leq \frac{1}{13}$ which makes possible the application of \textbf{Theorem 3.1}. We also note that $\beta(\widetilde F,x) < \frac{1}{13}$. As in \textbf{section 3.2} of \cite{1}, we deﬁne $x_t$ in the affine space $T_x$ by the system of diﬀerential equations
 $$\dot{x_t} = −D\widetilde F(x_t)^+\widetilde F(x_t), x_0 = x$$. Note that $x_t \neq 0$ for all $t \geq 0$ as $||z||\geq 1$ for all $z\in T_x$. We deﬁne $y_t := \frac{x_t}{||x_t||}\in \mathbb{S}^n$. By \textbf{theorem 3.1}, there is a limit
 point $x_{\infty} \in T_x$, which is a zero of $\widetilde F$ which satisﬁes
 $||x_{\infty}−x|| < 2\beta(\widetilde F,x)$. In particular, $y_{\infty}$ is a zero of $F$ and $d_S(y_{\infty},x) \leq ||x_{\infty}−x||\leq 2\beta(\widetilde F,x) \leq 2\sqrt{2}\kappa(F,G,x) < 3\kappa(F,G,x)$ If $g(y_{\infty}) \geq 0$ for all $g \in G$, then $y\infty \in S$ and $d_S(x,S) ≤ d_S(x,y_{\infty})$ hence we are done.
 
Notice we have proved that for any positive number $r < (13D^{\frac{3}{2}}  \kappa(F,x)^2)^{-1}$, if $x \in  Approx(F,r)$, then $d_S(S(F),x) < 3\kappa(F,x)r$. So we prove the initial case of the induction when $s = 0$.

So suppose that $g(y_{\infty}) < 0$ for some $g \in G$ and let $s > 0$ be the smallest real number such that $g(y_s) = 0$ for some $g \in G$. By construction, the set $H := \{g \in G|g(y_s) = 0\}$ is nonempty and element of $G\setminus H$ is positive at $y_s$. Also, for every $f \in F$,
$$|f(y_s)| =
\frac{f(x_s)}{ ||x_s||^{deg(f)}} \leq |f(x_s)| = |f(x)|e^{-s} \leq ||f||re^{-s}$$ where the second equality is due to \textbf{Theorem3.1.(i)} in \cite{1}. Therefore, $y_s \in Approx(F^H,G\setminus H,re^{-s})$
.
Using again the \ref{Lemma 3.22.} we deduce that $|H| < n-q+1 = l$. We can therefore apply the induction hypothesis to the larger set $F^H$ of equations and the smaller set $G\setminus H$ of inequalities. Thus we obtain
$y_s \in Approx(F^H,G\setminus H,re^{-s})$ $$\implies d_S(y_s,S(F^H,G\setminus H)) < 3B_{s-1}\kappa(F^H,G\setminus H,y_s)re^{-s} \implies d_S(y_s,S(F,G)) < 3B_{s-1}\kappa(y_s)re^{-s},$$
the latter because $S(F^H,G\setminus H) \subset S$ and $A_{s-1}\kappa(F^H,G\setminus H,y_s) \leq B_{s-1}\kappa(y_s)$. Also, by \textbf{theorem 3.1(ii)} in \cite{1}, $$d_S(y_s,x) \leq ||x_s -x||\leq \beta( \widetilde F,x)(1-e^{-s}) < 2\sqrt{2}\kappa(x)r(1-e^{-s}).$$
We ﬁnally deduce that
$$d_S(x,S) \leq d_S(x,y_s) + d_S(y_s,S) < 3(1-e^{-s})\kappa(x) + 3B_{s-1}e^{-s}\kappa(y_s)$$ $$ \leq 3\kappa(x) + 3B_{s-1}D\kappa(x)\kappa(y_s)||x-y_s|| \leq 3(1 + 4DB_{s-1})\kappa(x) = 3B_s\kappa(x).$$
\end{proof}

\end{subsubsection}

\begin{subsubsection}{Main theorem}
Notice the fourth condition in our extension of the Niyogi-Smale-Weinberger theorem (\ref{Theorem 3.15}) is a Lipschitz continuity condition about the radius function. Since $\frac{1}{ \kappa(F,G,x)}$ is $D$-Lipschitz continuous on sphere (\ref{Theorem 3.7.}) , it is nature to consider the $\epsilon(x) = \frac{c}{\kappa(S(F,x))}$ for certain constant $c$. In this case, we can simplify the requirements of our Niyogi-Smale-Weinberger theorem. 
\begin{corollary}\label{Corollary 3.24.}
 Let $S$ and $\mathcal{X}$ be two nonempty compact subsets of $\mathbb{S}^n$. Given a function $\epsilon : \mathcal{X} → R$. Suppose $\epsilon(x) = \frac{c}{ \kappa(F,G,x)}$. If
$$c ≤
\frac{1}{ 40D^{\frac{3}{ 2}}},$$
then
$$\max_{f\in F,g\in G}\{\frac{||f(x)||}{ ||f||}, \frac{−g(x)}{ ||g|| }\} < \frac{1}{ 13(4D)^sD^{\frac{3}{ 2}} \kappa^2(F,G,x)} $$
and
$$\forall p \in S, \exists x ∈ \mathcal{X}~~ such~~ that~~ d_S(x,p) <
\frac{1}{ 4}
\epsilon(x)$$
are enough to show that the set $S$ is a deformation retract of $\mathcal{U}(\mathcal{X},ϵ)$.

\end{corollary}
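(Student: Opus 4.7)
The plan is to apply the local Niyogi--Smale--Weinberger theorem (Theorem~\ref{Theorem 3.15}) with $X=\mathcal{X}$, $S=S(F,G)$, and the radius function $\epsilon(x)=c/\kappa(F,G,x)$. This reduces the corollary to verifying the four hypotheses of Theorem~\ref{Theorem 3.15}: (a) $d(x,\Delta_S)>5\epsilon(x)$ for all $x\in\mathcal{X}$; (b) $d(x,S)<\epsilon(x)/3$ for all $x\in\mathcal{X}$; (c) for every $p\in S$ there exists $x\in\mathcal{X}$ with $\|x-p\|<\epsilon(x)/4$; and (d) $\epsilon$ is $1/17$-Lipschitz on $\mathcal{X}$.

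Two of these are essentially immediate. Hypothesis (c) is exactly the second assumption of the corollary, after noting that the chord length in $\mathbb{R}^{n+1}$ is dominated by the geodesic length on $\mathbb{S}^n$. Hypothesis (d) follows from Theorem~\ref{Theorem 3.7.}: the map $x\mapsto 1/\kappa(F,G,x)$ is $D$-Lipschitz in the Euclidean metric, so $\epsilon$ is $cD$-Lipschitz; the hypothesis $c\le 1/(40D^{3/2})$ then gives $cD\le 1/(40\sqrt D)\le 1/40<1/17$ for $D\ge 1$.

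For hypothesis (b), the first assumption of the corollary is precisely the statement that $x\in\mathrm{Approx}(F,G,r_0)$ with $r_0:=1/(13(4D)^s D^{3/2}\kappa(F,G,x)^2)$; since $r_0<1/(13D^{3/2}\kappa^2)$, Theorem~\ref{Theorem 3.23.} applies and yields $d_{\mathbb{S}}(x,S)<3B_s\kappa(F,G,x)\,r_0$. Substituting $r_0$ and using the telescoping estimate $B_s/(4D)^s\le 1/(4D-1)\le 1/7$ (valid once $D\ge 2$) collapses this to a bound of the form $C/(D^{3/2}\kappa)$ with $C$ a small absolute constant, and the choice $c\le 1/(40D^{3/2})$ is calibrated so that this bound falls below $\epsilon(x)/3=c/(3\kappa)$. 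For hypothesis (a), I plan to apply the local reach estimate Theorem~\ref{Theorem 3.20.}, which gives $d_{\mathbb{S}}(x,\Delta_S)\ge 1/(7D^{3/2}\kappa)-(8/7)\,d_{\mathbb{S}}(x,S)$; feeding in the bound just obtained in (b), the right-hand side becomes a quantity of the form $(1-o(1))/(7D^{3/2}\kappa)$ that must exceed $5\epsilon(x)=5c/\kappa$. The key numerology is that $1/(7\cdot 5)=1/35$, so after paying for the $8/7$-correction coming from (b) the worst case is precisely $c\le 1/(40D^{3/2})$.

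The main obstacle will be this bookkeeping of constants: all of $13$, $(4D)^s$, $B_s$, $8/7$ and $5$ must line up so that a single $c$ is compatible with (a) and (b) simultaneously, and the slack between $1/35$ and $1/40$ is what makes the estimate work. A secondary but genuine technical point is the switch of metrics: the reach lower bound of Theorem~\ref{Theorem 3.20.} and the tube inclusion of Theorem~\ref{Theorem 3.23.} are phrased in the spherical distance $d_{\mathbb{S}}$, while Theorem~\ref{Theorem 3.15} is stated in the ambient Euclidean distance; since $\|x-y\|\le d_{\mathbb{S}}(x,y)$ and the two agree to leading order on the geodesic scales of interest (certainly below $\pi/2$), Euclidean lower bounds differ from spherical ones by at most a factor $2/\pi$, which can be absorbed into the constant $c$ without changing the shape of the hypothesis.
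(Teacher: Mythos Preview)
Your proposal is correct and follows essentially the same route as the paper: reduce to Theorem~\ref{Theorem 3.15}, obtain hypothesis (b) from Theorem~\ref{Theorem 3.23.}, feed that into Theorem~\ref{Theorem 3.20.} to get hypothesis (a), and derive hypothesis (d) from the $D$-Lipschitz bound of Theorem~\ref{Theorem 3.7.} together with $c\le 1/(40D^{3/2})$. Your explicit remarks on the Euclidean-versus-geodesic metric comparison are actually more careful than what the paper writes, but the underlying argument is the same.
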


Now we give a proof of our corollary. 

\begin{proof}
 Let $B_s =\sum^{s−1}_{ i=o}(4D)^i = \frac{(4D)^s−1}{ 4D−1}$
 in this corollary. Notice that $B_s < (4D)^s$ since
$D \geq 1$.
Take $B_S(x,r(x))$ such that $$\max\{\frac{||f(x)||}{ ||f||}
, \frac{−g(x)}{ ||g||} \} < r(x) <\frac{ 1}{ 13(4D)^s D^{\frac{ 3}{ 2}} k^2(F,G,x)} \leq\frac{ 1}{ 13B_s D^{\frac{ 3}{ 2}} \kappa^2(F,G,x)}.$$
Then $x \in Approx(F,G,D^{\frac{1}{ 2}} r(x))$ and $13D^{
5}{ 2} \kappa^2(F,G,x)r(x) < 1$ by the assumption. Hence by the \textbf{theorem 3.23}, $$d_S(x,S) < 3_Bs\kappa(F,x)r(x)D^{\frac{ 1}{ 2}} < 3 13D^{\frac{ 5}{ 2}} \kappa(F,G,x)\leq \frac{1}{ 3\epsilon(x)}.$$
Now by the above theorem, we get
$$d_S(x,∆_S) \geq
\frac{1}{
7D^{\frac{
3}{ 2}} \kappa(F,G,x)} -
\frac{
8}{7}
d_S(x,S(F,G)) >
\frac{1}{
7D^{\frac{
3}{ 2}} \kappa(F,G,x)} -
\frac{
8}{21}
\epsilon(x) > 5\epsilon(x)$$
Notice that it is easy to ﬁnd that the second condition in our corollary implies the fourth assumption in our main theorem. Notice
$$c(
\frac{1}{ \kappa(S(F,G),x)} -
\frac{1}{ \kappa(S(F,G),y)}
) \leq cD||x-y|| <
\frac{1}{ 40D^{\frac{
1}{ 2}}}||x-y|| <
\frac{1}{17}||x-y||$$
since $D \geq 1$.

\end{proof}

\end{subsubsection}

\end{subsection}

\subsection{Algorithm for the computation of homology on a semialgebraic set.}
Now let us design a more adaptive algorithms than the main algorithm in the proposition 5.1 of \cite{1}. We will use those notations of \S 2.3.

\begin{theorem}\label{Theorem 3.25.}
For any semialgebraic set $S(F,G)$ where $\#F = q,\#G = s$, there is an algorithm to generate a finite set of balls $\mathcal{B}$ such that
$\bigcup_{B\in\mathcal{B}}B_{\mathbb{S}^{n}}\left ( c_{B},r_{c_{B}} \right )$ is homotopical to $S(F,G)$.  The complexity is $ \dfrac{\left( nsD^{-s}\right)
^{\mathcal{O}_{(n)}}}{A_{n}\log_{e}{2}} \int_{\mathbb{S}^{n}} \left ( \kappa \left (  F,G,x \right ) \right )^{n}dV$ where $A_{n}$ is the area of a $n-1$ dimension unit sphere. 
\end{theorem}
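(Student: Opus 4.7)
The plan is to feed the adaptive cover algorithm of Theorem~\ref{Theorem 2.5} with a radius function built from the reciprocal of the condition number $\kappa(F,G,\cdot)$, then prune away the balls whose centres fail the local approximation inequality; Corollary~\ref{Corollary 3.24.} will then upgrade the resulting cover to a deformation retract of $S(F,G)$.

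Concretely, I would fix a small constant $c$ (slightly below $1/(160 D^{3/2})$, so that a factor of $4$ can be absorbed later) and set $f(x) := c/\kappa(F,G,x)$. Theorem~\ref{Theorem 3.4. } gives $\kappa \ge 1$, so $f$ takes values in $(0,1]$; Theorem~\ref{Theorem 3.7.} says $1/\kappa(F,G,\cdot)$ is $D$-Lipschitz, hence $f$ is $(cD)$-Lipschitz, and Theorem~\ref{Theorem 2.5} applies. Running $\mathrm{Cover}(f)$ returns a finite family $\mathcal{L}_{\mathrm{final}} = \{B_{\mathbb{S}^n}(x,r_x)\}$ covering $\mathbb{S}^n$ with $r_x \le c/\kappa(F,G,x)$, at a cost bounded by
$$\frac{\bigl(16(2+cD)/c\bigr)^n}{A_n \log_e 2} \int_{\mathbb{S}^n} \kappa(F,G,x)^n\, dV,$$
which, up to the dimensional/combinatorial prefactor, matches the complexity announced in the statement.

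The output $\mathcal{B}$ would then be built by a pruning pass: at every centre $x$ evaluate
$$M(x) := \max_{f\in F,\, g\in G} \Big\{ \frac{|f(x)|}{\|f\|},\, \frac{-g(x)}{\|g\|} \Big\},$$
keep $x$ in a set $\mathcal{X}$ iff $M(x) < 1/(13(4D)^s D^{3/2}\kappa(F,G,x)^2)$, and return the balls $B_{\mathbb{S}^n}(x,\epsilon(x))$ at the surviving centres, with $\epsilon(x) := 4c/\kappa(F,G,x)$. With this choice $r_x \le \epsilon(x)/4$, so the two hypotheses of Corollary~\ref{Corollary 3.24.} are in force: the max hypothesis holds on $\mathcal{X}$ by construction, and for the covering hypothesis every $p \in S(F,G)$ lies in some $B_{\mathbb{S}^n}(x,r_x) \in \mathcal{L}_{\mathrm{final}}$; since $f_i(p) = 0$ and $g_j(p) \ge 0$ with the polynomials Lipschitz on $\mathbb{S}^n$, $M(x)$ is small enough to place $x \in \mathcal{X}$, and $d_{\mathbb{S}^n}(x,p) \le r_x \le \epsilon(x)/4$. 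Corollary~\ref{Corollary 3.24.} then produces a deformation retract of $\bigcup \mathcal{B}$ onto $S(F,G)$, which gives the required homotopy equivalence.

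The main obstacle I anticipate is making the sentence ``$M(x)$ is small enough'' quantitative. A naive Lipschitz estimate on the normalised polynomials only gives $M(x) \lesssim D r_x \sim 1/\kappa(F,G,x)$, whereas the pruning threshold demands $M(x) \lesssim 1/\kappa(F,G,x)^2$. Bridging this gap requires a sharper estimate exploiting the defining formula of $\kappa$ in Definition~3.2, so that $|f(x)|/\|f\|$ near the zero set of $F$ is already bounded by $1/\mu_{\mathrm{norm}}(F,x)$ and contributes an extra factor of $1/\kappa$, together with the induction on the number of inequalities that produced the $(4D)^s$ factor in Theorem~\ref{Theorem 3.23.}. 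Carrying out this bookkeeping carefully---refining more aggressively only where the crude bound fails, while keeping the total refinement count proportional to $\int_{\mathbb{S}^n} \kappa(F,G,x)^n\, dV$---is where the $(4D)^{sn}$-type prefactor of the announced complexity enters and where the main technical effort of the proof will lie.
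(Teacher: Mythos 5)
There is a genuine gap, and it sits exactly where you flagged it: the choice of refinement scale. You run $\mathrm{Cover}(f)$ with $f(x)=c/\kappa(F,G,x)$, i.e.\ at resolution $\sim 1/\kappa$, and then hope that every point $p\in S(F,G)$ is covered by a retained centre $x$ with $M(x)<1/(13(4D)^sD^{3/2}\kappa(F,G,x)^2)$. But a centre of a ball of radius $r_x\sim 1/\kappa$ containing $p$ only satisfies the first-order bound $M(x)\lesssim D^{1/2}r_x\sim 1/\kappa$, and this is generically sharp; nothing in Definition~\ref{Deﬁnition 3.2} gives the extra factor of $1/\kappa$ you hope for (the definition only yields $\|F(x)\|/\|F\|\le 1/\kappa(F,x)$, a single factor, and that already at the point $x$ itself, not an improved modulus of continuity near the zero set). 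So with your pruning threshold the centres adjacent to $S$ will in general be discarded, the covering hypothesis of Corollary~\ref{Corollary 3.24.} (every $p\in S$ within $\epsilon(x)/4$ of some retained $x$) cannot be verified, and the deformation-retract conclusion does not follow. Your closing paragraph correctly identifies this as the missing step, but the sketched remedy (a sharper pointwise estimate of $|f(x)|/\|f\|$ in terms of $1/\mu_{\mathrm{norm}}$) is not available.

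The paper resolves this not by a sharper estimate but by refining more deeply from the start: the cover is generated with the radius function $\frac{1}{360(4D)^sD^{5/2}\kappa(F,G,x)^2}$, i.e.\ at scale $1/\kappa^2$ (with Lipschitz constant $\frac{1}{180(4D)^sD^{3/2}}$ via Theorem~\ref{Theorem 3.7.}), and then filters centres by the test $c_B\in Approx(F,G,D^{1/2}r_{c_B})$ before inflating the retained balls to radius $\epsilon(x)=\frac{1}{40D^{3/2}\kappa(F,G,x)}$. At that scale the crude first-order bound is already enough: a centre within $r_x\sim \kappa^{-2}$ of a point of $S$ automatically passes the $Approx$ test at the quadratic threshold, condition (i) of Corollary~\ref{Corollary 3.24.} follows from $13(4D)^sD^{5/2}\kappa^2 r_x<1$, and $r_x<\epsilon(x)/4$ gives condition (ii), with Theorem~\ref{Theorem 3.23.} feeding into the corollary. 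So the correct fix is to replace your $f=c/\kappa$ by a $c'/((4D)^sD^{5/2}\kappa^2)$-type radius function; note this also changes the bookkeeping in your complexity claim, since the bound from Theorem~\ref{Theorem 2.5} is then in terms of $\int_{\mathbb{S}^n}\kappa(F,G,x)^{2n}dV$ rather than the $\int_{\mathbb{S}^n}\kappa(F,G,x)^{n}dV$ you obtain at the coarser scale, and the per-point cost of evaluating $\kappa(F,G,x)$ (the $(nD)^{\mathcal{O}(n)}$ and $(s+1)^{n+1-q}$ factors) must be multiplied in to reach the stated $(nsD^{-s})^{\mathcal{O}(n)}$ prefactor.
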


\begin{algorithm}  
        \caption{ Covering(F,G) }  
        \begin{algorithmic}[1] 
            \Require  homogeneous algebraic system (F,G)
            \Ensure    ﬁnite set $\mathcal{X} = {B_{\mathbb{S}^n}(x,\epsilon(x)}$ of balls such that $\cup B_{\mathbb{S}^n}(x,\epsilon(x)$ is homotopical to $S(F,G)$
            \State $\mathcal{X}  \gets \emptyset$
            \State $Cover( \frac{1}{ 360(4D)^sD^{ \frac{5}{2}} κ(F,G,x)^2}$) 
            \State $L_{final} \gets \{B(x,r_x)|B(x,r_x) \in L_{final}~~ and~~ x \in Approx(F,D, \frac{1}{2}r_x)\}$
            \For{$B_{\mathbb{S}^n}(c_B,r_{c_B}) \in L_{final}$ }
            \If{ $c_B \in Approx(F,G,D^{\frac{1}{2}}r(c_B))$ }
                        \State  $\epsilon(c_B) \frac{1}{40D^{\frac{3}{2}}\kappa(F,G,c_B)}$            
                        \State Add $B(c_B,\epsilon(c_B))$ to $\mathcal{X}$. 
                    \Else 
                      \State Pass
            \EndIf                       
           \EndFor
          \State \Return { $\mathcal{X}$}
        \end{algorithmic}  
    \end{algorithm}  
\begin{proof}
Since $\kappa\left( F,G,x\right) \geq 1$ and $r$ tends to $0$, we see that this algorithm will terminate. So the main problem is the correctness. Suppose $\mathcal{B}$ is the final output of our algorithm. Notice $\epsilon\left( x \right)$ satisfies the requirement of the radius function in the \ref{Corollary 3.24.}.
\\
\\
At first, let $B_{\mathbb{S}}\left(x,r\left(x\right)\right)\in \chi$. Then $x \in Approx\left(F,G,D^{\frac{1}{2}}r\left(x\right)\right)$ and $13\left(4D\right)^{s}D^{\frac{5}{2}}\kappa^{2}\left(F,G,x\right)r\left(x\right) < 360\left(4D\right)^{s}D^{\frac{5}{2}}\kappa^{2}\left(F,x\right)r\left(x\right) < 1$. By combining the above two inequalities, we obtain the condition $\left(i\right)$ of our \textbf{Corollary 3.24.}.
\\
\\
Finally, let us prove that it satisfies the condition $\left(ii\right)$ of \textbf{Corollary 3.24}. By theorem 4.17 in \cite{1}, for any $y \in S(F,G)$, there is a ball $B_{\mathbb{S}}\left(x,r\left(x\right)\right) \in \mathcal{B}$ such that $d_{\mathbb{S}}\left(x,y\right) < r\left(x\right)$ and  $x \in Approx\left(F,G,D^{\frac{1}{2}}r(x)\right)$. Because $360D^{\frac{5}{2}}\kappa^{2}\left(F,x\right)r\left(x\right) < 1$, $\kappa(F,G,x) \geqslant 1$ and $D \geq 1$, we get $r(x) < \frac{1}{360D^{\frac{5}{2}}\kappa(F, G, x)} < \frac{1}{180D^{\frac{3}{2}}\kappa(F, G, x)} = \frac{1}{4}\epsilon(x)$.
\\
\\
Now, by the \textbf{Theorem 2.6}, we get the complexity is bounded by $\dfrac{\left(16\left(2+A\right)\right)^{n}}{A_{n}\log_{e}2}\int_{\mathbb{S}^{n}}\left(\frac{1}{f}\right)^{n}dV$ if we omit the computation of $\kappa\left(F^{L},x\right)$. Notice the Lipchitz constant of the function $\frac{1}{360\left(4D\right)^{s}D^{\frac{5}{2}}\kappa^{2}\left(F,G,x\right)}$ is $\frac{1}{180(4D)^{s}D^{\frac{3}{2}}}$. As a result, $A = \frac{1}{180(4D)^{s}D^{\frac{3}{2}}}$.
\\
\\
In \cite{4} \textbf{2.5} the author can show that the complexity of computing $\kappa\left(F^{L},x\right)$ can be approximated within a factor 2 in $\mathcal{O}\left(N + n^{3}\right) = \left(nD\right)^{\mathcal{O}\left(n\right)}$ operations. In addition, in each calculation of $\kappa(F,G,x)$, we should calculate $\kappa(F^{L},x)$ within times $M = \sum_{i=0}^{n+1-q}\binom{s}{i}\leq (s+1)^{n+1-q}$. As a result, we get that the complexity of the algorithm in the last corollary is less than $\dfrac{\left(16nDs\left(2+A\right)\right)^{\mathcal{O}(n)}}{A_{n}\log_{e}2} \int_{\mathbb{S}^{n}} \left ( \kappa \left (  F,G,x \right ) \right )^{n}dV = \dfrac{\left(ns\left(4D\right)^{-s}D^{-\dfrac{1}{2}}\right)^{\mathcal{O}(n)}}{A_{n}\log_{e}2} \int_{\mathbb{S}^{n}} \left ( \kappa \left (  F,G,x \right ) \right )^{n}dV = \dfrac{\left(nsD^{-s}\right)^{\mathcal{O}(n)}}{A_{n}\log_{e}2} \int_{\mathbb{S}^{n}} \left ( \kappa \left (  F,G,x \right ) \right )^{n}dV$ from \textbf{Theorem 2.6}.
\end{proof}

Once in the possession of a pair $(\chi,\epsilon)$ such that $S$ is a deformation retract of $\mathcal{U}(\chi,\epsilon)$, the computation of the homology groups of $B_{\mathbb{B}}(\chi,\epsilon)$ is a known process. One can compute the nerve $\mathcal{N}$ of the covering ${B(x,\epsilon)\mid x \in \chi}$  (this is the simplicial complex whose elements are the subsets $\mathcal{N}$ of$\chi$ such that $\cap_{x\in N}B(x,\epsilon)$ is not empty) and from it, its homology groups $H_{k}(\mathcal{N})$. Since the intersections of any collection of balls is convex, the Nerve Theorem (Corollary 4G.3 of \cite{4}) ensures that
\[
H_{k}(\mathcal{N}) = H_{k}\left(\mathcal{U}\left(\chi,\epsilon\right)\right) = H_{k}(S).
\]
The last is because that $S$ is a deformation retract of $\mathcal{U}(\chi,\epsilon)$.
\\
The process is described in detail in \cite{5} \S 4 where the proof of the following result can be found. In \cite{6} and \cite{7} there are improved algorithms for computing the nerve of a covering. 
\\
\begin{theorem}
(\cite{1} \textbf{Proposition 5.2} ) Given finite set $\chi \subset \mathbb{R}^{n+1}$ and a positive real number $\epsilon$, one can compute the homology of $\cup_{x \in \chi}B(x,\epsilon)$ with $\chi^{\mathcal{O}(n)}$ operations.
\end{theorem}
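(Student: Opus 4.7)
The plan is to reduce the homology computation to pure combinatorics via the Nerve Theorem and then bound every stage by $|\chi|^{O(n)}$. First I would build the abstract simplicial complex $\mathcal{N}$ whose faces are the subsets $\sigma \subseteq \chi$ for which $\bigcap_{x \in \sigma} B(x,\epsilon) \neq \emptyset$. Because each $B(x,\epsilon)$ is convex, every non-empty finite intersection is convex and hence contractible, so the Nerve Theorem (Corollary 4G.3 of \cite{4}) yields $H_k(\mathcal{N}) \cong H_k\bigl(\bigcup_{x \in \chi} B(x,\epsilon)\bigr)$ for every $k$. The task therefore reduces to producing the relevant part of the face list of $\mathcal{N}$ and then running a standard simplicial homology routine on it.

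To keep both pieces polynomial I would lean on Helly's theorem in $\mathbb{R}^{n+1}$: a family of convex sets has non-empty common intersection if and only if every sub-family of cardinality at most $n+2$ does. This yields two dimensional reductions. First, deciding whether a candidate $\sigma$ belongs to $\mathcal{N}$ never requires a feasibility query on more than $n+2$ balls at once, so each individual test is of bounded combinatorial size. Second, since $\bigcup_{x \in \chi} B(x,\epsilon) \subset \mathbb{R}^{n+1}$ its Betti numbers vanish above degree $n+1$, so it is enough to materialise the $(n+2)$-skeleton of $\mathcal{N}$ — a set of at most $\sum_{j \leq n+3} \binom{|\chi|}{j} = |\chi|^{O(n)}$ simplices — while higher Betti numbers may be output as zero without computation. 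For each candidate $\sigma$ of size at most $n+3$, I would test non-emptiness of $\bigcap_{x \in \sigma} B(x,\epsilon)$ by solving the convex feasibility problem $\|y-x\|^2 \leq \epsilon^2$ for $x \in \sigma$, equivalently by asking whether the smallest enclosing ball of $\sigma$ has radius at most $\epsilon$; for $|\sigma|$ bounded in terms of $n$ this is solvable in time polynomial in $n$ (for instance by Seidel's LP-type algorithm).

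Once the $(n+2)$-skeleton is assembled, I would form the simplicial boundary matrices $\partial_k : C_k(\mathcal{N}) \to C_{k-1}(\mathcal{N})$ for $k \leq n+2$ and read off each $H_k(\mathcal{N};\mathbb{Z})$ from the Smith normal forms of $\partial_k$ and $\partial_{k+1}$ in the usual way. Every such matrix has at most $|\chi|^{O(n)}$ rows and columns, and Smith normal form is computable in time polynomial in the matrix size, so this final stage also costs $|\chi|^{O(n)}$ operations. Summing face-list generation and linear algebra yields the claimed bound.

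The main obstacle is avoiding any step whose cost scales as $2^{|\chi|}$; naive enumeration of subsets of $\chi$ would already do exactly that. The whole argument therefore hinges on using the convexity of Euclidean balls and the ambient dimension $n+1$ through Helly's theorem, which simultaneously bounds the size of every feasibility query and the dimension of the skeleton one needs to enumerate. Without that combinatorial lever the problem is effectively uncontrolled; with it in hand, convex feasibility for at most $n+3$ constraints and Smith normal form on $|\chi|^{O(n)}$-sized integer matrices are both routine polynomial-time subroutines, and their composition gives the desired $|\chi|^{O(n)}$ operation count.
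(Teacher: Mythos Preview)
The paper does not supply its own proof of this statement: it quotes the result as \cite{1} Proposition~5.2 and, in the paragraph preceding it, merely sketches the Nerve-Theorem reduction before pointing to \cite{5}~\S4 (and \cite{6}, \cite{7}) for the actual argument and complexity analysis. Your proposal is precisely a fleshed-out version of that sketch: reduce to the nerve via convexity of ball intersections, use the ambient dimension to cap the skeleton one must enumerate, test each candidate simplex by a bounded-size convex feasibility problem, and finish with Smith normal form. This is consistent with the approach the paper gestures at and with what the cited references do, so there is nothing to correct; just be aware that you are reconstructing a proof the paper deliberately outsources rather than comparing against one it gives.
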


Now we can give a more adaptive homology group calculation algorithm and give a conjectured complexity.
\\
\begin{theorem}\label{Theorem 3.28} 
If the above conjecture is true and for any k, every two balls $B_{1}, B_{2} \in \mathcal{B}_{k}$ satisfies $d(c_{B_{1}},c_{B_{2}}) \geq \dfrac{1}{2^{k}}$, we can get that the complexity of the algorithm is about $(nsD^{-s})^{\mathcal{O}(n^{2})}\left(\dfrac{1}{A_{n}}\int_{\mathbb{S}^{n}} \left ( \kappa \left (  F,G,x \right ) \right )^{n}dV\right)^{\mathcal{O}(n)}$
\end{theorem}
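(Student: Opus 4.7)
The plan is to combine the covering algorithm of Theorem~\ref{Theorem 3.25.} with the conjectured homology-of-union routine of Conjecture~\ref{Conjecture 1.17.}, and to use the separation hypothesis on the balls $\mathcal{B}_k$ precisely to pin down the cardinality of the covering returned by Covering$(F,G)$.

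First I would run Covering$(F,G)$ from Theorem~\ref{Theorem 3.25.} to obtain the pair $(\mathcal{X},\epsilon)$, where $\epsilon(x)=\tfrac{1}{40D^{3/2}\kappa(F,G,x)}$. By Corollary~\ref{Corollary 3.24.} the set $S(F,G)$ is a deformation retract of $\mathcal{U}(\mathcal{X},\epsilon)=\bigcup_{x\in\mathcal{X}}B(x,\epsilon(x))$, so the two spaces have isomorphic homology; in particular, computing $H_\ast(\mathcal{U}(\mathcal{X},\epsilon))$ suffices. By Theorem~\ref{Theorem 3.7.} the function $\kappa(F,G,\cdot)^{-1}$ is $D$-Lipschitz on $\mathbb{S}^n$, so $\epsilon$ is Lipschitz with constant depending only on $D$ and $s$; this is what is needed to invoke Conjecture~\ref{Conjecture 1.17.}.

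Second I would bound $|\mathcal{X}|$. The hypothesis that any two centers of balls in $\mathcal{B}_k$ are at distance at least $2^{-k}$ is exactly the separation property that powers the packing estimate $\#\,\mathrm{divide}(B)\le 16^n$ (Lemma~\ref{Lemma 2.7}) and the subsequent volume comparison in the proof of Theorem~\ref{Theorem 2.5}. Plugging this into Theorem~\ref{Theorem 3.25.} gives
$$|\mathcal{X}|\ \le\ C\ :=\ \frac{(nsD^{-s})^{\mathcal{O}(n)}}{A_n\log_e 2}\int_{\mathbb{S}^n}\kappa(F,G,x)^n\,dV,$$
where the $(nsD^{-s})^{\mathcal{O}(n)}$ prefactor already absorbs the per-node cost of evaluating $\kappa$ via the $(s+1)^{n+1-q}$ subsystems $F^L$ and the $(nD)^{\mathcal{O}(n)}$ per-system condition computation cited from \cite{4}.

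Third, granting Conjecture~\ref{Conjecture 1.17.}, the homology of $\bigcup_{x\in\mathcal{X}}B(x,\epsilon(x))$ is computable in $|\mathcal{X}|^{\mathcal{O}(n)}$ operations. Since producing $\mathcal{X}$ itself costs only $C$, which is dominated by $C^{\mathcal{O}(n)}$, the total complexity is at most
$$C^{\mathcal{O}(n)}\ =\ (nsD^{-s})^{\mathcal{O}(n^2)}\!\left(\frac{1}{A_n\log_e 2}\int_{\mathbb{S}^n}\kappa(F,G,x)^n\,dV\right)^{\!\mathcal{O}(n)},$$
which is the announced bound (the $\log_e 2$ is swallowed into the $\mathcal{O}$-constant).

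The only genuine obstacle I foresee is bookkeeping: one must verify that raising $C$ to an $\mathcal{O}(n)$ exponent really distributes to give $(nsD^{-s})^{\mathcal{O}(n^2)}$ in front of the integral factor (rather than contaminating the integral's exponent), and that the Lipschitz constant of $\epsilon$ is of an order compatible with the hypothesis $A>0$ of Conjecture~\ref{Conjecture 1.17.} — both of which follow from $\kappa(F,G,x)\ge 1$ and Theorem~\ref{Theorem 3.7.}. No new geometric input is required; the novelty over Theorem~\ref{Theorem 3.25.} lies entirely in passing the size bound on $\mathcal{X}$ through the $|\mathcal{X}|^{\mathcal{O}(n)}$ nerve-type step granted by the conjecture.
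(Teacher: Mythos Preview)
Your proposal is correct and follows essentially the same approach as the paper: bound the cost and size of the covering $\mathcal{X}$ via Theorem~\ref{Theorem 3.25.} (which in turn rests on the adaptive grid bound of Theorem~\ref{Theorem 2.5}, where the separation hypothesis on $\mathcal{B}_k$ enters through Lemma~\ref{Lemma 2.7}), then feed $|\mathcal{X}|$ into the conjectured $|\mathcal{X}|^{\mathcal{O}(n)}$ homology routine and combine. Your write-up is in fact more careful than the paper's, making explicit the Lipschitz check needed for Conjecture~\ref{Conjecture 1.17.} and the exponent bookkeeping $C^{\mathcal{O}(n)}=(nsD^{-s})^{\mathcal{O}(n^2)}(\cdots)^{\mathcal{O}(n)}$.
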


\begin{algorithm}  
        \caption{Homology(F,G) }  
        \begin{algorithmic}[1] 
            \Require   A semialgebraic system $(F,G) \in \mathcal{H}_d[q + s]$ with $q \leq n$ 
            \Ensure   The homology groups of the set $\{f_1 = ...f_q = 0\}$ and ${g_1 \succ 0,...,g_q \succ 0}\subset R^n$ 
            \State $\mathcal{B}\gets Covering(F,G)$
            \State $\mathcal{N} \gets the nerve of B  $
            \State \Return  the homology group of $\mathcal{N}$

        \end{algorithmic}  
    \end{algorithm}  

\begin{proof}
By \textbf{theorem 3.25}, the cost of computing $\mathcal{B}$ is bounded by $\dfrac{\left(nsD^{-s}\right)^{\mathcal{O}(x)}}{A_{n}\log_{e}2} \int_{\mathbb{S}^{n}} \left ( \kappa \left (  F,G,x \right ) \right )^{n}dV$. In particular, $\sharp\chi \leq \dfrac{\left(16\left(2+A\right)\right)^{n}}{A_{n}\log_{e}2}\int_{\mathbb{S}^{n}}\left(\frac{1}{f}\right)^{n}dV$ where $f = \dfrac{1}{360B_{s}D^{\dfrac{5}{2}}\kappa^{2}(F,G,x)}$ and $A = \dfrac{1}{180B_{s}D^{\dfrac{3}{2}}}$ from \textbf{Theorem 2.6}.  By \textbf{conjecture 3.27}, the complexity of computing the nerve and homology group is bounded by $\mathcal{B}^{\mathcal{O}(n)}$. As a result, the total complexity is bounded by $(nsD^{-s})^{\mathcal{O}(n^{2})}\left(\dfrac{1}{A_{n}}\int_{\mathbb{S}^{n}} \left ( \kappa \left (  F,G,x \right ) \right )^{n}dV\right)^{\mathcal{O}(n)}$
\end{proof}

\end{section}

%TC:endignore

% Word count
\verbatiminput{\jobname.wordcount.tex}

\end{document}